\pgfplotsset{compat=newest}
\newcounter{subsubparagraph}[subparagraph]
\renewcommand\thesubsubparagraph{\thesubparagraph.\@arabic\c@subsubparagraph}
\newcommand\subsubparagraph{\@startsection{subsubparagraph}{6}{\parindent}%
                                       {3.25ex \@plus1ex \@minus .2ex}%
                                       {-1em}%
                              {\normalfont\normalsize\bfseries}}
\newcommand*\l@subsubparagraph{\@dottedtocline{6}{10em}{5em}}
\newcommand{\FINDATE}{27.02.2015} 
\newcommand{\DNUM}{D2.2}
\newcommand{\DNAME}{White-box methodologies, programming abstractions and libraries}
\newcommand\UB{\mathit{UB}}
\algrenewcommand\alglinenumber[1]{\scriptsize #1:}
\algnewcommand{\LComment}[1]{\Statex  \(\triangleright\) #1 \hfill~}
\algnewcommand\EMPTY{\textbf{EMPTY}}
\newtheorem{theorem}{Theorem}[section]
\newtheorem{corollary}{Corollary}[theorem]
\newtheorem{lemma}[theorem]{Lemma}
\newcommand{\ie}{\textit{i.e.}\xspace}
\newcommand{\etal}{\textit{et al.}\xspace}
\newcommand{\eg}{\textit{e.g.}\xspace}
\newcommand{\aka}{\textit{a.k.a.}\xspace}
\newcommand{\ema}[1]{\ensuremath{#1}\xspace}
\newcommand{\powi}{\ema{P}}
\def\stat{\mathit{stat}}
\def\dyn{\mathit{dyn}}
\def\act{\mathit{active}}
\newcommand{\expo}[2]{\ema{#1^{\ifthenelse{\equal{#2}{}}{}{(#2)}}}}
\newcommand{\doexpo}[3]{\ema{#1^{(#2,#3)}}}
\newcommand{\pow}[1]{\expo{\powi}{#1}}
\newcommand{\pstat}[1]{\doexpo{\powi}{\stat}{#1}}
\newcommand{\pdyn}[1]{\doexpo{\powi}{\dyn}{#1}}
\newcommand{\pact}[1]{\doexpo{\powi}{\act}{#1}}
\newcommand{\powb}[2]{\powi_{#2}^{(#1)}}
\newcommand{\cas}{\textit{Compare-and-Swap}\xspace}
\newcommand{\ghz}[1]{\ema{#1\,\text{GHz}}}
\newcommand{\second}[1]{\ema{#1\,\text{sec}}}
\newcommand{\seconds}[1]{\ema{#1\,\text{secs}}}
\newcommand{\ps}{parallel section\xspace}
\newcommand{\pss}{parallel sections\xspace}
\newcommand{\rl}{retry loop\xspace}
\newcommand{\rls}{retry loops\xspace}
\newcommand{\ds}{data structure\xspace}
\newcommand{\itemx}[1]{}
\newcommand{\facf}{\ema{\lambda}}
\newcommand{\freq}{\ema{f}}
\newcommand{\leaveout}[1]{}
\newcommand{\remove}[1]{}
\newcommand{\comment}[2]{}{}
\newcommand{\op}[1]{\FuncSty{#1}}
\newcommand{\var}[1]{{\textsf{#1}}}
\newcommand{\code}[1]{{\textsf{\tt {#1}}}}
\newcommand{\cw}{\ema{\mathit{cw}}}
\newcommand{\pw}{\ema{\mathit{pw}}}
\newcommand{\thr}{\ema{\mathcal{T}}}
\newcommand{\nth}{\ema{n}}
\def\mrl{\mathit{RL}}
\def\msl{\mathit{SL}}
\def\mps{\mathit{PS}}
\newcommand{\psma}{\ema{p_\textrm{sma}}}
\newcommand{\pmid}{\ema{p_\textrm{mid}}}
\newcommand{\pbig}{\ema{p_\textrm{big}}}
\def\ub{\texttt{+}}
\def\lb{\texttt{-}}
\def\bb{\mathit{b}}
\def\be{\mathit{e}}
\def\bd{\mathit{d}}
\def\bx{\mathit{o}}
\newcommand{\gendu}[3]{\ema{#1_{#2}^{\ifthenelse{\equal{#3}{}}{}{(#3)}}}}
\newcommand{\thrxy}{\gendu{\thr}{\bx}{\bb}}
\newcommand{\thrx}{\gendu{\thr}{\bx}{}}
\newcommand{\pwx}{\gendu{\pw}{\bx}{}}
\newcommand{\cwxy}{\gendu{\cw}{\bx}{\bb}}
\newcommand{\et}[1]{\ema{\operatorname{t}\left(#1\right)}}
\newcommand{\slxy}{\ema{\gendu{\msl}{\bx}{\bb}}}
\newcommand{\rlx}{\ema{\gendu{\mrl}{\bx}{}}}
\newcommand{\slx}{\ema{\gendu{\msl}{\bx}{}}}
\newcommand{\psx}{\ema{\gendu{\mps}{\bx}{}}}
\newcommand{\wh}[1]{\ema{\widehat{#1}}}
\newcommand{\thre}{\gendu{\thr}{\be}{}}
\newcommand{\thred}{\gendu{\thr}{\be}{\lb}}
\newcommand{\thrend}{\gendu{\thr}{\be}{\ub}}
\newcommand{\pwe}{\gendu{\pw}{\be}{}}
\newcommand{\cwe}{\gendu{\cw}{\be}{}}
\newcommand{\cwed}{\gendu{\cw}{\be}{\lb}}
\newcommand{\cwend}{\gendu{\cw}{\be}{\ub}}
\newcommand{\thrd}{\gendu{\thr}{\bd}{}}
\newcommand{\thrde}{\gendu{\thr}{\bd}{\ub}}
\newcommand{\thrdne}{\gendu{\thr}{\bd}{\lb}}
\newcommand{\pwd}{\gendu{\pw}{\bd}{}}
\newcommand{\cwd}{\gendu{\cw}{\bd}{}}
\newcommand{\rld}{\ema{\gendu{\mrl}{\bd}{}}}
\newcommand{\psd}{\ema{\gendu{\mps}{\bd}{}}}
\newcommand{\cwde}{\gendu{\cw}{\bd}{\ub}}
\newcommand{\cwdne}{\gendu{\cw}{\bd}{\lb}}
\def\drl{\mathit{RL}}
\def\dps{\mathit{PS}}
\newcommand{\powe}[1]{\ema{\powb{#1}{\be}}}
\newcommand{\powerl}[1]{\ema{\powb{#1}{\be,\drl}}}
\newcommand{\poweps}[1]{\ema{\powb{#1}{\be,\dps}}}
\newcommand{\powd}[1]{\ema{\powb{#1}{\bd}}}
\newcommand{\powdrl}[1]{\ema{\powb{#1}{\bd,\drl}}}
\newcommand{\powdps}[1]{\ema{\powb{#1}{\bd,\dps}}}
\newcommand{\powx}[1]{\ema{\powb{#1}{\bx}}}
\newcommand{\powxrl}[1]{\ema{\powb{#1}{\bx,\drl}}}
\newcommand{\powxps}[1]{\ema{\powb{#1}{\bx,\dps}}}
\newcommand{\rat}{\ema{r}}
\newcommand{\ratx}{\ema{\rat_{\bx}}}
\newcommand{\rate}{\ema{\rat_{\be}}}
\newcommand{\ratd}{\ema{\rat_{\bd}}}
\newcommand{\codae}[1]{\gendu{\rho}{\be}{#1}}
\newcommand{\codad}[1]{\gendu{\rho}{\bd}{#1}}
\newcommand{\nulle}{\ema{\textsc{Null}}}
\newcommand{\incgrapr}[1]{
\includegraphics[width=\textwidth]{#1-r1}
}
\newcounter{algi}
\newcommand{\walg}[1]{%
\setcounter{algi}{#1}\addtocounter{algi}{1}
\ema{\text{{\textbf{\alg(\thealgi)}}}}
}
\newcommand{\syscha}{{\textit System~A}\xspace}
\newcommand{\FuncSty}[1]{\texttt{#1}\xspace}
\begin{document}
\readarray{alg}{MS&Val&TZ&Gid&caca&Hof&Moi}

\thispagestyle{empty}

\vspace{-3cm}
\begin{center}
\textbf{SEVENTH FRAMEWORK PROGRAMME}\\
\textbf{THEME ICT-2013.3.4}\\
Advanced Computing, Embedded and Control Systems
\end{center}
\bigskip

\begin{center}
\includegraphics[width=\textwidth]{./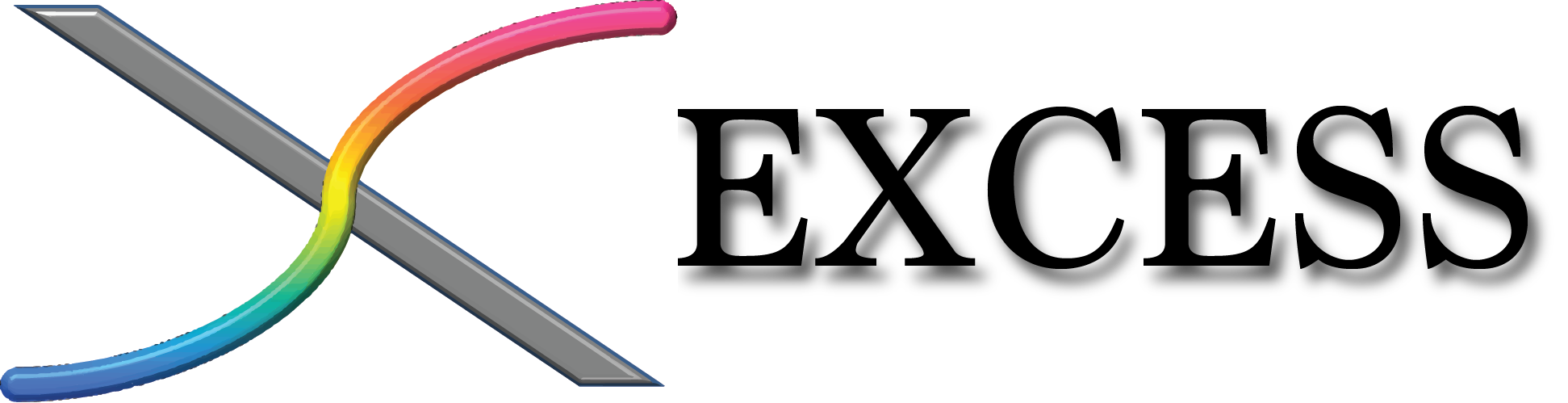}
\end{center}
\bigskip

\begin{center}
Execution Models for Energy-Efficient
Computing Systems\\
Project ID: 611183
\end{center}
\bigskip

\begin{center}
\Large
\textbf{\DNUM} \\
\textbf{\DNAME}
\end{center}
\bigskip

\begin{center}
\large
Phuong Ha, Vi Tran, Ibrahim Umar,
Aras Atalar, Anders Gidenstam, Paul Renaud-Goud, Philippas Tsigas
\end{center}

\vfill

\begin{center}
\includegraphics[width=3cm]{./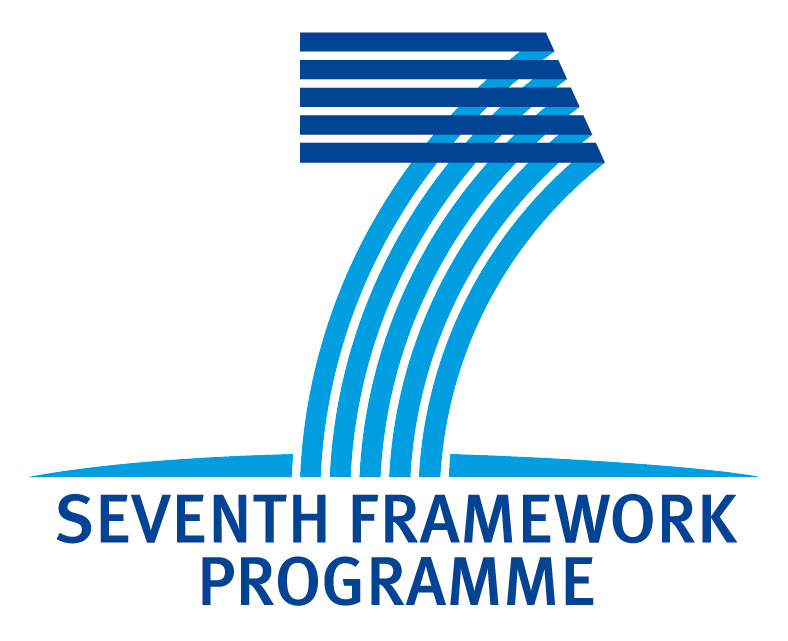}\\
Date of preparation (latest version): \FINDATE \\
Copyright\copyright\ 2013 -- 2016 The EXCESS Consortium \\
\hrulefill \\
The opinions of the authors expressed in this document do not
necessarily reflect the official opinion of EXCESS partners or of
the European Commission.
\end{center}

\newpage

\section*{DOCUMENT INFORMATION}

\vspace{1cm}

\begin{center}
\begin{tabular}{ll}
\textbf{Deliverable Number} & \DNUM \\
\textbf{Deliverable Name} & \DNAME \\
\textbf{Authors}

& Phuong Ha \\
& Vi Tran \\
& Ibrahim Umar \\
& Aras Atalar\\
& Anders Gidenstam \\
& Paul Renaud-Goud\\
& Philippas Tsigas \\

\textbf{Responsible Author} & Phuong Ha\\
& e-mail: \url{phuong.hoai.ha@uit.no} \\
& Phone: +47 776 44032 \\
\textbf{Keywords} & High Performance Computing; \\
& Energy Efficiency \\
\textbf{WP/Task} & WP2/Task 2.2 \\
\textbf{Nature} & R \\
\textbf{Dissemination Level} & PU \\
\textbf{Planned Date} &  28.02.2015\\
\textbf{Final Version Date} & 27.02.2015 \\
\textbf{Reviewed by} & Christoph Kessler (LIU), Michael Gienger (HLRS) \\
\textbf{MGT Board Approval} & YES\\
\end{tabular}
\end{center}

\newpage

\section*{DOCUMENT HISTORY}

\vspace{1cm}

\begin{center}
\begin{tabular}{llll}
\textbf{Partner} &
\textbf{Date} &
\textbf{Comment} &
\textbf{Version} \\
UiT (P.\ Ha, V.\ Tran) & 02.12.2014 & Deliverable skeleton & 0.1 \\
UiT (P.\ Ha, V.\ Tran, I.\ Umar) & 23.01.2015 & First version sent to WP2 partners & 0.2 \\
CTH (P. Renaud-Goud) & 28.01.2015 & Chalmers' part included & 0.3 \\
UiT (P.\ Ha, V.\ Tran, I.\ Umar) & 06.02.2015 & Consolidated version sent to internal  & 0.4 \\
														& 					& reviewers 													&			\\
UiT (P.\ Ha, V.\ Tran) & 27.02.2015 & Final version & 1.0 \\
\end{tabular}
\end{center}

\newpage

\begin{abstract}
This deliverable reports the results of white-box methodologies and early results of the first prototype of libraries and programming abstractions as available by project month 18 by Work Package 2 (WP2). It reports i) the latest results of Task 2.2 on white-box methodologies, programming abstractions and libraries for developing energy-efficient data structures and algorithms and ii) the improved results of Task 2.1 on investigating and modeling the trade-off between energy and performance of concurrent data structures and algorithms. The work has been conducted on two main EXCESS platforms: Intel platforms with recent Intel multicore CPUs and Movidius Myriad1 platform.
\begin{itemize} 
\item Regarding white-box methodologies, we have devised new relaxed cache-oblivious models and proposed a new power model for Myriad1 platform and an energy model for lock-free queues on CPU platforms. 
For Myriad1 platform, the improved model now considers both computation and data movement cost as well as architecture and application properties. The model has been evaluated with a set of micro-benchmarks and application benchmarks.
For Intel platforms, we have generalized the model for concurrent queues on CPU platforms to offer more flexibility according to the workers calling the \ds (parallel section sizes of enqueuers and dequeuers are decoupled). 

\item Regarding programming abstractions and libraries, we have continued investigating the trade-offs between energy consumption and performance of data structures such as concurrent queues and concurrent search trees based on the early results of Task 2.1. Based on the investigation, we have implemented a set of libraries and programming abstractions including concurrent queues and concurrent search trees that are energy-aware. The preliminary results show that our concurrent trees are faster and more energy efficient than the state-of-the-art on commodity HPC and embedded platforms. 
\end{itemize}

\end{abstract}

\newpage

\section*{Executive Summary}
Computing technology is currently at the beginning of the disruptive transition from petascale to exascale computing (2010 –- 2020), posing a great challenge on energy efficiency. High performance computing (HPC) in 2020 will be characterized by data-centric workloads that, unlike those in traditional sequential/parallel computing, are comprised of big, divergent, fast and complex data. In order to address energy challenges in HPC, the new data must be organized and accessed in an energy-efficient manner through novel fundamental data structures and algorithms that strive for the energy limit. Moreover, the general application- and technology-trend indicates finer-grained execution (i.e. smaller chunks of work per compute core) and more frequent communication and synchronization between cores and uncore components (e.g. memory) in HPC applications. Therefore, not only concurrent data structures and memory access algorithms but also synchronization is essential to optimize the energy consumption of HPC applications. However, previous concurrent data structures, memory access algorithms and synchronization algorithms were designed without energy consumption in mind. The design of energy-efficient fundamental concurrent data structures and algorithms for inter-process communication in HPC  remains a largely unexplored area and requires significant efforts to be successful.

Work package 2 (WP2) aims to develop interfaces and libraries for energy-efficient inter-process communication and data sharing on the new EXCESS platforms integrating Movidius embedded processors. In order to set the stage for these tasks, WP2 needs to investigate and model the trade-offs between energy consumption and performance of data structures and algorithms for inter-process communication. 
WP2 also concerns supporting energy-efficient massive parallelism through scalable concurrent data structures and algorithms that strive for the energy limit, and minimizing inter-component communication through locality- and heterogeneity-aware data structures and algorithms.

The latest results of Task 2.2 (PM7 - PM36) on white-box methodologies, programming abstractions and libraries available by project month 18 as well as the improved results of Task 2.1 on investigating and modeling the trade-off between energy and performance \cite{HaTUTGRWA14}, are summarized in this report. 

\subsubsection*{White-box methodologies} 
The white-box methodologies presented in this report include a new cache-oblivious methodology and new energy models that help develop energy-efficient data structures and algorithms:

\begin{itemize}
\item We have devised a new {\em relaxed} cache oblivious methodology that is appropriate for developing energy-efficient concurrent data structures and algorithms. 
\item We have proposed a new power model for Movidius embedded platform (Myriad1) which is able to predict the power consumed by a program running on a specific number of cores. We have validated the model with a set of micro-benchmarks and real applications such as  sparse/dense linear algebra kernels and the graph application kernels (e.g., the Graph 500 kernels \footnote{http://www.graph500.org/}). 

\item We have proposed a way to model the energy behavior of
lock-free queue implementations and parallel applications that use
them on CPU platforms. Focusing on steady state behavior, we have decomposed
energy behavior into throughput and power dissipation which can be
modeled separately and later recombined into several useful metrics,
such as energy per operation.
\end{itemize}

\subsubsection*{Programming abstractions and libraries}
We describe a set of implemented concurrent search trees and queues as well as their energy and performance analyses:
\begin{itemize}
\item We have developed a concurrent search tree library that contains several state-of-the-art concurrent search trees such as the non-blocking binary search tree, the Software Transactional Memory (STM) based red-black tree, AVL tree, and speculation-friendly tree, the fast concurrent B-tree, and the static cache-oblivious binary search tree. A family of novel locality-aware and energy efficient concurrent search trees, namely the DeltaTree, the Balanced DeltaTree, and Heterogeneous DeltaTree, are also enclosed in the concurrent search tree library. The DeltaTrees are platform-independent and up to 140\% faster and 220\% more energy efficient than the state-of-the-art on commodity HPC and embedded platforms. 

\item On lock-free queues on CPU platforms, we have automatized the process of estimating the performance and the power dissipation of any queue implementation, and integrated it in the EXCESS software.
\end{itemize} 

This report is organized as follows. Section ~\ref{sec:introduction} provides the background and motivations of the work presented in this deliverable. Section ~\ref{sec:descr} describes two EXCESS platforms and their setting to measure power consumption. The white-box methodologies such as relaxed cache-oblivious methodology and power models are presented in Section ~\ref{sec:white-box}.
Section ~\ref{sec:libraries} describes the first prototype of EXCESS libraries and programming abstractions including numerous concurrent search tree and queue implementations as well as their performance and energy analyses. Section ~\ref{sec:Conclusion} concludes the report with future works. 





\newpage

\tableofcontents

\newpage


\section{Introduction} \label{sec:introduction}
\subsection{Purpose}
In order to address energy challenges in HPC and embedded computing, data must be organized and accessed in an energy-efficient manner through novel fundamental data structures and algorithms that strive for the energy limit. Due to more frequent communication and synchronization between cores and memory components in HPC and embedded computing, not only efficient design of concurrent data structures and memory access algorithms but also synchronization is essential to optimize the energy consumption. However, previous concurrent data structures, memory access algorithms and synchronization algorithms were designed without considering energy consumption. Although there are existing studies on the energy utilization of concurrent data structures demonstrating non-optimal results on energy consumption, the design of energy-efficient fundamental concurrent data structures and algorithms for inter-process communication in HPC and embedded computing is not yet widely explored and becomes a challenging and interesting research direction.

EXCESS aims to investigate the trade-offs between energy consumption and performance of concurrent data structures and algorithms as well as inter-process communication in HPC  and embedded computing. By analyzing the non-intuitive results, EXCESS will devise a comprehensive model for energy consumption of concurrent data structures and algorithms for inter-process communication, especially in the presence of component composition. The new energy-efficient technology will be delivered through novel execution models for the energy-efficient computing paradigm, which consist of complete energy-aware software stacks (including energy-aware component models, programming models, libraries/algorithms and runtimes) and configurable energy-aware simulation systems for future energy-efficient architectures.

The goal of Work package 2 (WP2) is to develop interfaces and libraries for inter-process communication and data sharing on EXCESS platforms integrating Movidius embedded processors, along with investigating and modeling the trade-offs between energy consumption and performance of data structures and algorithms for inter-process communication. WP2 also concerns supporting energy-efficient massive parallelism through scalable concurrent data structures and algorithms that strive for the energy limit, and minimizing inter-component communication through locality- and heterogeneity-aware data structures and algorithms.  

In addition to investigating the trade-offs and devising comprehensive models for energy consumption of concurrent data structures and algorithms (Task 2.1), WP2 also identifies essential concurrent data structures and algorithms for inter-process communication in HPC with the focus on how to customize them (Task 2.2). We exploit common data-flow patterns to create generalized communication abstractions with which application designers can easily create
and exploit the customization for the data-flow patterns. This task constitutes the interfaces and libraries
for inter-process communication and data sharing on EXCESS platforms. The results also constitute
a white-box methodology for tuning energy efficiency and performance of concurrent data structures and
algorithms. 

This report summarizes i) the early results of Task 2.2 on white-box methodologies,  programming abstractions and libraries and ii) the improved results of Task 2.1 on investigating and modeling the trade-off between energy and performance of concurrent data structures and algorithms. The improved results of Task 2.1 constitute the theoretical basis for the whole work package.





\subsection{White-box Methodologies}
White-box methodology is a general study or a theoretical analysis of the principles and methods applied into a field of study or research to outline how the study or research should be taken. The term "white-box" means that the principles and methods in the methodology have prior knowledge of the inner workings and structures of the objects involved in the study. In the scope of EXCESS project, the energy efficiency, architecture and inner workings of a system must be well understood, and likewise for algorithms and data-sets. The "white-box" methodologies studied in this report include cache-oblivious methodology, a power model for Myriad1 platform and an energy model for lock-free concurrent queues.

\subsubsection{Cache-oblivious Methodology}  \label{sec:wb-method}
Energy efficiency is one of the most important factors in 
designing high performance systems.
As a result, data must be organized and accessed  in an energy-efficient manner through novel fundamental data structures and algorithms that strive for the energy limit.
Unlike conventional locality-aware algorithms that only concern
about whether the data is on-chip (e.g., cache) or not (e.g., DRAM),
new energy-efficient data structures and algorithms must consider data locality
in finer-granularity: \textit{where on chip the data is}. Dally \cite{Dally11} 
predicted that for chips using the 10nm technology, the energy required between accessing data in nearby
on-chip memory and accessing data across the chip will differ as much as 75x 
(2pJ versus 150pJ), whereas the energy required 
between accessing the on-chip data
and accessing the off-chip data will only differ by 2x (150pJ versus 300pJ). Therefore, 
in order to construct energy efficient software systems, data
structures and algorithms must support not only high parallelism but also fine-grained data locality~\cite{Dally11}.


In order to devise locality-aware algorithms, 
we need theoretical execution models that promote data locality. 
One example of such models is the the cache-oblivious (CO) models 
\cite{Frigo:1999:CA:795665.796479}, 
which enable the analysis of data transfer between two levels of the memory hierarchy.
CO models are using the same analysis as 
the widely known I/O models \cite{AggarwalV88} 
except in CO models an optimal replacement is assumed.
Lower data transfer complexity implies better data locality 
and higher energy efficiency as energy consumption caused by data 
transfer dominates the total energy consumption \cite{Dally11}.
These models require the knowledge of the algorithm and some parameters
of the architecture to be known beforehand, hence they are
white-box methods.

The cache-oblivious (CO) models (cf. Section \ref{cache-oblivious}) support not only fine-grained data locality but also portability. A CO algorithm that is optimized for 2-level memory, is asymptotically optimized for unknown multilevel memory (e.g., register, L1C, L2C, ..., LLC, memory), enabling fine-grained data locality (e.g., minimizing data movement between L1C and L2C). As cache sizes and block sizes in the CO models are unknown, CO algorithms are expected to be portable across different systems. For example, the memory transfer cost of an algorithm (e.g., how many data blocks need to be transferred between two level of memory), which is analyzed using the CO model, will be applicable on both HPC machines and embedded platforms (e.g., Myriad1/2 platforms), irrespective of the variations in the hardware parameters such as memory hierarchy, specifications and sizes. The performance portability is useful for analyzing the data movement and energy consumption of an algorithm in a platform-independent manner.

The memory transfer cost of an algorithm obtained using the CO model can be regarded as a first piece of information that can enable software designers to rapidly analyze the performance and energy consumption of their algorithms. After all, memory transfer is one of the parameters that dominate the total energy consumption. As for the next step, the transfer cost can be fed directly into the energy model of a specific platform to get a good approximation on the energy consumption of the algorithm on the platform. 

Algorithms and data structures analyzed using the 
cache-oblivious models \cite{Frigo:1999:CA:795665.796479} are 
found to be cache-efficient and disk-efficient \cite{Brodal:2004aa, Demaine:2002aa}, making them suitable
for improving energy efficiency in modern high performance systems. Nowadays, multilevel memory
hierarchies in commodity systems are becoming more prominent as modern CPUs tend to have at least 3 level of caches and disks start to incorporate hybrid-SSD cache memories. With minimal effort, cache-oblivious algorithms
are expected to be always locality-optimized irrespective of 
variations in memory hierarchies, enabling less data transfers
between memory levels that directly translate into runtime energy savings.

Since their inception, cache-oblivious models have been extensively used 
for designing locality-aware fundamental algorithms and data 
structures \cite{Brodal:2004aa, Demaine:2002aa, Fagerberg:2008aa}. Among those algorithms
are scanning algorithms (e.g., traversals, aggregates, and array reversals), divide 
and conquer algorithms (e.g., median and selection, and matrix multiplication), and sorting
algorithms (e.g., mergesort and funnel-sort \cite{Frigo:1999:CA:795665.796479}). Several static data structures (e.g., static search trees, and funnels) 
and dynamic data structures (e.g., ordered files, b-trees, priority queues, and linked-list) have
been also analyzed using the cache-oblivious models. Performance of the said cache-oblivious 
algorithms and data structures have been reported similar to or sometimes better than the performance of their traditional  cache-aware counterparts.

\subsubsection{Power and Energy Models}

The energy consumed by worldwide computing systems increases 7\% annually and becomes a major concern in information technology society. In order to tackle this issue, the research community and industry have proposed several research approaches  to reduce the energy consumption of IT systems \cite{Orgerie2014}.

One of the key research directions to improve energy-efficiency is to understand how much energy a computing system consumes and characterize the energy consumed by an individual component. By knowing the energy consumption of an algorithm on a specific computing architecture, researchers and practitioners can design and implement new approaches to reduce the energy consumed by a certain algorithm on a specific platform.

The energy and power consumption of computing systems can be either measured by integrated sensors and external multimeters or estimated by  models. Energy and power measurement equipment and sensors are not always available and can be costly to deploy and set up. Therefore, energy and power models are an alternative and convenient method to estimate the energy consumption of a computing component or a whole computing system. The models, however, need to be simple to use and should not interfere with the energy estimated results \cite{Orgerie2014}.

We have conducted a study on a power model of Myriad1 platform (cf. Section~\ref{sec:myriad1-model})and an energy model for lock-free queues on CPU platform (cf. Section~\ref{sec:cpu-model-inst}).



\subsection{Libraries and Programming Abstractions}

Concurrent data structures (e.g., search trees and queues) and algorithms used as the building blocks of energy-efficient software systems must support high parallelism and fine-grained data locality. In this work we focus on two libraries of the most widely used concurrent data structures, namely the search trees and queues. 

Concurrent search trees are fundamental data structures used widely in high performance file systems (e.g., TokuFS and XFS) and database systems (e.g., InnoDB, MyISAM, PostgreSQL and TokuDB). Concurrent search trees are usually used as the back end of dictionaries supporting search, insertion and deletion of records.


Concurrent FIFO queues are fundamental data structures
that are key components in applications, algorithms, run-time and
operating systems. The producer/consumer pattern, \eg, is a common
approach to parallelizing applications where threads act as either
producers or consumers and synchronize and stream data items between
them using a shared collection.
A concurrent queue, \aka shared ``first-in, first-out'' or FIFO
buffer, is a shared collection of elements which supports at least the
basic operations \op{Enqueue} (adds an element) and \op{Dequeue} (removes
the oldest element). \op{Dequeue} returns the element removed or, if the
queue is empty, \nulle.
A large number of lock-free (and wait-free) queue implementations have
appeared in the literature,
\eg,~
\cite{Val94,lf-queue-michael,TsiZ01b,MoirNSS:2005:elim-queue,%
DBLP:conf/opodis/HoffmanSS07,Gidenstam10:OPODIS} being some of the
most influential or most efficient results.
Each implementation of a lock-free queue has obviously its strong and weak
points so the impact on performance and energy when choosing one
particular implementation for any given situation may not be obvious.
As the number of known implementations of lock-free concurrent
queues is growing, it is of great interest to describe
a framework within which the different implementations can be ranked,
according to the parameters that characterize the situation.

In this deliverable, we report our results on concurrent data structures such as concurrent search trees and lock-free queues (cf. Section~\ref{sec:libraries}).

\subsection{Contributions}
The main achievements in this report are summarized in two main parts as below.

\begin{description}
\item [White-box methodologies] include new cache-oblivious methodology and energy models.

\begin{itemize}
\item We have devised a new {\em relaxed} cache oblivious model that are appropriate for developing energy-efficient concurrent data structures and algorithms. 
\item We have proposed an power model which is able to predict the power consumed by a program running on a specific number of cores. Given a certain platform and the computation intensity, the model can predict the power consumed by an algorithm, answering the question how many cores are required to run a program to achieve the optimized energy consumption. The model considers both platform and algorithm properties, giving more insights into how to design the algorithm to achieve better energy efficiency. The model has been validated by a set of micro-benchmarks and application kernels such as sparse/dense linear algebra kernels and graph kernels on Movidius embedded platform (Myriad1). 
\item We have continued the work done in D2.1~\cite{EXCESS:D2.1} on the modeling of
queue implementations. We have moved from a grey-box
model, where the performance and the power consumption of enqueue and
dequeue operations were hidden, to a white-box model, where the impact
of those operations are studied separately, and combined at the
end. Additionally, we have generalized the model to offer more
flexibility according to the workers calling the \ds (parallel section
sizes of enqueuers and dequeuers are decoupled).
\end{itemize}

\item [Programming abstractions and libraries] provide a set of implemented concurrent search trees and their energy and performance analyses,
as well as a set of implemented lock-free queues, together with a comparison between the predicted and measured throughput and power. 
\begin{itemize}
\item We have developed a concurrent search tree library that contains several state-of-the-art concurrent search trees such as the non-blocking binary search tree, Software Transactional Memory (STM) based red-black tree, AVL tree, and speculation-friendly tree, fast concurrent B-tree, and static cache-oblivious binary search tree. A family of novel locality-aware and energy efficient concurrent search trees, namely DeltaTree, Balanced DeltaTree, and Heterogeneous DeltaTree are also enclosed in the concurrent search tree library. All the components in this library support the stand-alone benchmark program mode for the purpose of micro-benchmarking and library mode that is pluggable into any C/C++ based programs.
\item The DeltaTrees are platform-independent and up to 140\% faster and 220\% more energy efficient than the state-of-the-art on commodity HPC and embedded platforms. In single thread evaluation, Heterogeneous DeltaTree is 38\% faster than std::Set of GCC standard library in the theoretical worst-case scenario of inserting a sorted sequence of keys; and is 50\% faster than std::Set for inserting a random sequence of keys — the theoretical average-case scenario.
\item On lock-free queues on CPU platforms, we have automatized the process of estimating the performance and the power dissipation of any queue implementation, and integrated it in the EXCESS software.
\end{itemize}
\end{description}

\newpage
\section{EXCESS Platforms and Energy Measurement Settings}
\label{sec:descr}
In this section, we introduce briefly two EXCESS platforms that we work with.  The system descriptions, which have been mentioned in Deliverable D2.1, are presented here to make this deliverable self-contained. As compared to Deliverable D2.1, this section is added with more updated and detailed information on measurement set-up for Myriad1 platform.

\subsection {System~A: CPU-based Platform}
\label{sec:chalmers-system}

\subsubsection{System Description}
\label{sec:sysDscrA}

\begin{itemize}
  \item CPU: Intel(R) Xeon(R) CPU E5-2687W v2
    \begin{itemize}
      \item 2 sockets, 8 cores each
      \item Max frequency: 3.4GHz, Min frequency: 1.2GHz, {frequency speedstep by DVFS: 0.1-0.2GHz.} Turbo mode: 4.0GHz.
      \item Hyperthreading (disabled)
      \item L3 cache: 25M, internal write-back unified, L2 cache: 256K, internal write-back unified. L1 cache (data): 32K internal write-back
    \end{itemize}
  \item DRAM: 16GB in 4 4GB DDR3 REG ECC PC3-12800 modules run at
    1600MTransfers/s. Each socket has 4 DDR3 channels, each supporting 2
    modules. In this case 1 channel per socket is used.
  \item Motherboard: {Intel} Workstation W2600CR, BIOS version: 2.000.1201 08/22/2013
  \item Hard drive: Seagate ST10000DM003-9YN162 1TB SATA
\end{itemize}

\subsubsection{Measurement Methodology for Energy Consumption}
\label{sec:SystemA:overview}
\begin{figure}
\begin{center}
\includegraphics[width=0.5\textwidth]{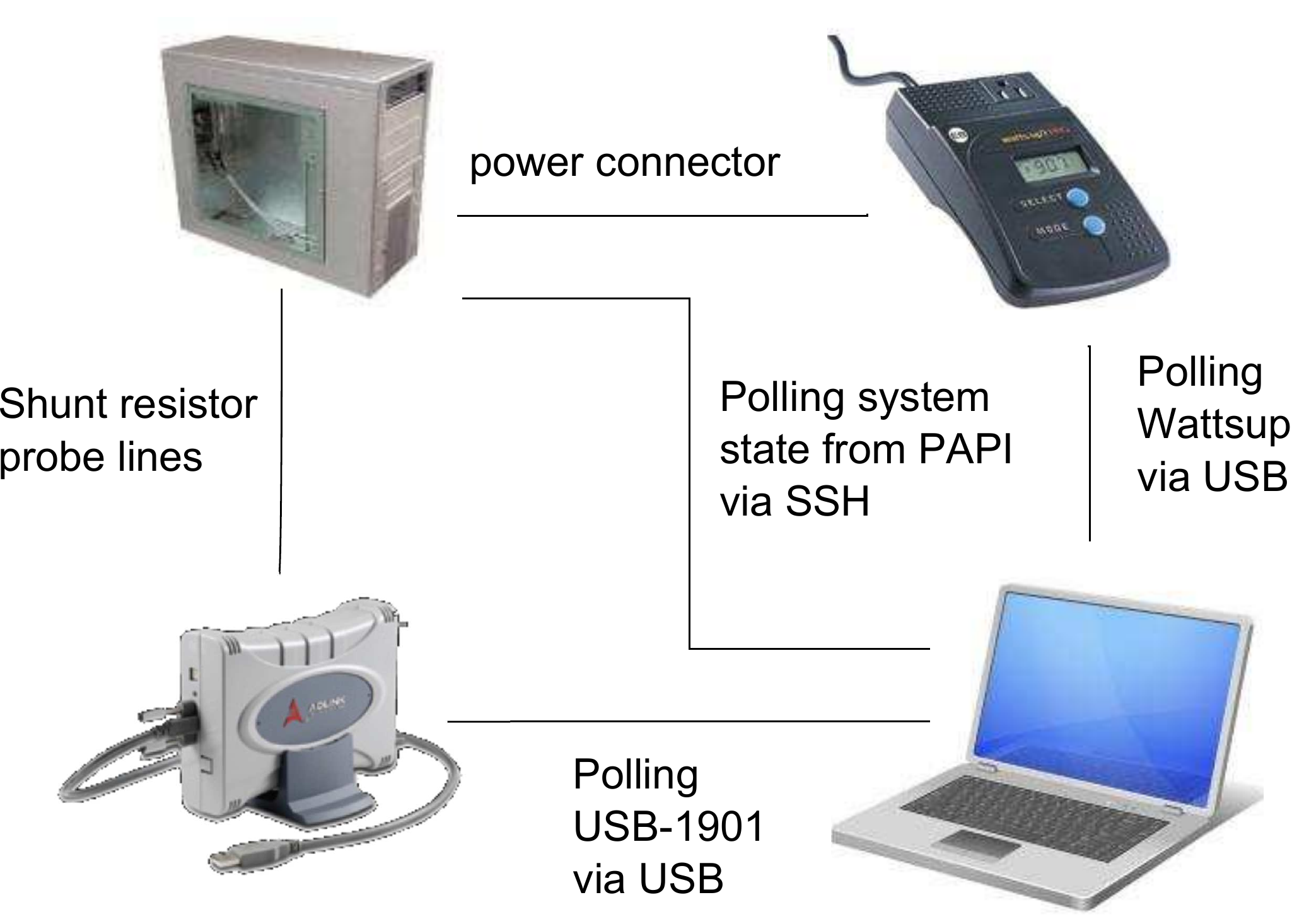}
\end{center}
\caption{Deployment of energy measurement devices for System~A.}
\label{fig:devicesSystemA}
\end{figure}

The energy measurement equipment for System~A at CTH, described in
Section~\ref{sec:sysDscrA}, is shown in Figure~\ref{fig:devicesSystemA}
and outlined below. It has previously been
described in detail in EXCESS D1.1~\cite{D1.1} and
D5.1~\cite{EXCESS:D5.1}.

The system is equipped with external hardware sensors for two levels
of energy monitoring as well as built in energy sensors:
\begin{itemize}
  \item At the system level using an external Watts Up
    .Net~\cite{EED:2003:WattsUp} power meter, which is connected
    between the wall socket and the system.
  \item At the component level using shunt resistors inserted between the
    power supply unit and the various components, such as CPU, DRAM and
    motherboard. The signals from the shunt resistors are captured with an
    Adlink USB-1901~\cite{Adlink:2011:USB1900} data acquisition unit (DAQ)
    using a custom utility.
  \item Intel's RAPL energy counters are also available for the CPU
    and DRAM components. A custom utility based on the PAPI
    library~\cite{BrDoGaHoMu:2000:PAPI,Weaver:2012:MEP:2410139.2410475}
    is used to record these counters and other system state parameters
    of interest.
\end{itemize}

For the work presented in this report the component level hardware
sensors and the RAPL energy counters have mainly been used.

\subsection{System~B: Movidius Myriad1 Embedded Platform}
\subsubsection{Movidius Myriad1 Architecture}
The Myriad1 platform developed by Movidius contains eight separate SHAVE (Streaming Hybrid Architecture Vector Engine) processors and one RISC core namely LEON. Each SHAVE one resides on one solitary power island.

The SHAVE processor contains a set of register files and a set of arithmetic units as Figure ~\ref{fig:ShaveInstructionUnits}. In this work, we consider the following registers and functional units as described below.
\begin{itemize}
\item Integer Register File (IRF) – Register file for storing integers from either the IAU or the SAU. 
\item Scalar Register File (SRF) – Register file for storing integers from either the IAU or the SAU. 
\item Vector Register File (VRF) – Register file for storing integers from either the VAU.  
\item Integer Arithmetic Unit (IAU) – Performs all arithmetic instructions that operate on integer numbers, accesses the IRF.
\item Scalar Arithmetic Unit (SRF) – Performs all Scalar integer/floating point arithmetic.
\item Vector Arithmetic Unit (VAU) – Performs all Vector integer/floating point arithmetic.
\item Load Store Unit (LSU) – There are two LSUs (LSU0 \& LSU1) and they perform any memory access and IO instructions.
\item Control Move Unit (CMU) – This unit interacts with all register files, and allows for comparing and moving between the register files.
\end{itemize}

\begin{figure}[!t] \centering
\resizebox{0.9\columnwidth}{!}{ \includegraphics{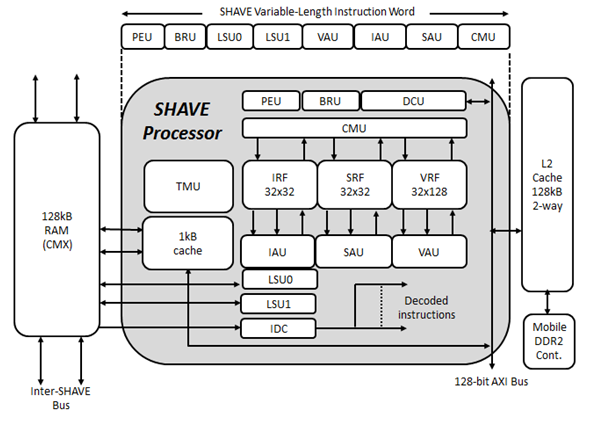}}
\caption{SHAVE Instruction Units}
\label{fig:ShaveInstructionUnits}
\end{figure}

The memory architecture of Myriad1 obtained from deliverable D4.1 is shown in Figure ~\ref{fig:MemoryArchitecture}. Eight SHAVE cores can access Double Data Rate Random Access Memory (DDR RAM) via L2 cache or bypass L2 cache. 


Except from DDR RAM, Movidius introduces a new memory component on-chip RAM containing one megabyte of internal memory with high bandwidth local storage of data and instruction code for the SHAVE processors. This memory component is named CMX. The CMX is constructed to allow eight SHAVEs parallel access to data and program code memory without stalling. Each SHAVE can access data on its own slice of CMX. It can also access other CMX slices of other SHAVE cores with slower time as the trade-off. 
In the later sections, the energy model is validated with both memory components CMX and DDR.

\begin{figure}[!t] \centering
\resizebox{0.9\columnwidth}{!}{ \includegraphics{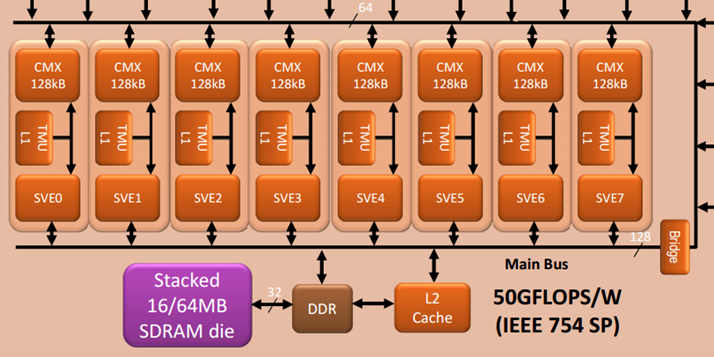}}
\caption{Myriad1 Memory Hierarchy}
\label{fig:MemoryArchitecture}
\end{figure}

\subsubsection{Myriad1 Measurement Set-up}
The platform supports the measurement of the power consumed only by the Myriad1 chip. We use a bench setup consisting of Myriad1 MV153 board, a DC step down converter down-regulating the 5V wall PSU to the 1.2V core voltage  and one HAMEG multimeter measuring all the voltage, current and consumed power values. 

\begin{figure}[!t] \centering
\resizebox{0.9\columnwidth}{!}{ \includegraphics{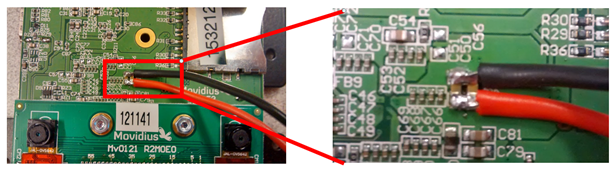}}
\caption{Myriad1 Power Supply Modification}
\label{fig:PowerSupplyModification}
\end{figure}

The modifications were made to the MV153 board to bypass the on-board voltage regulator which down-regulates the 5V wall PSU to the 1.2V core voltage required by Myriad1. That allows an external bench power-supply to be used in its place as shown in Figure \ref{fig:PowerSupplyModification}.  The MV153 board is modified in order to measure the voltage, current and the consumed power of only Myriad1 chip instead of the whole board. The HAMEG multimeter provides the measured data at a rate of 50 times per second which is able to capture the measurements for benchmarks with execution time longer than 20 milliseconds.

\newpage
\section{White-box Methodologies} \label{sec:white-box}
In this section, we present our studies on white-box methodologies including cache-oblivious methodology, a power model for Myriad1 platform and an energy model for lock-free concurrent queues.
\subsection{Cache-oblivious Methodology} \label{sec:cache-oblivious} \label{sec:COM-pre}
This section presents the cache-oblivious (CO) methodology. 
As pointed in Section \ref{sec:wb-method}, theoretical execution models that promote data locality are needed
in order to devise energy-efficient concurrent algorithms and data structures. Better data locality will result in higher energy efficiency since energy consumption caused by data transfer is predicted to dominate the total energy consumption \cite{Dally11}. 

We first present two memory models: 1) the I/O model \cite{AggarwalV88} and 2) cache-oblivious model \cite{Frigo:1999:CA:795665.796479}, which enable the analysis of data transfer between two levels of the memory hierarchy. We then present some examples of CO algorithms and data structures that are analyzed using the CO models in Section \ref{sec:COM-coalg} and \ref{sec:COM-cods}, respectively. Section  \ref{sec:COM-rco} presents a new relaxed cache-oblivious 
model on which a new concurrency-aware van Emde Boas (vEB) layout is devised (cf. Section  \ref{sec:concurrentvEB}).

\subsubsection{Preliminaries} \label{sec:COM-pre} 

\paragraph{I/O model.} \label{sec:iomodel}
The I/O\footnote{The term "I/O" is from now on used a shorthand for block I/O operations} model was introduced
by Aggarwal and Vitter \cite{AggarwalV88}. In their seminal paper, Aggarwal and Vitter
postulated that the memory hierarchy consists of two levels, an internal memory with size
$M$ (e.g., DRAM) and an external storage of infinite size (e.g., disks). Data is transferred 
in $B$-sized blocks between those two levels of memory and the CPU can only access 
data that are available in the internal memory. In the I/O model, an algorithm's time complexity 
is assumed to be dominated by how many block transfers are required, 
as loading data from disk to memory 
takes much more time than processing the data.

For this I/O model, B-tree \cite{Bayer:1972aa} is an optimal search tree \cite{CormenSRL01}.
B-trees and its concurrent variants \cite{BraginskyP12, Comer79, Graefe:2010:SBL:1806907.1806908,
Graefe:2011:MBT:2185841.2185842}
are optimized for a known memory block size $B$ (e.g., page size) to minimize the
number of memory blocks accessed by the CPU during a search, thereby improving data
locality. The I/O transfer complexity of B-tree is $O(\log_B N)$, the optimal.

However, the I/O model has its drawbacks. 
Firstly, to use this model, an algorithm has to know the $B$ and $M$ (memory size) 
parameters in advance.
The problem is that these parameters are sometimes unknown (e.g., when memory is shared with other applications)
and most importantly not portable between different platforms.
Secondly, in reality there are different block sizes at different levels of the memory hierarchy
that can be used in the design of locality-aware data layout for search trees. 
For example in \cite{KimCSSNKLBD10, Sewall:2011aa}, 
Intel engineers have come out with very fast search trees by crafting  
a platform-dependent data layout based on the register size, SIMD width, cache line size,
and page size. 

Existing B-trees limit spatial locality optimization to the memory level
with block size $B$, leaving access to other memory levels with different block size
unoptimized.
For example a traditional B-tree that is optimized for searching data in disks (i.e.,
$B$ is page size), where each node is an array of sorted keys, 
is optimal for transfers between a disk and RAM (cf. Figure \ref{fig:vEB-bfs}c).
However, data transfers between RAM and last level cache (LLC) 
are no longer optimal.
For searching a key inside each $B$-sized block in RAM, the transfer complexity 
is $\Theta (\log (B/L))$ transfers between RAM and LLC, 
where $L$ is the cache line size.
Note that a search with optimal cache line transfers of $O(\log_L B)$ is achievable
by using the van Emde Boas layout \cite{BrodalFJ02}. This layout has been proved 
to be optimal for search using the cache-oblivious model \cite{Frigo:1999:CA:795665.796479}. 

\paragraph*{Cache-oblivious model} \label{cache-oblivious}
The cache-oblivious model was introduced 
by Frigo et al. in \cite{Frigo:1999:CA:795665.796479},
which is similar to the I/O model except that the block size $B$ and memory size
$M$ are unknown.
Using the same
analysis of the Aggarwal and Vitter's two-level I/O model, an algorithm is categorized as 
\textit{cache-oblivious} if it has no variables that need to be tuned with respect to 
hardware parameters, 
such as cache size and cache-line length in order to achieve optimality, 
assuming that I/Os are performed by an optimal off-line cache replacement strategy.

If a cache-oblivious algorithm is optimal for arbitrary two-level memory, the
algorithm is also optimal for any adjacent pair of available levels of the memory hierarchy. 
Therefore without knowing anything about memory level hierarchy and the size of each level, a cache-oblivious
algorithm can automatically adapt to multiple levels of the memory hierarchy.
In \cite{Brodal:2004aa}, cache-oblivious algorithms were reported performing better 
on multiple levels of memory hierarchy and 
more robust despite changes in memory size parameters compared to the cache-aware
algorithms. 

One simple example is that in the cache-oblivious model, B-tree is no longer optimal
because of the unknown $B$. Instead, the van Emde Boas (vEB) layout-based trees that are 
described by Bender  
\cite{BenderDF05, BenderFFFKN07, BenderFGK05} and Brodal, 
\cite{BrodalFJ02}, are optimal.
We would like to refer the readers to \cite{Brodal:2004aa, Frigo:1999:CA:795665.796479} 
for a more comprehensive overview of the I/O model and cache-oblivious model.

We provide some of the examples of cache-oblivious algorithms 
and cache oblivious data structures in the following texts.

\subsubsection{Cache-oblivious Algorithms}  \label{sec:COM-coalg}

\paragraph{Scanning algorithms and their derivatives} 
One example of a naive cache-oblivious (CO) algorithm is the 
\textit{linear scanning} of an $N$ element array that requires $\Theta(N/B)$ I/Os or transfers.
Bentley's \textit{array reversal algorithm} and Blum's \textit{linear time selection algorithm}
are primarily based on the scanning algorithm, therefore they
also perform in $\Theta(N/B)$ I/Os \cite{Brodal:2004aa, Demaine:2002aa}.

\paragraph{Divide and conquer algorithms.} Another example of CO algorithms
in divide and conquer algorithms is the matrix operation algorithms.
Frigo et al. proved that \textit{transposition} of an $n \times m$ matrix 
was optimally solved in $\mathcal{O}(mn/B)$ I/Os and 
the \textit{multiplication} of an $m\times n$-matrix and an $n \times p$-matrix 
was solved using $\mathcal{O}((mn + np + mp)/B + mnp/(B\sqrt{M}))$ 
I/Os, where $M$ is the memory size \cite{Frigo:1999:CA:795665.796479}. As for square matrices (e.g., $N \times N$), using the Strassen's algorithm,
the required I/O bound has been proved to be $O(N^2/B + N^{\lg7}/B\sqrt{M})$, the optimal.

\paragraph{Sorting algorithms.} Demaine gave two examples of cache-oblivious 
sorting algorithm in his brief survey paper \cite{Demaine:2002aa}, namely the
\textit{mergesort} and \textit{funnelsort} \cite{Frigo:1999:CA:795665.796479}. 
In the same text he also wrote that both
sorting algorithms achieved the optimal $\Theta(\frac{N}{B} \log_2 \frac{N}{B})$ I/Os,
matching those in the original analysis of Aggarwal and Vitter \cite{AggarwalV88}.

\subsubsection{Cache-oblivious Data Structures}  \label{sec:COM-cods}

\paragraph{Static data structures}

One of the examples of cache-oblivious (CO) static data structures is the \textit{CO search trees} that 
can be achieved using the van Emde Boas (vEB) layout \cite{Prokop99, vanEmdeBoas:1975:POF:1382429.1382477}. 
The vEB-based trees recursively arrange related data in contiguous memory locations, 
minimizing data transfer between any two adjacent levels of the memory hierarchy (cf. Figure \ref{fig:vEB}). 

\begin{figure}[!t]
\centering  \includegraphics[width=0.8\columnwidth]{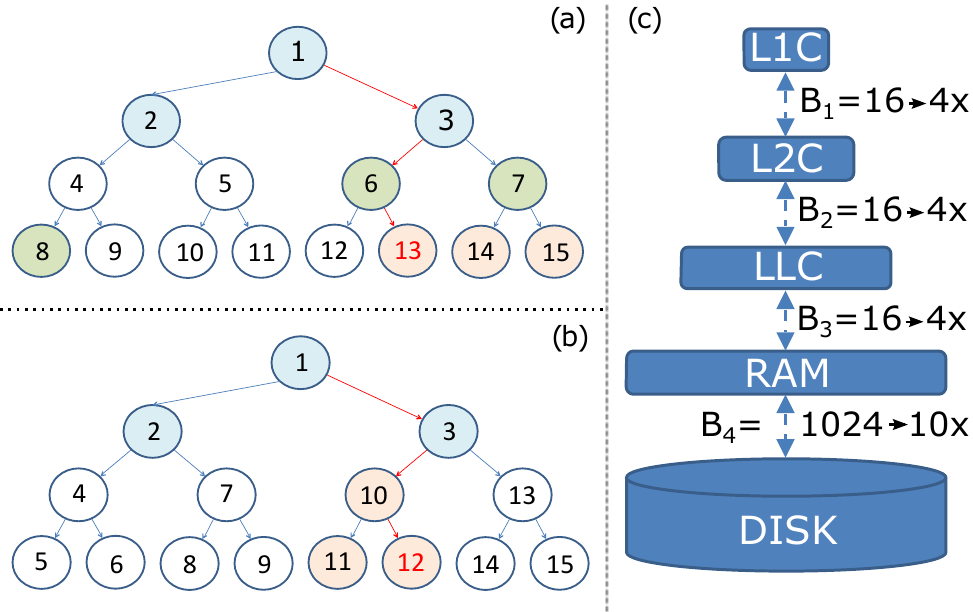}
\caption{Illustration of required memory transfers in searching for (a) key 13 in BFS tree and 
(b) key 12 in vEB tree, where node's value is its memory location. 
An example of multi-level memory is shown in (c), where $B_x$ is the block size 
$B$  between levels of memory.}\label{fig:vEB-bfs}
\end{figure}

Figure \ref{fig:vEB-bfs} illustrates the vEB layout, where the size $B$ of memory blocks transferred  
between 2-level memory in the I/O model \cite{AggarwalV88} is 3 (cf. Section \ref{sec:iomodel}). Traversing a complete binary tree with the Breadth First Search layout (or BFS tree for short) (cf. Figure \ref{fig:vEB-bfs}a) with height 4 will need three memory transfers to locate the key at leaf-node 13. The first two levels with
three nodes (1, 2, 3) fit within a single block transfer while the next two levels need to be loaded
in two separate block transfers that contain nodes (6, 7, 8)\footnote{For simplicity, we assume that the memory controller    transfers a memory block of 3 nodes starting at the address of the node requested.} 
and nodes (13, 14, 15), respectively. Generally, the number of memory transfers for a BFS tree of size $N$  is $(\log_2N-\log_2B) = \log_2 N/B \approx \log_2 N$
for $N \gg B$. 

For a vEB tree with the same height, the required memory transfers is only 
two. As shown in Figure \ref{fig:vEB-bfs}b, locating the key in leaf-node 12 requires only a transfer of nodes (1, 2, 3) 
followed by a transfer of nodes (10, 11, 12). Generally,  the memory transfer complexity for searching for a key in a tree of size $N$ is now reduced to $\frac{\log_2N}{\log_2B} = \log_B N$, simply by using 
an efficient tree layout so that nearby nodes are located in adjacent memory locations. 
If $B=1024$, searching a BFS tree for a key at a leaf requires 10x (or $\log_2 B$) more I/Os than searching a vEB tree with the same size $N$ where $N \gg B$. 

On commodity machines with multi-level memory, 
the vEB layout is even more efficient. 
So far the vEB layout is shown to have $\log_2B$ less I/Os for two-level memory. 
In a typical machine having three 
levels of cache (with cache line size of 64B), a RAM (with page size of 4KB) and a disk, 
searching a vEB tree can achieve up to 640x less I/Os than searching a BFS tree, assuming the node size is 4 bytes (Figure \ref{fig:vEB-bfs}c).
 
\paragraph{Dynamic data structures.}
In a standard \textit{linked-list} structure supporting traversals, insertions and deletions, 
the best-known cache-oblivious solution was $\mathcal{O}((\lg^2 N)/B)$ I/Os for updates
and $\mathcal{O}(K/B)$ for traversing $K$ elements in the list  \cite{Demaine:2002aa}.

The first cache-oblivious \textit{priority queue} was due to Arge et al. 
\cite{Arge:2002:CPQ:509907.509950} and it supports inserts and delete-min operations in $\mathcal{O}( {^1/_B} \log_{M/B} {^N/_B})$ I/Os.

The vEB layout in static cache-oblivious search tree has inspired many cache-oblivious \textit{dynamic search trees} such as 
cache-oblivious B-trees \cite{BenderDF05, BenderFFFKN07, BenderFGK05} and cache-oblivious binary trees \cite{BrodalFJ02}.
All of these search tree implementations have been proved having the optimal bounds of
$\mathcal{O}(\log_B N)$ in searches and require amortized $\mathcal{O}(\log_B N)$ I/Os for updates. 

However, vEB-based trees poorly support {\em concurrent} update operations. 
Inserting or deleting a node may result in 
relocating a large part of the tree in order to maintain 
the vEB layout (cf. Section \ref{subsec:staticveb}). Bender et al. 
\cite{BenderFGK05} discussed the problem
and provided important theoretical designs of concurrent vEB-based B-trees.
Nevertheless, we have found that the theoretical designs are not very efficient in practice 
due to the actual overhead of maintaining necessary pointers as well as their large memory footprint.

\subsubsection{New Relaxed Cache-oblivious Model}  \label{sec:COM-rco}

We observe that it is unnecessary to keep a vEB-based tree in a contiguous block of memory whose size is greater
than some upper bound. In fact, allocating a contiguous block of memory for a 
vEB-based tree does not guarantee a contiguous block of
\textit{physical memory}. Modern OSes and systems utilize
different sizes of continuous physical memory blocks,  for example, in the form
of pages and cache-lines. A contiguous block in virtual
memory might be translated into several blocks with gaps
in RAM; also, a page might be cached by several cache lines with gaps at any level of cache. 
This is one of the motivations for the new relaxed cache oblivious model proposed. 

We define {\em relaxed cache oblivious} algorithms to be cache-oblivious (CO)
algorithms with the restriction that an upper bound $\UB$ on the unknown memory
block size $B$ is known in advance. 
As long as an upper bound on all the block
sizes of multilevel memory is known, the new relaxed CO model maintains the key
feature of the original CO model \cite{Frigo:1999:CA:795665.796479}. 
First, temporal locality is exploited perfectly as there are no constraints on cache size
$M$ in the model. As a result, an optimal offline cache replacement policy can be
assumed. In practice, the Least Recently Used (LRU) policy with memory of size 
$(1+\epsilon)M$, where $\epsilon>0$, is nearly as good as the optimal replacement policy
with memory of size $M$ \cite{Sleator:1985:AEL:2786.2793}.
Second, analysis for a simple two-level memory
are applicable for an unknown multilevel memory (e.g., registers, L1/L2/L3 caches
and memory). Namely, an algorithm that is optimal in terms of data movement for a 
simple two-level memory is asymptotically optimal for an unknown multilevel memory. 
This feature enables algorithm designs that can utilize fine-grained data locality 
in the multilevel memory hierarchy of modern architectures. 

The upper bound on the contiguous block size can be obtained easily from 
any system (e.g., page-size or any values greater than that), which is platform-independent. In fact, the search performance in  
the new relaxed cache oblivious model is resilient to different upper bound values (cf. Lemma \ref{lem:dynamic_vEB_search} in Section \ref{sec:concurrentvEB}).

\subsubsection{New Concurrency-aware van Emde Boas Layout} \label{sec:concurrentvEB}

\begin{figure}[!t] 
\centering 
\scalebox{0.6}{\input{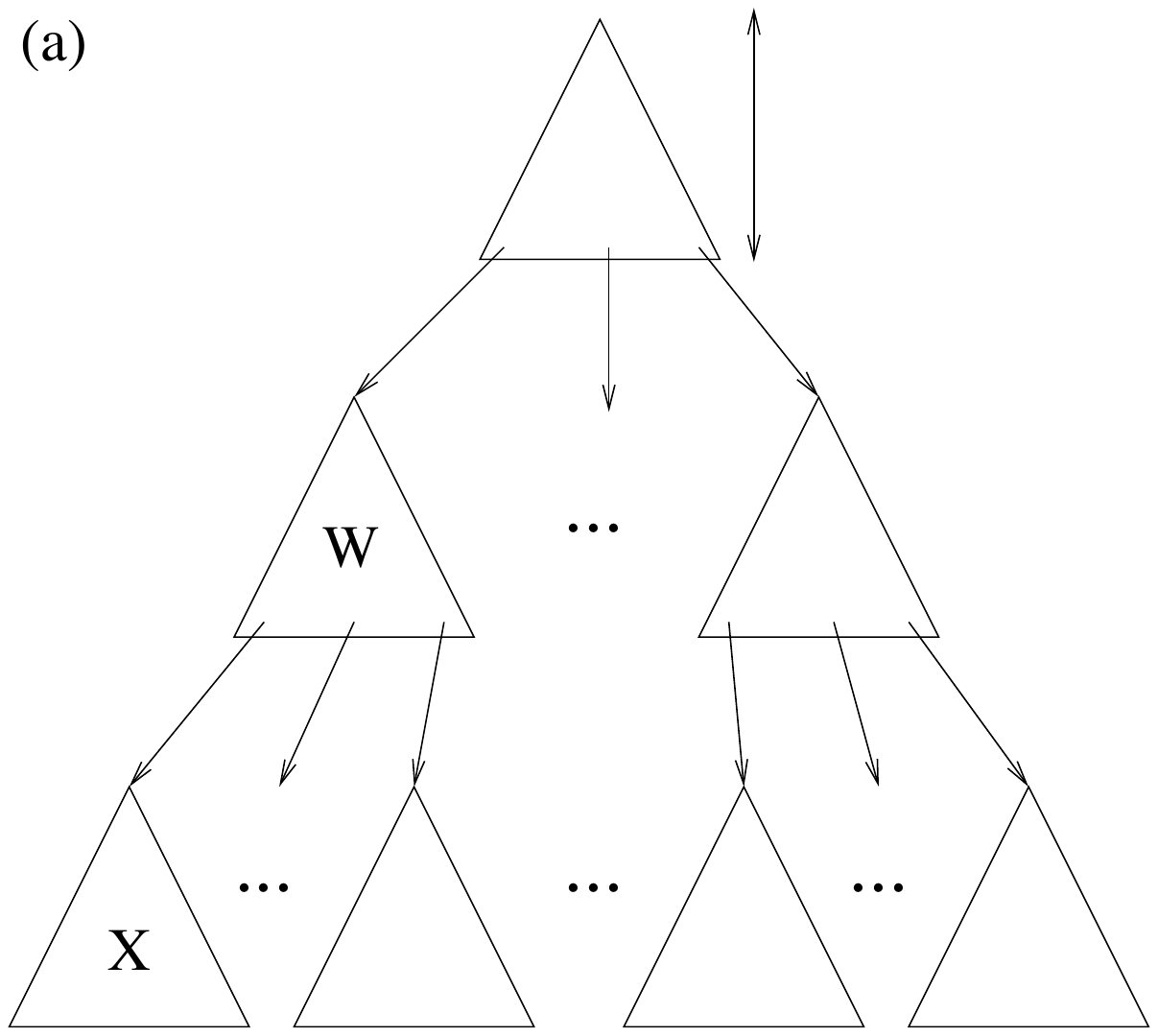_t}}
\scalebox{0.6}{\input{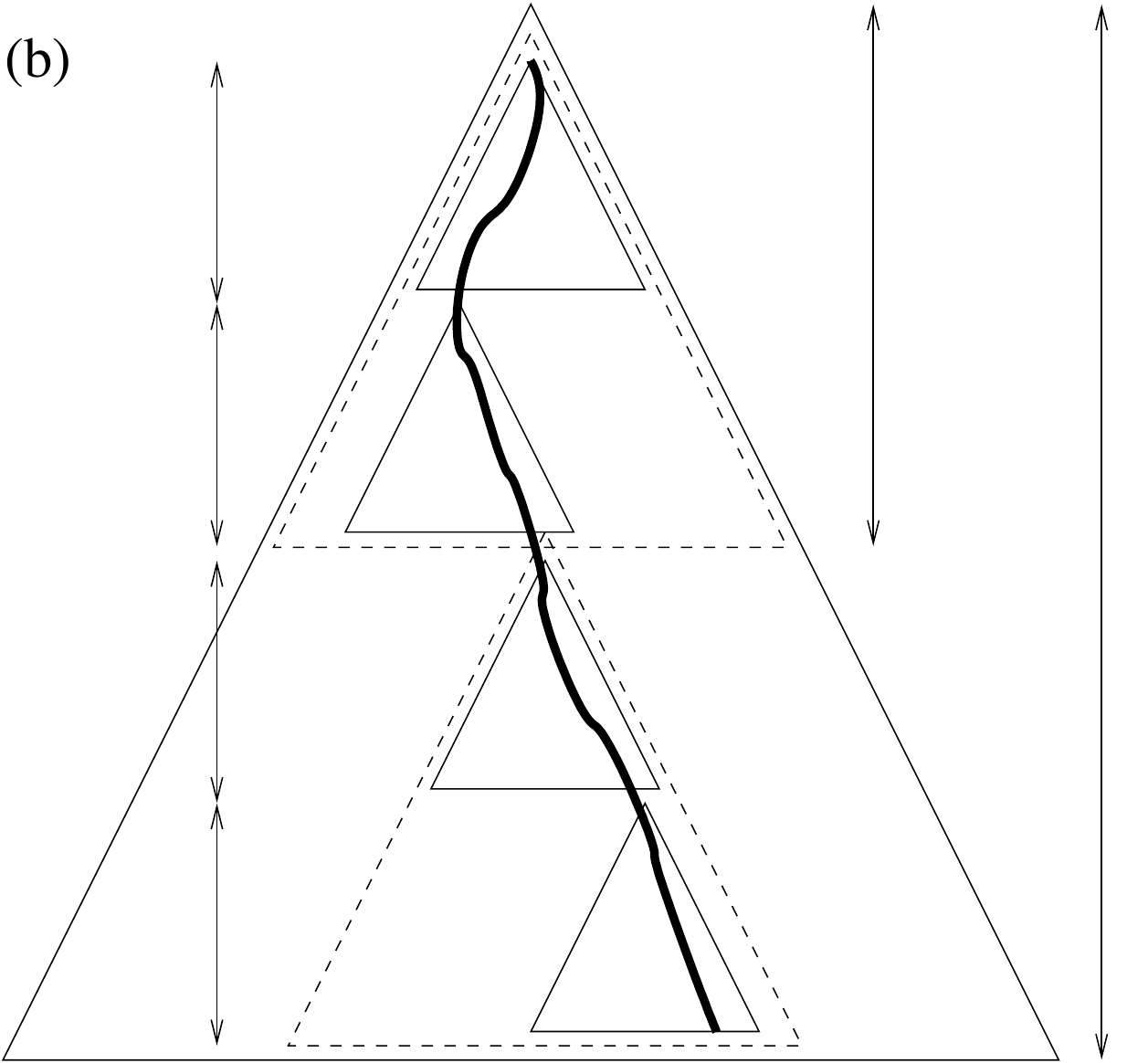_t}}
\caption{{\em(a)} New concurrency-aware vEB layout. {\em(b)} 
Search using concurrency-aware vEB layout.}\label{fig:search_complexity}\label{fig:dynamicVEB}
\end{figure}

We propose improvements to the conventional van Emde Boas (vEB) layout
to support high performance and high concurrency, which results in new {\em concurrency-aware} dynamic vEB layout. 
We first define the following notations that will be used to elaborate on the improvements:
\begin{itemize}

\item $b_i$ (unknown): block size in terms of the number of nodes at level $i$ of the memory
hierarchy (like $B$ in the I/O model \cite{AggarwalV88}), which is unknown as in the cache-oblivious 
model \cite{Frigo:1999:CA:795665.796479}. When the specific level $i$ of the memory
hierarchy is irrelevant, we use notation $B$ instead of $b_i$ in order to be
consistent with the I/O model.

\item $\UB$ (known): the upper bound (in terms of the number of nodes) on the
block size $b_i$ of all levels $i$ of the memory hierarchy.

\item {\em $\Delta$Node}: the largest recursive subtree of a van Emde Boas-based 
search tree that contains at most $\UB$ nodes (cf. dashed triangles of height $2^L$ in
Figure \ref{fig:search_complexity}b). $\Delta$Node is a fixed-size tree-container
with the vEB layout.

\item "level of detail" $k$ is a partition of the tree into recursive subtrees of 
height at most $2^k$. 

\item Let $L$ be the level of detail of $\Delta$Node. Let $H$ be the height
of a $\Delta$Node, we have $H = 2^L$. For simplicity, we assume $H = \log_2
(\UB+1)$.

\item $N, T$: size and height of the whole tree in terms of basic nodes (not in
terms of $\Delta$Nodes).

\end{itemize}  

\subparagraph{Conventional van Emde Boas (vEB) layout.} \label{subsec:staticveb}

\begin{figure}[!t]
\centering \scalebox{0.7}{\input{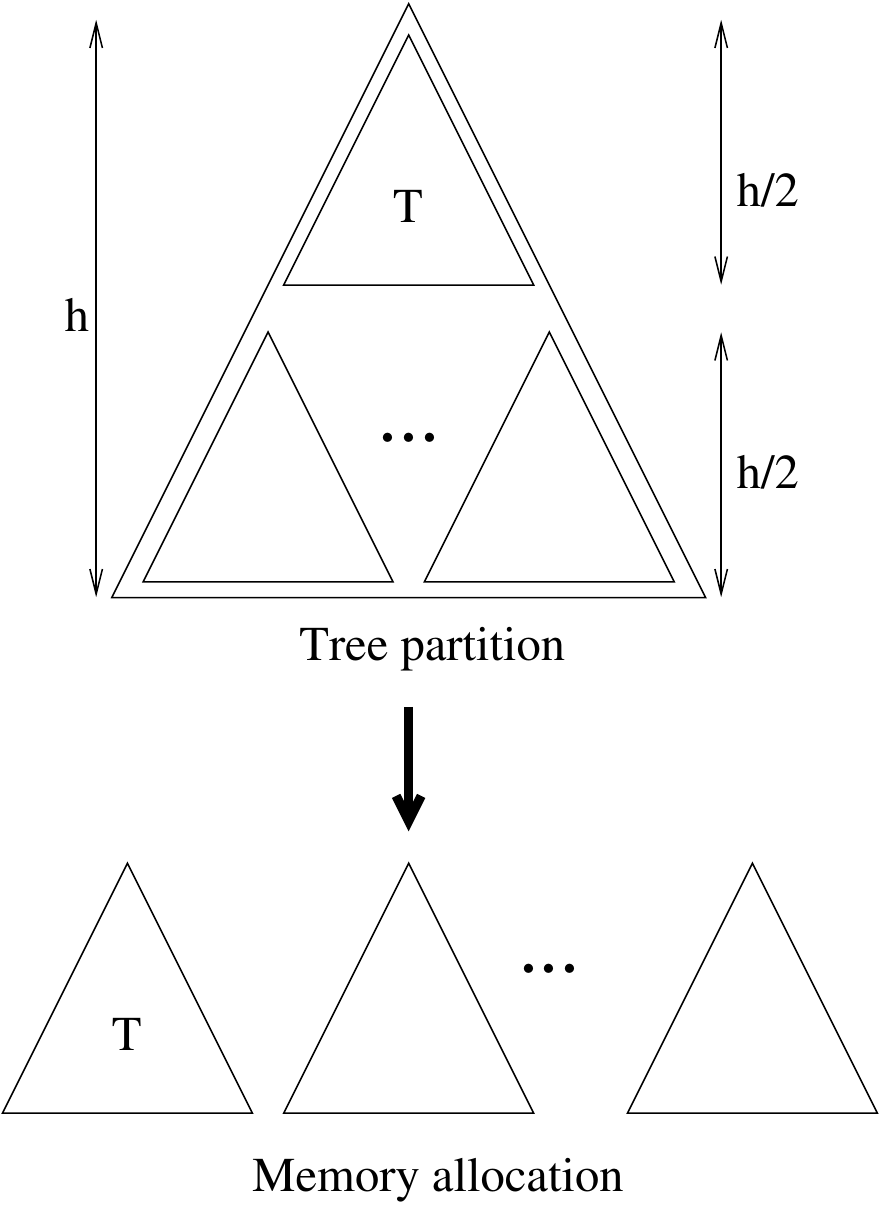_t}}
\caption{Static van Emde Boas (vEB) layout: a tree of height $h$ is recursively split at height $h/2$. 
The top subtree $T$ of height $h/2$ and $m=2^{h/2}$ bottom subtrees $W_1;W_2; \ldots ;W_m$ 
of height $h/2$ are located in contiguous memory locations where T is located before  
$W_1;W_2;\ldots;W_m$.}\label{fig:vEB}
\end{figure}

The conventional van Emde Boas (vEB) layout has been introduced in
cache-oblivious data structures \cite{BenderDF05, BenderFFFKN07, BenderFGK05, BrodalFJ02,
Frigo:1999:CA:795665.796479}. Figure \ref{fig:vEB} illustrates the vEB layout.
Suppose we have a complete binary tree with height $h$. For simplicity, we
assume $h$ is a power of 2, i.e., $h=2^k, k \in \mathbb{N}$.
The tree is recursively laid out in the memory as follows. The tree is
conceptually split between nodes of height $h/2$ and $h/2+1$, resulting in a top
subtree $T$ and $m_1 = 2^{h/2}$ bottom subtrees $W_1, W_2, \cdots, W_{m_1}$ of
height $h/2$. The $(m_1 +1)$ top and bottom subtrees are then located in
contiguous memory locations where $T$ is located before $W_1, W_2, \cdots,
W_{m_1}$. Each of the subtrees of height $h/2$ is then laid out similarly to $(m_2 +
1)$ subtrees of height $h/4$, where $m_2 = 2^{h/4}$. The process continues until
each subtree contains only one node, i.e., the finest {\em level of detail}, 0.

The main feature of the vEB layout is that the cost of any search in this layout
is $O(\log_B N)$ memory transfers, where $N$ is the tree size and $B$ is the {\em
unknown} memory block size in the cache-oblivious model \cite{Frigo:1999:CA:795665.796479}. Namely, its search is cache-oblivious. 
The search cost is the
optimal and matches the search bound of B-trees that requires the memory block
size $B$ to be {\em known in advance}. Moreover, at any level of detail, each subtree
in the vEB layout is stored in a contiguous block of memory.

Although the conventional vEB layout is helpful for utilizing data locality, it poorly
supports concurrent update operations. Inserting (or deleting) a node at
position $i$ in the contiguous block storing the tree may restructure a large
part of the tree. For example, inserting
new nodes in the full subtree $W_1$ (a leaf subtree) in Figure \ref{fig:vEB} will  affect the other
subtrees $W_2, W_3, \cdots, W_m$ by rebalancing existing nodes between $W_1$ 
and the subtrees in order to have
space for new nodes. Even worse, we will need to allocate a new contiguous block of
memory for the whole tree if the previously allocated block of memory for
the tree runs out of space \cite{BrodalFJ02}. Note that we cannot use dynamic
node allocation via pointers since at {\em any} level of detail, each subtree in the vEB layout must be stored in a {\em
contiguous} block of memory.

\subparagraph{Concurrency-aware vEB layout.} \label{sec:relaxed-veb}

In order to make the vEB layout suitable for highly concurrent data structures
with update operations, we introduce a novel {\em concurrency-aware} dynamic vEB layout. Our key
idea is that if we know an upper bound $\UB$ on the unknown memory block size
$B$, we can support dynamic node allocation via pointers while maintaining the
optimal search cost of $O(\log_B N)$ memory transfers without knowing $B$ (cf.
Lemma \ref{lem:dynamic_vEB_search}). The assumption on known upper bound $\UB$ is supported 
by the fact that in practice it is unnecessary to keep the vEB layout in
 a contiguous block of memory whose size is greater than some upper bound.
         
Figure \ref{fig:dynamicVEB}a illustrates the new concurrency-aware vEB layout based on the
relaxed cache oblivious model. Let $L$ be the coarsest level of detail such that
every recursive subtree contains at most $\UB$  nodes. Namely, let $H$ and $S$ 
be the height and size of such a subtree then $H=2^L$ and $S=2^H - 1 < \UB$.
The tree is recursively
partitioned into level of detail $L$ where each subtree represented by a
triangle in Figure \ref{fig:dynamicVEB}a,  is stored in a contiguous memory block
of size $\UB$. Unlike the conventional vEB, the subtrees at level of detail $L$
are linked to each other using pointers, namely each subtree at level of detail
$k > L$ is not stored in a contiguous block of memory.  Intuitively, since $\UB$
is an upper bound on the unknown memory block size $B$, storing a subtree at
level of detail $k > L$ in a contiguous memory block of size greater than $\UB$,
does not reduce the number of memory transfers, provided there is perfect alignment. 
For example, in Figure
\ref{fig:dynamicVEB}a, traveling from a subtree $W$ at level of detail $L$, which
is stored in a contiguous memory block of size $\UB$, to its child subtree $X$ at
the same level of detail will result in at least two memory transfers: one for
$W$ and one for $X$. Therefore, it is unnecessary to store both $W$ and $X$ in a
contiguous memory block of size $2\UB$. As a result, the memory transfer cost for search operations in
the new concurrency-aware vEB layout is intuitively the same as that of the conventional vEB layout (cf. Lemma \ref{lem:dynamic_vEB_search}) while the concurrency-aware vEB supports
high concurrency with update operations.

\begin{lemma}
For any upper bound $\UB$ of the {\em unknown} memory block size $B$, a search in a complete binary tree with the new concurrency-aware vEB layout achieves the optimal memory transfer 
$O(\log_B N)$, where $N$ and $B$ are the tree size and 
the {\em unknown} memory block size in the cache-oblivious model
\cite{Frigo:1999:CA:795665.796479}, respectively.
\label{lem:search_mem} \label{lem:dynamic_vEB_search}
\end{lemma}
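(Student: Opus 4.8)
The plan is to reduce the claim to the classical block-transfer analysis of the conventional vEB layout (Section~\ref{subsec:staticveb}), and then to show that the pointer-based linking introduced at level of detail $L$ does not change the asymptotic count, precisely because $\UB\ge B$. Throughout I work under the standard power-of-two simplifications already adopted in the excerpt.

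First I would recall the classical argument. Because the recursive decomposition halves the subtree height at each coarser level of detail, I can pick the coarsest level of detail $k^\ast$ at which each recursive subtree still fits in a single block, i.e.\ has height $h^\ast=2^{k^\ast}$ with $\tfrac12\log_2 B < h^\ast \le \log_2 B$; such a subtree then has at most $2^{\log_2 B}-1 < B$ nodes and occupies one block. Call these the \emph{block-sized} subtrees. A root-to-leaf path has length $T=\Theta(\log_2 N)$, so it meets at most $\lceil T/h^\ast\rceil = O(\log N/\log B)=O(\log_B N)$ block-sized subtrees, each costing a constant number of transfers in the contiguous layout.

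The key step is to transport this count to the concurrency-aware layout. Since $\UB$ is an upper bound on $B$, we have $\log_2 B \le \log_2(\UB+1) = H = 2^L$, hence $h^\ast \le 2^L$, i.e.\ $k^\ast \le L$. As the levels of detail are nested and $2^{k^\ast}$ divides $2^L$, the level-$k^\ast$ partition refines the level-$L$ partition, so every block-sized subtree lies entirely inside a single $\Delta$Node. Each $\Delta$Node stores its interior in the conventional, contiguous vEB layout, in which every recursive subtree occupies a contiguous region; therefore each block-sized subtree is still contiguous and costs at most two transfers to cross (the extra factor covering possible misalignment across a physical block boundary). Crucially, the $\Delta$Node boundaries --- the only places where contiguity is broken and a pointer is followed --- form a subset of the block-sized-subtree boundaries, so following a pointer never charges a transfer beyond those already counted.

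Combining the two steps gives at most $2\cdot O(\log_B N)=O(\log_B N)$ transfers, matching the optimal search bound of the conventional vEB layout; and since $\UB$ enters the argument only through the inequality $\log_2 B \le 2^L$, the bound is independent of the particular value of $\UB$, which is the promised resilience to the chosen upper bound. The main obstacle I anticipate is the bookkeeping in the third paragraph: one must argue rigorously that replacing a single contiguous array by pointer-linked $\Delta$Nodes introduces no additional block transfers. This rests entirely on the nesting of the decomposition together with the hypothesis $\UB\ge B$ --- the same hypothesis that fails, and with it the bound, as soon as $\UB<B$, since then a block-sized subtree could straddle two non-contiguous $\Delta$Nodes.
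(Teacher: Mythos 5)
Your proposal is correct and follows essentially the same route as the paper's own proof: both pick the coarsest level of detail $k$ at which recursive subtrees have at most $B$ nodes, use $B\le\UB$ to get $k\le L$ so that these subtrees nest inside $\Delta$Nodes and remain contiguous (costing at most $2$ block transfers each from misalignment), and then count $\lceil T/2^k\rceil$ such subtrees on a root-to-leaf path with $2^k\ge\tfrac12\log_2(B+1)$ to conclude $O(\log_B N)$. Your explicit remark that the $\Delta$Node boundaries form a subset of the block-sized-subtree boundaries is a slightly cleaner way of stating the paper's observation that no subtree of depth $2^k$ is split by a $\Delta$Node boundary, but it is the same argument.
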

\begin{proof} (Sketch)
Figure \ref{fig:search_complexity}b illustrates the proof.  
Let $k$ be the coarsest level of detail such that every recursive subtree 
contains at most $B$ nodes. Since $B \leq \UB$, $k \leq L$, where $L$ is 
the coarsest level of detail at which every recursive subtree ($\Delta$Nodes) 
contains at most $\UB$ nodes. 
That means there are at most $2^{L-k}$ subtrees
along the search path in a $\Delta$Node and no subtree of depth $2^k$ is split 
due to the boundary of $\Delta$Nodes. Namely, triangles of height $2^k$ fit
within a dashed triangle of height $2^L$ in Figure \ref{fig:search_complexity}b. 

Because at any level of detail $i \leq L$ in the concurrency-aware vEB layout, a recursive subtree of depth $2^i$ is stored in a contiguous block 
of memory, each subtree of depth $2^k$ {\em within} a $\Delta$Node is stored in
at most 2 memory blocks of size $B$ (depending on the starting location of
the subtree in memory). Since every subtree of depth $2^k$ fits in a
$\Delta$Node (i.e.,
no subtree is stored across two $\Delta$Nodes), every subtree of depth $2^k$ is 
stored in at most 2 memory blocks of size $B$.

Since the tree has height $T$, $\lceil T / 2^k \rceil$ subtrees of depth $2^k$ 
are traversed in a search and thereby at most  $2  \lceil T / 2^k \rceil$ memory 
blocks are transferred. 

Since a subtree of height $2^{k+1}$ contains more than $B$ nodes, 
$2^{k+1} \geq \log_2 (B + 1)$, or $2^{k} \geq \frac{1}{2} \log_2 (B+ 1)$. 

We have $2^{T-1} \leq N \leq 2^T$ since the tree is a {\em complete} binary tree. 
This implies $ \log_2 N \leq T \leq \log_2 N +1$.  

Therefore, the number of memory blocks transferred in a search is 
$2  \lceil T / 2^k \rceil \leq 4 \lceil \frac{\log_2 N + 1}{\log_2 (B + 1)} \rceil 
= 4 \lceil \log_{B+1} N + \log_{B+1} 2\rceil$ $= O(\log_B N)$, where $N \geq 2$.
\end{proof}

A library of novel locality-aware and energy efficient concurrent search trees based on the new concurrency-aware vEB layout is presented in Section~\ref{sec:search-trees}


\subsection{Power Model for Computational Algorithms on Movidius Platform} \label{sec:myriad1-model}
The objective of this study is to build a power model that can estimate the power consumption of an algorithm on Myriad platform. In order to do so, the model considers both algorithmic and platform properties. Our model is inspired by Amdahl law analysis and the Roofline model of energy \cite{ 6877278, Choi:2013:RME:2510661.2511392}. Although the Roofline model also connects algorithmic and platform properties, it does not consider the number of cores nor memory contention when the number of cores varies. By estimating the power consumed by a system running a specific number of cores, our model can predict the number of cores required to achieve the least energy consumption. In this report, the model has been evaluated by micro-benchmarks and application kernels such as sparse/dense linear algebra kernels and graph kernels on Movidius embedded  platform (Myriad1).


\subsubsection{Energy Model Description}
In Deliverable D2.1, we presented our initial ideas on an energy model for Myriad1 platform. In this Deliverable, we have improved the model by considering both computational and data movement power. Another improvement is that we create the micro-benchmarks whose sizes are approximately 1 KB that fits to Myriad1 cache buffer. The measurements are, therefore, more accurate and do not need to include inter-operation cost like the power model in Deliverable D2.1. The details on how to build the model and its evaluation are described as follows.
\paragraph{Computational Power}
The computational power of a system is the required power to perform its computation using data from closest memory components such as register files. The computational power includes static power, active power and dynamic power of involved functional units. At first we aim to find out the corresponding values of static, active and dynamic power of each SHAVE core and each SHAVE arithmetic unit.  
The experimental results have shown that the power consumption of Movidius Myriad1 platform is ruled by the following model:
\begin{equation} \label{eq:Pfirst}
	P^{comp}=P^{sta}+\#\{\text{SHV}\} \times (P^{act}+P^{dyn}_{SHV})
\end{equation}
In that formula, the static power $P^{sta}$ is the needed power when the Myriad1 processor is on. The $P^{act}$ is the power consumed when a SHAVE core is on. Therefore, if benchmarks or programs use $n$ SHAVE cores, the active power needs to be multiplied with the number of used SHAVE cores.

The dynamic power $P^{dyn}_{SHV}$ of each SHAVE is the power consumed by all working operation units in one SHAVE. Operation units include arithmetic units (e.g., IAU, VAU, SAU and CMU) and load store unit (e.g., LSU1, LSU2). The experimental results show that different arithmetic operation units have different $P^{dyn}$ values.  


When running a benchmark with one more SHAVE, we can identify the sum of SHAVE $P^{act}$ and $P^{dyn}$ which is the power difference of the two runs (the run with one SHAVE core and the run with two SHAVEs).  Given the sum of $P^{act}$ and $P^{dyn}$, $P^{sta}$ is derived from the Equations~\ref{eq:Pfirst} . The experimental results show the average value of $P^{sta}$ from all micro-benchmarks:
\begin{equation} \label{eq:Pstat}
	P^{sta}=61.81 \text{ mW}
\end{equation}

Dynamic power $P^{dyn}_{SHV}$ of SHAVE running multiple units is computed by the following formula.
\begin{equation} \label{eq:Pdynshave}
	P^{dyn}_{SHV}=\sum_i{P^{dyn}_i(op)}
\end{equation}
That means $P^{dyn}_{SHV}$ is the sum of $P^{dyn}$ of all involved arithmetic units. E.g. 
$P^{dyn}_{(SauIau)}) = P^{dyn}_{(Sau)} + P^{dyn}_{(Iau)}$.

For each operation unit, we obtain the two parameters $P^{dyn}_{op}$ and $P^{act}$ by using the actual power consumption of the benchmark for individual units and multiple units. 

For example, $P^{dyn}_{Iau}$,  $P^{dyn}_{Sau}$ and $P^{act}$ are identified from Equations~\ref{eq:PdynIau}, ~\ref{eq:PdynSau}, ~\ref{eq:PdynSauIau}.
\begin{equation} \label{eq:PdynIau}
	P^{dyn}_{(Iau)} =P^{sta} + \#\{\text{SHV}\} \times (P^{act}+P^{dyn}_{(Iau)})
\end{equation}
\begin{equation} \label{eq:PdynSau}
	P^{dyn}_{(Sau)}= P^{sta} + \#\{\text{SHV}\} \times (P^{act}+P^{dyn}_{(Sau)})
\end{equation}
\begin{equation} \label{eq:PdynSauIau}
	P^{dyn}_{(SauIau)}= P^{sta} + \#\{\text{SHV}\} \times (P^{act}+P^{dyn}_{(Sau)}+P^{dyn}_{(Iau)})
\end{equation}

Then, the average value of  $P^{act}$ from all micro-benchmarks is derived as below.
\begin{equation} \label{eq:Pact}
	P^{act}=29.33 \text{ mW}
\end{equation}

Given the values of \textbf{$P^{sta}$}, \textbf{$P^{act}$} and \textbf{$P^{dyn}_{SHV}$} as Equations ~\ref{eq:Pstat},~\ref{eq:Pact} and~\ref{eq:Pdynshave} respectively, the computation power \textbf{$P^{comp}$} for Movidius Myriad1 can be estimated by applying the below formula:
\begin{equation} \label{eq:Psecond}
	P^{comp}=P^{sta}+ \#\{\text{SHV}\} \times \left(P^{act} + \sum_i{P^{dyn}_i(op)} \right)
\end{equation}

Table~\ref{table:Punits} lists the dynamic power of each operation unit. 
\begin{table}[h]
\begin{center}
\begin{tabular}{|l|l|}
\hline
\textbf{Operation Unit} & \textbf{$P^{dyn}_{op}$}\\ \hline
SauXor                  & 14.68          \\ \hline
SauMul                  & 17.69          \\ \hline
VauXor                  & 34.34         \\ \hline
VauMul                 & 51.98         \\ \hline
IauXor                  & 15.91          \\ \hline
IauMul                  & 18.48          \\ \hline
CmuCpss                 & 12.62          \\ \hline
CmuCpivr                & 18.84         \\ \hline
LsuLoad                & 29.87         \\ \hline
LsuStore                & 37.49          \\ \hline
\end{tabular}
\caption{\textbf{$P^{dyn}_{op}$} of SHAVE Operation Units}
\label{table:Punits}
\end{center}
\end{table}


\paragraph{Data Movement Power} \label{sec:Data-Movement-Power}

Since Myriad1 has two memory components within the chip (CMX and DDR), we model the data movement for both of the memory components. The data is either moved from CMX to registers or from DDR to registers before being processed by arithmetic units.

Data movement is performed by two units of the SHAVE core namely Load Store Unit (LSU). LSU loads and stores data from the memory components such as DDR and CMX to the register memory of SHAVE processor. Therefore, the consumed power of data movement also includes the static power {$P^{sta}$}, the active power {$P^{act}$}, LSU unit power {$P^{dyn}_{LSU}$} and the contention power {$P^{ctn}$} when there are more cores accessing memory than the number of available memory ports. Below is the formula to estimate the power consumption of data movement. 
\begin{equation} \label{eq:Pdata}
	P^{data}= P^{sta} + min(m,n) \times (P^{act}+P^{dyn}_{LSU}) + max(n-m, 0) \times P^{ctn}
\end{equation}
In the formula,  $n$ is the number of active cores running the program;  $m$ is the number representing memory ports or bandwidth available in the platform; the contention power $P^{ctn}$ is the power overhead occurring when SHAVE cores actively wait for accessing data because of the limited memory ports (or bandwidth) in the platform architecture. 

Myriad1 has two separate memory components: CMX and DDR. As described in section 2.2.1, each SHAVE core has its own CMX memory slice meaning that the number of data accessing ports $m$ equals to the number of active cores $n$ when transferring data between CMX memory and registers. Therefore, data movement for CMX does not cause contention power and the power consumption of CMX data movement {$P^{data}_{CMX}$} is:
\begin{equation}
	P^{data}_{CMX}= P^{sta} + min(n,n) \times (P^{act}+P^{dyn}_{LSU}) + max(0, 0) \times P^{ctn}
\end{equation}
\begin{equation} \label{eq:Pdata_CMX}
	P^{data}_{CMX}= P^{sta} + n \times (P^{act}+P^{dyn}_{LSU})
\end{equation}

The data movement between DDR and registers, however, has the contention power due to the limited DDR ports/bandwidth shared among all SHAVE cores.  
\begin{equation} \label{eq:Pdata}
	P^{data}_{DDR}= P^{sta} + min(m,n) \times (P^{act}+P^{dyn}_{LSU}) + max(n-m, 0) \times P^{ctn}
\end{equation}

\paragraph{A Power Model for Computational Algorithms}

Typical applications require both computation and data movement. The power consumption then needs to consider the power of both the computation and data movement. In our model, we use the concept of operational intensity proposed by Williams et.al.~\cite{Williams:2009:RIV:1498765.1498785} to characterize algorithms. An algorithm can be characterized by the amount of computational work $W$ and a number of data accesses $Q$. $W$ is the number of operations performed by a  program. $Q$ is the number of transferred bytes required during a program execution. Both $W$ and $Q$ define the operation intensity $I$ of the algorithm. 
\begin{equation} \label{eq:Pdata}
	I= \frac{W}{Q}
\end{equation}
For the power model, we assume that the core either perform computation or transfer data during its active state. The power consumption of a processor depends on the ratio of the computation to data movement during its execution. This portion can be calculated based on the time ratio of performing computation to transferring data. As the time needed to perform one operation is different from the time required to transfer one byte of data, we introduce a parameter {$\alpha$} to the model. The parameter {$\alpha$} is the property of the platform and its value depends on each platform architecture. The model considers both computation and data movement cost is derived below.
\begin{equation} \label{eq:Pfinal}
	P= P^{comp} \times (\frac{W}{\alpha \times Q + W}) + P^{data} \times (\frac{\alpha \times Q}{\alpha \times Q + W})
\end{equation}
After converting W and Q to I by using the Equation ~\ref{eq:Pfinal}, the final model is simplified as below:
\begin{equation} \label{eq:Pfinal1}
	P= P^{comp} \times (\frac{I}{\alpha+I}) + P^{data} \times (\frac{\alpha}{\alpha+I})
\end{equation}
The complete model with details to compute the power consumed by a given application/ algorithm is presented as below.
\begin{equation} 
\label{eq:Pfinal2} 
\begin{split}
	P&= P^{sta} \\
	&+ n \times (P^{act}+P^{dyn}_{SHV}) \times (\frac{I}{\alpha+I}) \\
	&+ (min(m,n) \times (P^{act}+P_{LSU}) + max(n-m, 0) \times P^{ctn}) \times (\frac{\alpha}{\alpha+I})
\end{split}
\end{equation}

Table~\ref{table:Parameters} summarizes the parameter list of the proposed model. 
\begin{table}[h]
\begin{center}
\begin{tabular}{|l|l|}
\hline
\textbf{Parameters} & \textbf{Explanation}\\ \hline
{$P^{sta}$}                 & Static power of a SHAVE core         \\ \hline
{$P^{act}$}                 & Active power of a SHAVE core        \\ \hline
{$P^{dyn}_{SHV}$}                  & Dynamic power of a SHAVE core        \\ \hline
{$P_{LSU}$}                  & Operation power of Load Store Unit          \\ \hline
{$P^{ctn}$}                  & Contention power of a SHAVE core          \\ \hline
$m$                 & Number of memory ports         \\ \hline
$n$                & Number of running SHAVE cores         \\ \hline
$I$              & Operation intensity of an algorithm         \\ \hline
{$\alpha$}                & Time ratio of data transfer to computation         \\ \hline
\end{tabular}
\caption{Model Parameter List}
\label{table:Parameters}
\end{center}
\end{table}

\subsubsection{Model Validation}
\paragraph{Experimental validation with micro-benchmarks}
Regarding the assembly code of micro-benchmarks, a fixed number of instructions is provided for all the experiments, meaning that each assembly code file contains five
instructions in an iteration and the iteration is infinitely repeated. This convention keeps the continuity and consistency of experiments while giving insights into measuring the power consumption with different SHAVE units, enabling the comparisons among SHAVE units. As compared to micro-benchmarks used in Deliverable D2.1, the number of instructions in one iteration for each mico-benchmark is re-calculated so that the whole iteration fits to L1 cache buffer size (1KB). 
 
The assembly code files used in the experimental evaluation contain code that executes the instruction decode and instruction fetch. Majority of tests use pseudo-realistic data. By using pseudo-realistic data we have as many non-zero values as possible and avoiding data value repetition at different offsets. Analyses of experimental results are performed based on an identified set of micro-benchmarks. Each micro-benchmark is executed with different numbers of SHAVE such as 1, 2, 4, 6 and 8 SHAVE cores. A sample code of micro-benchmark performing SauXor operation is given in Figure ~\ref{fig:MicrobenchmarkCode}. 

\begin{figure}[!t] \centering
\resizebox{0.7\columnwidth}{!}{ \includegraphics{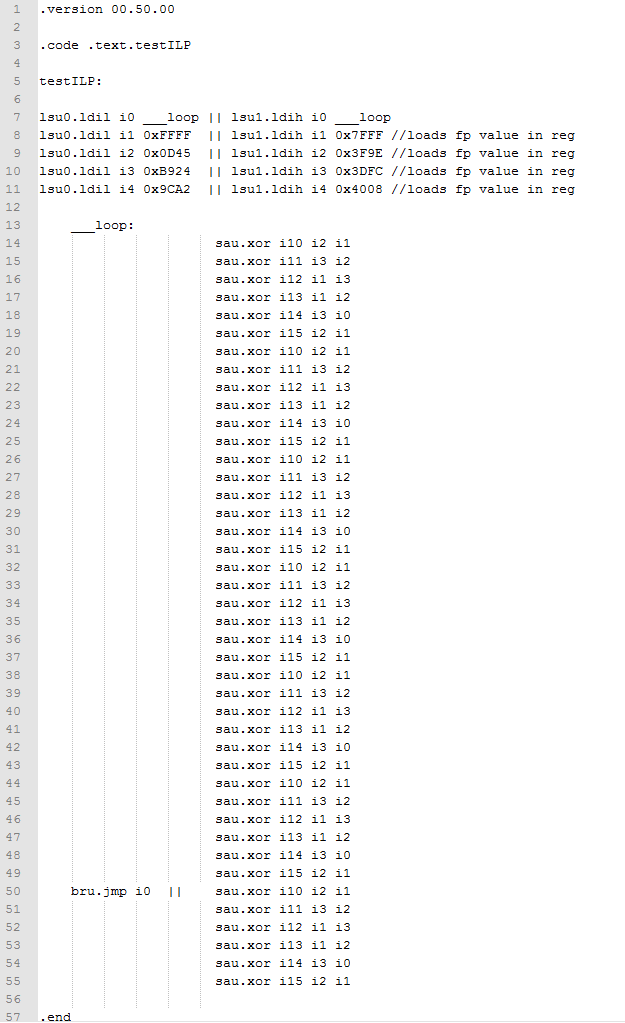}}
\caption{ An example of micro-benchmark is written in asm code to perform SauXor operations}
\label{fig:MicrobenchmarkCode}
\end{figure}

\subparagraph{Computation micro-benchmarks}

Applying this model to the computational micro-benchmarks for each of operation units, the model relative error is plotted in the Figure ~\ref{fig:OperationUnitError}. Having all parameter data from Equation ~\ref{eq:Pstat}, ~\ref{eq:Pact} and Table ~\ref{table:Punits}, the model-predicted values are computed. Relative errors are the percentage difference between the actual power consumption measured by device and the predicted values from the model. 

Under this model, the relative error which is the percentage difference between measured and estimated values varies from -5\% to 6\% which is an 11\% margin. This proves that the model can be applied for micro-benchmarks running a single unit or any combination of two or three units in parallel. This model shows the compositionality of power consumption not only for multiple SHAVE cores but also for multiple operation units within a SHAVE.

\begin{figure}[!t] \centering
\resizebox{0.6\columnwidth}{!}{ \includegraphics{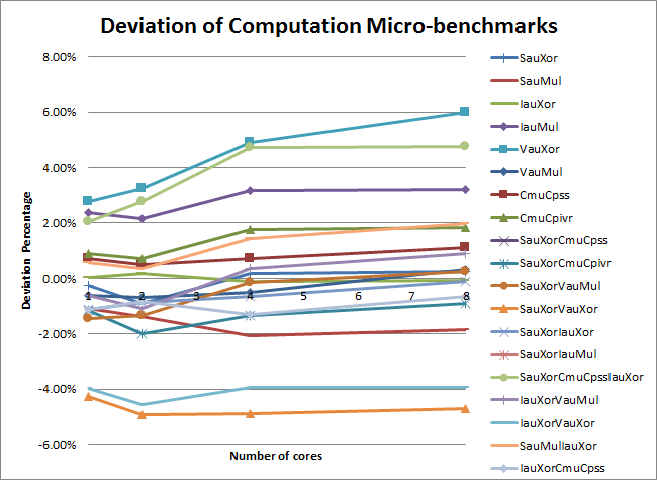}}
\caption{ Relative error of the model for computational micro-benchmarks}
\label{fig:OperationUnitError}
\end{figure}

\subparagraph{Data movement micro-benchmarks}
Data movement micro-benchmarks are the micro-benchmarks running LSU units to transfer data from CMX memory to core registers. The micro-benchmarks are designed to execute load/store operation only or with other functional units. When executed with other computation units, the instructions are in different pipelines to make both LSU and computation units executed in parallel manner.

Since each of the eight SHAVE cores in Myriad1 can access CMX data with its own port, this matches the Equation ~\ref{eq:Pdata_CMX} as presented in Section ~\ref{sec:Data-Movement-Power}.

 The experimental results of LSU also follow the power model shown in Figure ~\ref{fig:DataMovementError}. The relative error of data movement micro-benchmarks  varies from -4\% to 6\% which is a 10\% margin. This proves that the model can be applied to Myriad1 platform accessing CMX data.

\begin{figure}[!t] \centering
\resizebox{0.6\columnwidth}{!}{ \includegraphics{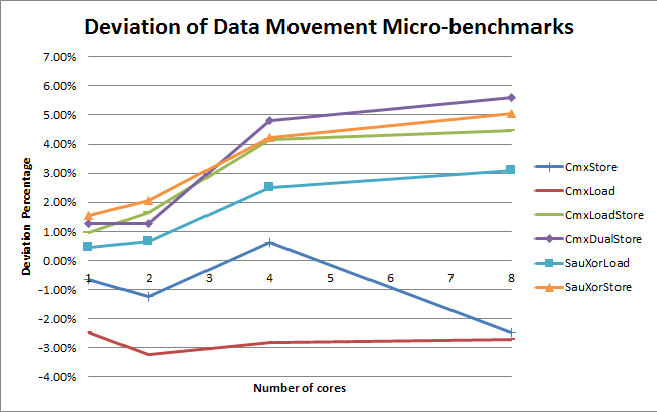}}
\caption{ Relative error of the model for data movement micro-benchmarks}
\label{fig:DataMovementError}
\end{figure}

\subparagraph{Intensity-based micro-benchmarks: the combination of computation and data movement}
Since any application requires both computation and data movement, we design micro-benchmarks which execute both computation and data movement units. Unlike the combination of multiple units in different instruction pipelines as computational micro-benchmarks, this set of micro-benchmarks contains sequential instructions to trigger functional units in sequential order. With this way, the assembly code does not use the pipeline optimization feature of Myriad1 platform and therefore, provide raw performance for a more precise analysis.

In order to validate the power model for any algorithms, the micro-benchmarks used in experiments at this phase also indicate different operation intensities. Operation intensity $I$ is retrieved from the assembly code by counting the number of arithmetic instructions such as Xor, Mul and the number of LSU instructions. The number of arithmetic instructions indicate the amount of work $W$ and LSU instructions multiplied by four (each LSU instruction load 32 bits which equals to four bytes) indicate the number of accessed data bytes $Q$. By changing the ratio of arithmetic instructions to LSU instructions, we created micro-benchmarks with different intensity values. A sample code of micro-benchmark with intensity $I=0.25$ is given in Figure ~\ref{fig:MicrobenchmarkIntensityCode}. 

\begin{figure}[!t] \centering
\resizebox{0.7\columnwidth}{!}{ \includegraphics{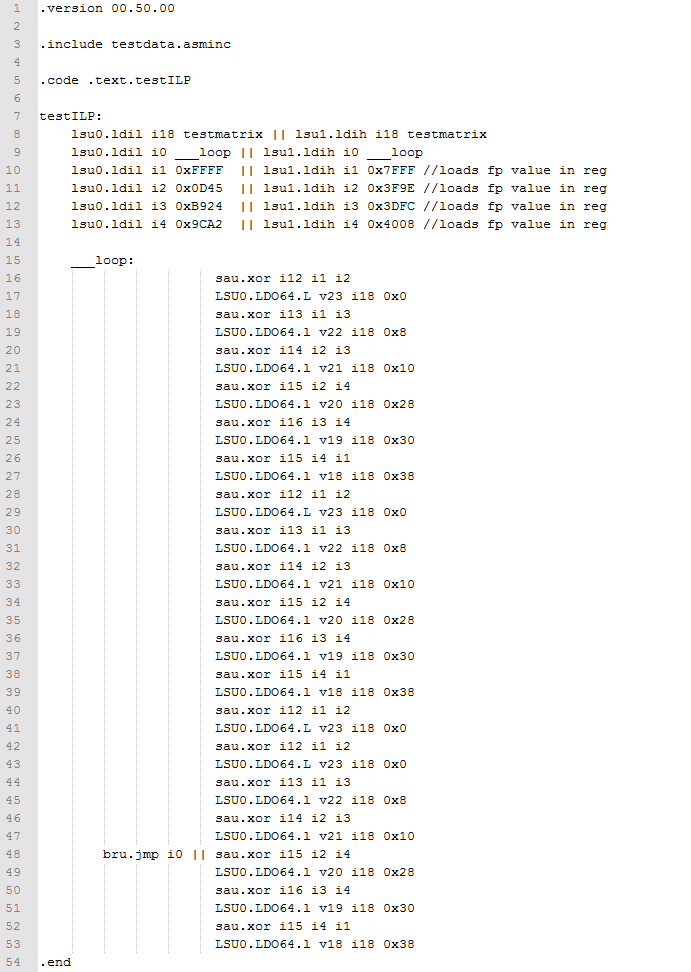}}
\caption{ An example of micro-benchmark is written in asm code with intensity $I=0.25$}
\label{fig:MicrobenchmarkIntensityCode}
\end{figure}

Intensity-based micro-benchmarks using CMX memory to store data has experimental results shown in Figure~\ref{fig:CMXError}. Since eight SHAVE cores have eight accessing ports to CMX memory, cores do not wait for memory fetching and there is no contention power. The power model of Myriad1 using CMX memory is the Equation \ref{eq:PCMX}.

\begin{equation} 
\label{eq:PCMX} 
\begin{split}
	P&= P^{sta} + n \times (P^{act}+P^{dyn}_{SHV}) \times (\frac{I}{\alpha+I}) \\
	&+ n \times (P^{act}+P_{LSU}) \times (\frac{\alpha}{\alpha+I})
\end{split}
\end{equation}

In the CMX power model in Equation \ref{eq:PCMX}, all parameters are known except $\alpha$. By using non-regression techniques in Matlab function \textit{lsqcurvefit} to find unknown parameters, $\alpha$ is identified as equal to one. This is reasonable since accessing four bytes of data in CMX requires average four cycles \cite{D4.1} 
which means one cycle per byte and an operation used in micro-benchmarks (e.g. SauXor) also requires one cycle to be executed.

We plot the model deviation error of Myriad using CMX in Figure \ref{fig:CMXError}. The relative error of intensity micro-benchmarks varies from -9\% to 4\% which is a 13\% margin. This proves that the model can be applied to estimate the power consumption of an algorithm on Myriad1 platform accessing CMX data.

\begin{figure}[!t] \centering
\resizebox{0.6\columnwidth}{!}{ \includegraphics{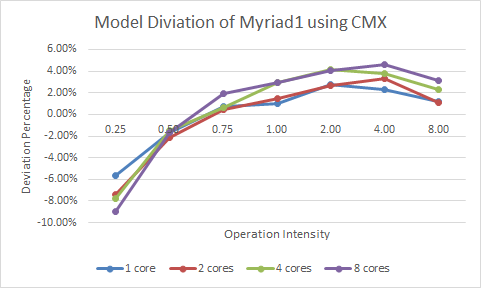}}
\caption{ Relative error of the model for intensity micro-benchmarks using CMX}
\label{fig:CMXError}
\end{figure}

Intensity-based micro-benchmarks using DDR memory to store data has the experimental results as shown in Figure~\ref{fig:DDRError}. Since eight SHAVE cores share the same data bus to DDR memory, contention power might happen when cores wait for memory fetching. The power model of Myriad1 using DDR memory is the Equation \ref{eq:PDDR}.

\begin{equation} 
\label{eq:PDDR} 
\begin{split}
	P&= P^{sta} + n \times (P^{act}+P^{dyn}_{SHV}) \times (\frac{I}{\alpha+I}) \\
	&+ (max(m,n) \times (P^{act}+P_{LSU}) + max(n-m, 0) \times P^{ctn}) \times (\frac{\alpha}{\alpha+I})
\end{split}
\end{equation}

In the DDR power model in Equation \ref{eq:PDDR}, there are unknown parameters such as $\alpha$, $m$ and $P^{ctn}$. By using non-linear regression techniques, $\alpha$ and $P^{ctn}$ are identified for each intensity value. 
Figure \ref{fig:MathlabDDRError} shows how well the predicted data fits to measured data with the found parameter values.
We plot the model deviation error of the model for intensity micro-benchmarks using DDR memory in Figure \ref{fig:DDRError}. The relative error of intensity micro-benchmarks varies from -16\% to 14\%. The model has high accuracy at intensity lower than 1 and higher deviation with micro-benchmarks running two cores. This requires more investigation. However, for each intensity, the deviation is less than 24\% margin.

\begin{figure}[!t] \centering
\resizebox{0.6\columnwidth}{!}{ \includegraphics{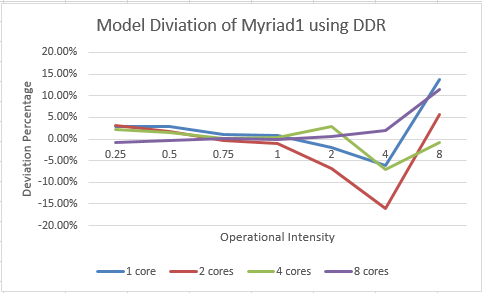}}
\caption{ Relative error of intensity micro-benchmarks using DDR}
\label{fig:DDRError}
\end{figure}

The found tunable parameters such as $\alpha$, $m$ and $P^{ctn}$ are summarized in the Table~\ref{table:TunableParameters}. The parameter values are derived from experimental results on micro-benchmarks using Matlab function $lsqcurvefit$.
\begin{table}[h]
\begin{center}
\begin{tabular}{|l|l|l|l|}
\hline
\textbf{Intensity} 	& \textbf{$P^{ctn}$} & \textbf{$\alpha$} & \textbf{$m$}	\\\hline
0.25               	& 0.1 &0.91 &1         \\ \hline
0.5                	& 0.1 &1.72 &1         \\ \hline
0.75               	& 0.1 &2.49	&1         \\ \hline
1               	& 1.97 &3.87	&1         \\ \hline
2               	& 5.25 &10	&1         \\ \hline
4               	& 0.1 &10	&1         \\ \hline
8               	& 0.1 &10	&2         \\ \hline
\end{tabular}
\caption{Values of Model Parameters derived from micro-benchmarks}
\label{table:TunableParameters}
\end{center}
\end{table}

\begin{figure}[!t] \centering
\resizebox{0.6\columnwidth}{!}{ \includegraphics{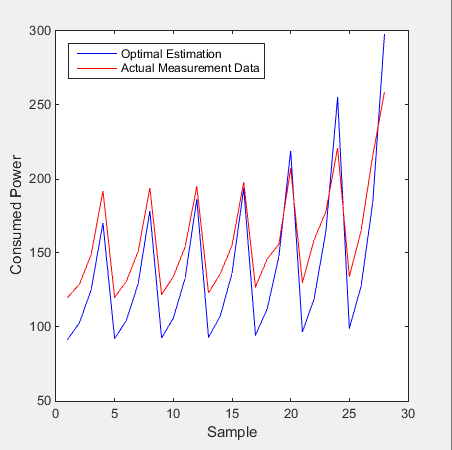}}
\caption{ Parameter fitting over the sample of intensity micro-benchmarks on DDR}
\label{fig:MathlabDDRError}
\end{figure}

\paragraph{Experiment with application benchmarks}
We want to analyze application benchmarks that represent for typical applications. Therefore, we follow two metrics to choose the applications for our investigation.
Since the energy consumption of an application depends on its operation intensity I, it is one main factor to choose application benchmarks. The energy consumption not only depends on the power but also program execution time. Therefore, the performance speed-up of an application is another main factor we consider to choose application benchmarks. 
The three benchmarks we choose are Matrix Multiplication, Sparse Matrix Vector Multiplication and Breadth First Search included in Berkeley Dwarfs ~\cite{Asa06} as shown in Figure ~\ref{fig:ApplicationCategories}. Matrix Multiplication is proved to have high operation intensity and high performance speed-up ~\cite{Ofenbeck:14} while Sparse Matrix Vector Multiplication has low intensity and  high speed-up due to its parallel scalability ~\cite{5348797}. Breadth First Search, on the other hand, has low intensity and saturated low scalability ~\cite{Cook:2013:HEC:2485922.2485949}.  Since applications with high intensity and low speed-up are typical sequential applications, which are not our target applications, so we do not include them in our validation.

\begin{figure}[!t] \centering
\resizebox{0.6\columnwidth}{!}{ \includegraphics{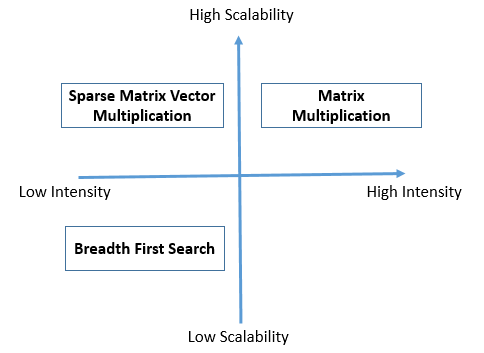}}
\caption{Application Categories}
\label{fig:ApplicationCategories}
\end{figure}

\subparagraph{Matrix multiplication}
Matrix multiplication (matmul) has been implemented based on the optimized assembly code for matmul using small blocks of data. Each block is the small submatrix of size $4\times4$. The matmul algorithm computes matrix C based on two input matrices A and B. All three matrices in this benchmarks are stored in DDR RAM. The operation intensity I of the matmul algorithm increase linearly with the matrix size.
We store each matrix element with float type which means four bytes. The number of operations and accessed data are calculated based on matrix size $n$ as: $W=2\times n^3$ and $Q=16\times n^2$. Intensity of matmul is also varied with matrix size as: $I = \frac{W}{Q} = \frac{n}{8}$  ~\cite{Ofenbeck:14}. In our experiments, we measured power consumption of matmul with varied matrix size. Figure ~\ref{fig:Matmul-8cores} is the predicted power consumption of matmul over intensity values using our power model. As shown in the diagram, power consumed by computation is the main contribution to the total power. With increased intensity, total power and computation power increase but power from accessing data decreases.

Apply the model by using the model parameters derived from micro-benchmarks, the deviation error of $matmul$ estimated power compared to measured data on 8 cores is from 24\% to 59\% as shown in Figure ~\ref{fig:MatmulDeviation}. The highest deviation occurs at size $16 \times 16$ because it is performed in short execution time and the multimeter device can not capture the real consumed power. This also happens for $matmul$ with matrix size less than $16 \times 16$. The deviation occurs because parameters have not considered the data-access patterns to access the same data or different data. Therefore, the model accuracy can be improved when we consider the data-access patterns and find tuning parameters such as $\alpha$, $P_{ctn}$ and $m$ based on actual measurements of the application.

\begin{figure}[!t] \centering
\resizebox{0.6\columnwidth}{!}{ \includegraphics{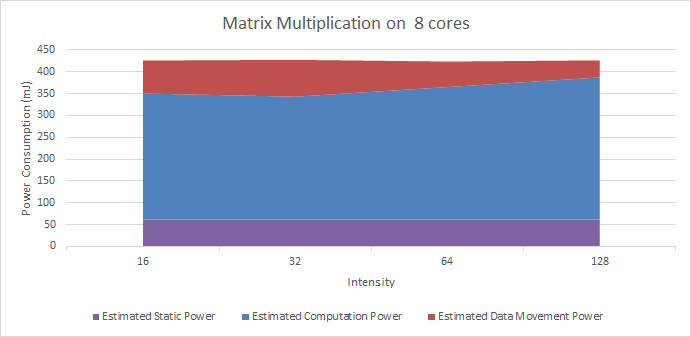}}
\caption{Power Analysis of Matrix Multiplication with 8 cores. Since the intensity of $matmul$ is also varied with matrix size as: $I = \frac{W}{Q} = \frac{n}{8}$, its consumed power is also varied by the intensity. The red and blue stacked lines are the estimated power consumed by data movement and computation which contribute to the total power respectively.}
\label{fig:Matmul-8cores}
\end{figure}

\begin{figure}[!t] \centering
\resizebox{0.6\columnwidth}{!}{ \includegraphics{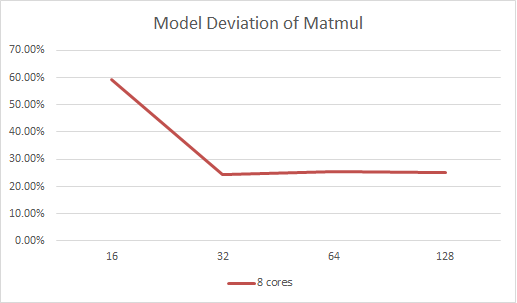}}
\caption{Deviation of estimated power from measured power for Matrix Multiplication}
\label{fig:MatmulDeviation}
\end{figure}

\subparagraph{Sparse matrix vector multiplication}
Sparse matrix vector multiplication (SpMV) is another kernel benchmark that we implement on Myriad1. All input matrix and vector of this benchmarks reside in DDR RAM. The operation intensity $I$ of SpMV algorithm does not depend on matrix size. The data layout of matrix in this implementation is compressed sparse row (CSR) format.
Each element of matrix and vector is also stored with float type. From our implementation, the number of operations and accessed data are calculated based on the size of one matrix dimension $n$ as: $W=10\times n$ and $Q=13 \times 4 \times n$. The intensity of SpMV does not depend on matrix size and is a fixed value: $I = \frac{W}{Q} = \frac{5}{26} = 0.19$. As the result, predicted power consumption of SpMV with varied matrix sizes is the same over matrix size as shown in Figure ~\ref{fig:SpMV-8cores}. 

Each element of matrix and vector is also stored with float type. From our implementation, the number of operations and accessed data is calculated based on the matrix size $n\times n$ as: $W=10\times n$ and $Q=13 \times 4 \times n$. Intensity of $SpMV$ is not dependent on matrix size and is a fixed value: $I = \frac{W}{Q} = \frac{5}{26} = 0.19$. Since the intensity value is small, power consumed by accessing data contributes significantly to the total power. The deviation error of $SpMV$ estimated power compared to measured power is from 15\% to 23\% which is an 8\% margin shown in Figure ~\ref{fig:SpMVDeviation}. Since we want to predict the trend of power consumption, an 8\% margin is acceptable for us to identify when to use the race-to-hold (RTH) condition \cite{D1.3}.

\begin{figure}[!t] \centering
\resizebox{0.6\columnwidth}{!}{ \includegraphics{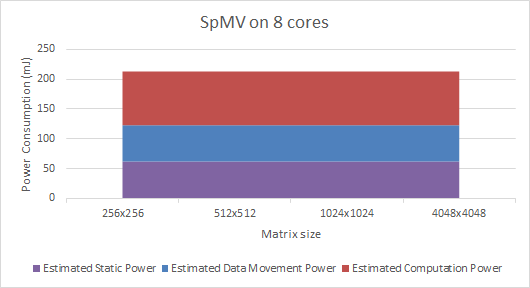}}
\caption{ Power analysis of Sparse Matrix Vector Multiplication}
\label{fig:SpMV-8cores}
\end{figure}

\remove { 
\begin{figure}[!t] \centering
\resizebox{0.6\columnwidth}{!}{ \includegraphics{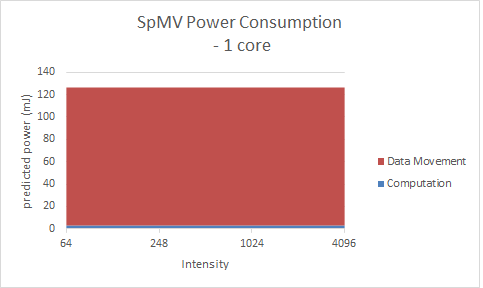}}
\caption{Power Analysis of Sparse Matrix Vector Multiplication with 8 cores. The red and blue stacked lines are the estimated power consumed by data movement and computation which contribute to the total power respectively.}
\label{fig:SpMV-1core}
\end{figure}
} 

\begin{figure}[!t] \centering
\resizebox{0.6\columnwidth}{!}{ \includegraphics{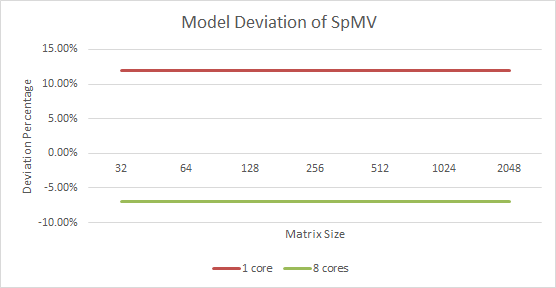}}
\caption{Deviation of estimated power from measured power of Sparse Matrix Vector Multiplication}
\label{fig:SpMVDeviation}
\end{figure}

\subparagraph{Breadth First Search.}
We also implement the Graph500 kernel, namely Breadth First Search (BFS), on Myriad1. BFS is the graph kernel to explore the vertices and edges of a directed graph from a starting vertex. We use the algorithm in the current Graph500 benchmark and port it to Myriad1. The graph is stored in DDR RAM. There are two properties that define the size of a graph: degree and scale. Scale identifies the number of nodes in the graph while degree is the average number of edges from a node. In our experiments, we mostly use the default scale of 16 from the Graph500 ~\cite{Beamer:2012:DBS:2388996.2389013}, which means the graph has $2^{16}$ vertices. The degree is varied from 14 to 17. The operation intensity $I$ of BFS algorithm does not depend on degree or scale. 
Each vertex data is stored with float type. From our implementation, the number of operations and accessed data are calculated based on the number of vertices $m$ and the number of edges $n$: $W=2 \times m + 5 \times n$ and $Q=8 \times m+ 16 \times n$. Intensity of BFS is a fixed value: $I = \frac{W}{Q} = 0.257$. As the result, predicted power consumption of BFS with varied degrees is the same over matrix size as shown in Figure ~\ref{fig:BFS-8cores}. 
Power consumed by accessing data contributes significantly to the total power. Figure ~\ref{fig:BFSDeviation} is the deviation error of the power model for BFS compared to measured data, which ranges from -14\% to 17\%. 

\begin{figure}[!t] \centering
\resizebox{0.6\columnwidth}{!}{ \includegraphics{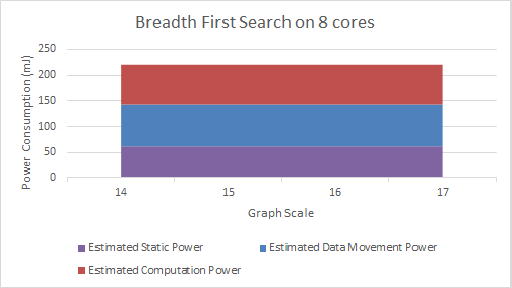}}
\caption{ Power analysis of Breadth First Search}
\label{fig:BFS-8cores}
\end{figure}

\remove{ 
\begin{figure}[!t] \centering
\resizebox{0.6\columnwidth}{!}{ \includegraphics{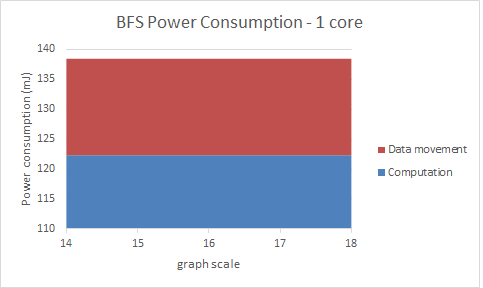}}
\caption{Power Analysis of Breadth First Search with 8 cores. The red and blue stacked lines are the estimated power consumed by data movement and computation which contributes to the total power respectively.}
\label{fig:BFS-1core}
\end{figure}
} 

\begin{figure}[!t] \centering
\resizebox{0.6\columnwidth}{!}{ \includegraphics{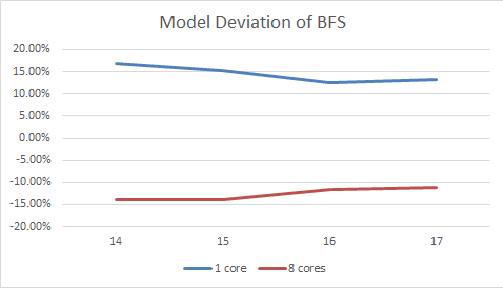}}
\caption{Deviation of estimated power from measured power of Breadth First Search}
\label{fig:BFSDeviation}
\end{figure}

\paragraph{Remarks and future works}

We have presented a power model for Myriad1 platform. In details, we have described the process to build the model and evaluate it. We have first validated the model with three sets of micro-benchmarks. We have also implemented three application kernels on Myriad1 platform, along with analyzing the amount of work $W$, the number of data accesses $Q$ and the intensity $I$ for each algorithm. Then we have applied our model to the implemented algorithms and evaluated its accuracy. 

In the future, we plan to use the model with EXCESS execution framework described in Deliverable D3.1 and D3.2. In that context, the model can use the feedback data of consumed power measured by the framework to find the accurate value of $\alpha$ and  $P^{ctn}$ and $m$ for a given platform. We also plan to improve the accuracy of the model by considering memory-access patterns of the implementation and instruction-pipeline parallelism. 



\subsection{Energy Model for Lock-Free Queues on CPU Platform}
We consider the problem of modeling the energy behavior of
lock-free concurrent queue data structures, more especially,
of lock-free queue implementations and parallel applications that use
them. Focusing on steady state behavior we decompose energy behavior
into throughput and power dissipation which can be modeled separately
and later recombined into several useful metrics, such as energy per
operation. Based on our models, instantiated from synthetic benchmark
data, and using only a small amount of additional application specific
information, energy and throughput predictions can be made for
parallel applications that use the respective data structure
implementation.

To model throughput we propose a generic model for lock-free queue
throughput behavior, based on a combination of the dequeuers' throughput
and enqueuers' throughput.
To model power dissipation we commonly split the contributions from the various
computer components into static, activation and dynamic parts, where
only the dynamic part depends on the actual instructions being
executed.
To instantiate the models a synthetic benchmark explores each queue
implementation over the dimensions of processor frequency and number
of threads.

Finally, we show how to make predictions of application throughput and
power dissipation for a parallel application using a lock-free queue
requiring only a limited amount of information about the
application work done between queue operations. Our case study on a
Mandelbrot application shows convincing prediction results.

\label{sec:cpu-model}

In this Section~\ref{sec:cpu-model}, we present a so-called white-box
model for performance and power of lock-free queue
implementations. This model is white-box, in the sense that the
phenomena that drive the evolution of the throughput and the power
dissipation are clearly identified, and their contribution to both
metrics is transparently stated. The Section~\ref{sec:cpu-model-inst}
will then explain how to set the parameters of the model with only a
few runs of the synthetic benchmark.

\subsubsection{Motivation and Preliminaries}

Lock-free implementations of data structures are a scalable approach for
designing concurrent data structures. Lock-free data structures offer
high concurrency and immunity to deadlocks and convoying, in
contrast to their blocking counterparts.
Concurrent FIFO queue data structures are fundamental data structures
that are key components in applications, algorithms, run-time and
operating systems. The producer/consumer pattern, \eg, is a common
approach to parallelizing applications where threads act as either
producers or consumers and synchronize and stream data items between
them using a shared collection.
A concurrent queue, \aka shared ``first-in, first-out'' or FIFO
buffer, is a shared collection of elements which supports at least the
basic operations \op{Enqueue} (adds an element) and \op{Dequeue} (removes
the oldest element). \op{Dequeue} returns the element removed or, if the
queue is empty, \nulle.
A large number of lock-free (and wait-free) queue implementations have
appeared in the literature,
\eg~
\cite{Val94,lf-queue-michael,TsiZ01b,MoirNSS:2005:elim-queue,%
DBLP:conf/opodis/HoffmanSS07,Gidenstam10:OPODIS} being some of the
most influential or most efficient results.
Each implementation of a lock-free queue has obviously its strong and weak
points so the impact on performance and energy when choosing one
particular implementation for any given situation may not be obvious.

As the number of known implementations of lock-free concurrent
queues is growing, it is of great interest to describe
a framework within which the different implementations can be ranked,
according to the parameters that characterize the situation.
A brute force approach could achieve this by running the
implementations on hand on the whole domain of study, gathering and comparing
measurements. This would yield high accuracy, but at a tremendous
cost, since the domain is likely to be large. Additionally, it would
only bring a limited understanding on the phenomena that drive the
behavior of the queue implementations.
Therefore, we propose generic models for predicting the behavior of
lock-free queues under steady state usage. The models are
instantiated for the queue implementations and machine on hand using
empirical data from a limited number of points in the domain.

The implementations can be ranked according to a plethora of metrics.
Traditionally, performance in terms of throughput has been the main metric.
Furthermore, the notion of energy efficiency has now extended into
every nook and cranny of Information Technology, at any scale, from
the Exascale machines that need huge improvements in terms of power
dissipation to be feasible~\cite{IESProadmap}, to the small electronic
devices where the battery lifetime is a critical issue.

We decompose the energy behavior of queues, and subsequently
applications, into two components: (i)~throughput and (ii)~power
dissipation. We model these components separately.
The predicted throughput and power dissipation can be recombined into
the energy-efficiency metric energy per queue operation, which is the
ratio between power dissipation and queue throughput. When modeling
an application, this metric can be extended to energy per unit of
application work. Further, plotting energy per operation or unit of
work according to throughput allows exploration of the Pareto-optimal
frontier of the energy$-$performance bi-criteria optimization problem
for the queues or the application.

Lock-free queue data structures generally offer disjoint-access parallelism: enqueuers and
dequeuers modify only their respective ends of the queue, and compete mostly with operations of
the same kind. Nonetheless, when the queue is close to empty, both ends point to the same
part of the queue, then enqueue and dequeue operations have to be synchronized, and every
operation impacts the behavior of any other.

Concerning the queue as a whole, a successful event can be seen as the dequeue of a
non-\nulle item, since this event implies that the item has been enqueued and
dequeued. Also, the throughput of the queue is naturally defined as the number of such
events per unit of time, which is a meaningful performance criterion for queues.

In this work, we focus on queues that are in a steady state, \ie such that the rate of each
operation attempt is constant. Then, the throughput \thr of the queue is the minimum
between the throughput of all dequeues \thrd, even those returning \nulle,
and throughput of enqueues \thre. Indeed, if $\thre > \thrd $, then the
queue grows and the throughput is determined by the dequeuers, which cannot obtain any
\nulle items; and if $\thre \leq \thrd $, then the queue is mostly empty and \nulle items
are dequeued, but the throughput is determined by the enqueuers.

Despite this decomposition, enqueuers' and dequeuers' throughput are still correlated when
the queue is mostly empty. In addition, the interactions between them are rather
asymmetric, as in broad terms, an enqueue can be delayed by any concurrent dequeue, while
for a dequeue, concurrent enqueues will cease to disturb it as they move away from the
dequeue end. 

Based on these facts, we decorrelate the throughput into several uncorrelated and basic
throughputs, and reconstitute the main throughput by combining them. Among the advantages
of this process, we earn a better understanding of the performance (as the basic
throughputs are meaningful), and we reduce the number of measurements needed to
instantiate the model on the whole domain of study.

The domain of study that we envision here can be viewed as the Cartesian product of four
sets: (i) number of threads accessing the queue, (ii) CPU frequencies, (iii) a range of dequeue access rates, (iv) a range of enqueue access rates.
The cardinality of the first two sets is at most a few tens, while the last two are
continuous sets that are not even bounded. Thanks to the removal of the
dependencies between throughputs, we are able to instantiate the model with only a few
data points, while the model covers the whole intervals.

Finally, this decomposition also eases the study of power dissipation, where we reuse
the same ideas as in the throughput estimation part.



\subsubsection{Framework}
\label{sec:pb-stat}

\paragraph{Synthetic Benchmark}

\subparagraph{Skeleton}





\begin{figure}[h!]
\begin{minipage}{.45\textwidth}
\begin{algorithmic}
\Procedure{Enqueuer}{}
\While{!done}
	\State \texttt{Parallel\_Work}()\;
	\State \texttt{Enqueue}()\;
\EndWhile
\EndProcedure
\end{algorithmic}
\end{minipage}\hfill%
\begin{minipage}{.45\textwidth}
\begin{algorithmic}
\Procedure{Dequeuer}{}
\While{!done}
	\State $\mathit{res} \leftarrow \texttt{Dequeue()}$\;
        \If{$\mathit{res} \neq \textsc{Null}$}
        	\State \texttt{Parallel\_Work}()\;
        \EndIf
\EndWhile
\EndProcedure
\end{algorithmic}
\end{minipage}%
\caption{Queue benchmark\label{alg:gen-ed}}
\end{figure}

\newcommand{\paus}{{\it pause}\xspace}


We run the synthetic benchmark composed of the two functions described in
Figure~\ref{alg:gen-ed}, starting with an empty queue. Half of the threads are assigned to be enqueuers while the
remaining ones are dequeuers. We disable logical cores (hyper-threading) and map different threads into
different cores, also the number of threads never exceeds the number of cores. In addition,
the mapping is done in the following way: when adding an enqueuer/dequeuer pair, they are
both mapped on the most filled but non-full socket.

The \pss (\FuncSty{Parallel\_Work}) shall be seen as a processing activity, pre-processing
for the enqueuers before they enqueue an item, and post-processing on an item from the
queue for the dequeuers. We assume that memory accesses in the \pss are negligible, and
represent the \pss as sequences of bunches of \paus{} instructions in the benchmark; we
note \pwe (resp. \pwd) the number of bunches of $90$ \paus{}s (which corresponds to $1000$
cycles) that compose the parallel work in the enqueuer (resp. dequeuer).

From a high-level perspective, \FuncSty{Enqueue} and \FuncSty{Dequeue} operations follow a
\rl pattern: a thread reads an access point to the \ds, works locally with this view of
the \ds, possibly performs memory management actions and prepares the new desired value as
an access point of the \ds. Finally, it atomically tries to perform the change through a call
to the \cas primitive. If it succeeds, \ie if the access point has not been changed by
another thread between the first read and the \cas, then it goes to the next \ps, otherwise
it repeats the process.

\subparagraph{Queue Implementations}


\leaveout{ 
We have considered the following queue implementations which will be
described in some detail in Section~\ref{sec:shm-queue-algs} below:
\begin{itemize}
\item {\bf a0}. Lock-free and linearizable queue by Michael and
  Scott~\cite{Michael96}.
\item {\bf a1}. Lock-free and linearizable queue by Valois~\cite{Val94}.
\item {\bf a2}. Lock-free and linearizable queue by Tsigas and
  Zhang~\cite{TsiZ01b}.
\item {\bf a3}. Lock-free and linearizable queue by Gidenstam
  \etal~\cite{Gidenstam10:OPODIS}.
\item {\bf a5}. Lock-free and linearizable queue by Hoffman
  \etal~\cite{DBLP:conf/opodis/HoffmanSS07}.
\item {\bf a6}. Lock-free and linearizable queue by Moir
  \etal~\cite{MoirNSS:2005:elim-queue}.
\leaveout{
\item {\bf a0}. Lock-free and linearizable queue by Michael and
  Scott~\cite{Michael96}.
\item {\bf a1}. Lock-free and linearizable queue by Valois~\cite{Val94}.
\item {\bf a2}. Lock-free and linearizable queue by Tsigas and
  Zhang~\cite{TsiZ01b}.
\item {\bf a3}. Lock-free and linearizable queue by Gidenstam
  \etal~\cite{Gidenstam10:OPODIS}.
\item {\bf a4}. Lock-free and linearizable queue by Hoffman
  \etal~\cite{DBLP:conf/opodis/HoffmanSS07}.
\item {\bf a5}. Lock-free and linearizable queue by Moir
  \etal~\cite{MoirNSS:2005:elim-queue}.
\item {\bf a6}. Lock-based (and linearizable) queue.
\item {\bf a7}. Lock-free and linearizable stack by Michael~\cite{Mic04b}.
\item {\bf a8}. Lock-free and linearizable stack by Hendler
  \etal~\cite{HenSY10}.
\item {\bf a9}. Lock-free and linearizable bag by Sundell
  \etal~\cite{Sundell11}.
\item {\bf a10}. Lock-free EDTree (\aka pool or bag) by Afek
  \etal~\cite{AfeKNS10}.  }
\end{itemize}
} 


We study some of the most well-known and studied lock-free and linearizable
queues in the literature, as implemented in NOBLE~\cite{Sundell08}.
These queue algorithms are described in some detail in Section~\ref{sec:NOBLE}.
The aim of this work is still to predict the behavior of any lock-free queue algorithm and not
only the ones mentioned above. These algorithms are used to validate the model that we
present in the following sections.


When we speak about implementations of the queues, we actually refer to the
different implementations of enqueuing and dequeuing operations, along with
their memory management schemes.

\paragraph{General Power Model}
\label{sec:pow-mod}

The power is split into three elements: the {\it static} part is the cost of turning
the machine on, the {\it activation} part incorporates a fixed cost for each socket and each
core in use, and the {\it dynamic} part is a supplementary cost that depends on the
running application.

In accordance with the RAPL energy counters~\cite{DavidGoHaKhLe:2010:RAPL,%
  BrDoGaHoMu:2000:PAPI,Weaver:2012:MEP:2410139.2410475}, we further decompose each part
per-component, for memory, CPU, and {\it uncore} (denoted by a superscript M, C and U,
respectively):
\[ \pow{} = \sum_{X \in \{M,C,U\}} \left( \pstat{X} + \pact{X} + \pdyn{X}  \right). \]

We assume that we already know the platform characteristics, \ie all
static and active powers (they can be obtained as explained for
instance in D1.2~\cite{EXCESS:D1.2}), and we try to find the
application-specific dynamic powers.  In order to keep the formulas
readable, in the following, we denote by \pow{X} the dynamic power
\pdyn{X}.

\paragraph{Notations and Setting}

We denote by \nth the number of running threads that call the same
operation, and by \freq the clock frequency of the cores (we only consider the case where
all cores share the same clock frequency).

We recall that \pwe (resp. \pwd) is the amount of work in the \ps of an enqueuer
(resp. dequeuer), as the number of bunches of $90$ \paus{}s. For a given queue
implementation, we denote by \cwe (resp. \cwd) the amount of work in one try of the \rl of
the \FuncSty{Enqueue} (resp. \FuncSty{Dequeue}) operation. Associated with these amounts of
work, we define, for $\bx \in \{ \bd, \be \}$, the average execution time of the \ps (resp. the \rl
and a single try of the \rl) related to operation $\bx$ as \et{\psx} (resp. \et{\rlx} and
\et{\slx}).

In the same way, for $\bx \in \{ \bd, \be \}$, we denote by
\powx{C} (resp. \powxps{C} and \powxrl{C}) the dynamic CPU power dissipated by component $X$ in
(resp. the \ps related to and the \rl related to) operation $\bx$.

Finally, for $\bx \in \{ \bd, \be \}$, we denote by \ratx the ratio of the time that a thread spends in
the \rl, while it is associated with operation $\bx$.

In Sections~\ref{sec:thput} and~\ref{sec:power}, in order to keep expressions as simple as
possible, we define one unit of time as \second{\lambda}, where $\lambda$ is the execution
time of $90 \times f$ \paus{}s (as the \paus{} instructions are perfectly scalable with clock
frequency, $\lambda$ is constant). Throughput is expressed in number of operations per
unit of time, \ie per \seconds{\lambda}. Finally, we derive the power in Watts.

\newcommand{\terb}[1]{\ema{#1\,\text{TB}}}
\newcommand{\gigb}[1]{\ema{#1\,\text{GB}}}
\newcommand{\megb}[1]{\ema{#1\,\text{MB}}}
\newcommand{\kilb}[1]{\ema{#1\,\text{kB}}}
\newcommand{\megtps}[1]{\ema{#1\,\text{MTransfers}/\text{sec}}}

All experiments and their underlying predictions are done on \syscha (see
Section~\ref{sec:chalmers-system}), \ie a platform composed of a dual-socket
Intel\textsuperscript{\textregistered} Xeon\textsuperscript{\textregistered}
processor, with eight cores per socket.  The sizes of L3, L2 and L1 caches are
\megb{25}, \kilb{256} and \kilb{32}, respectively.

We run the implementations at the two extreme frequencies (excluding
Turbo mode) \ghz{1.2} and
\ghz{3.4}, for all possible even total numbers of threads, from 2 to 16, \ie for
$\nth \in \{1,\dots,8\}$.


\subsubsection{Throughput Estimation}
\label{sec:thput}

\paragraph{Throughput Decomposition Principles}
\label{sec:gen-thput}

\newcommand{\ercon}{inter-contention\xspace}
\newcommand{\racon}{intra-contention\xspace}

We recall that the throughput of the queue is defined as:
\[ \thr = \min \left( \thre, \thrd \right),\]
where \thre and \thrd are the enqueuers' and dequeuers' throughput, respectively.

As we are in steady state, one operation $o$ is performed every $\et{\psx} + \et{\rlx}$
unit of time by each thread, and \nth threads attempt to concurrently execute $o$, hence the general expression of
the throughput \thrx:
\[ \thrx = \frac{\nth}{\et{\psx} + \et{\rlx}}. \]

We have seen that the \pss of the benchmark are full of \paus{}s, thus the time \et{\psx} spent in a given \ps
is straightforwardly given by  $\et{\psx} = \pwx/\freq$.
The execution time of dequeue and enqueue operations is more problematic, for two main
reasons. {\it Primo}, because of the lock-free nature of the implementations. As the number of
retries is unknown, the time spent in the function call is not trivially
computable.
{\it Secundo}, when the activity on the queue is high, the threads compete for accessing a
shared data, and they stall before actually being able to access the data. We name this as
the \textit{expansion}, as it leads to an increase in the execution time of a single try
of the \rl.


The contention on the queue is twofold. At any time, and even if it could be negligible,
threads that perform the same operation disturb each other, since they try to access the
same shared data.
In addition, when the queue is mostly empty, enqueuers and dequeuers try to access the
same data, then interference occurs; enqueuers make dequeuers stall and {\it vice
  versa}.
We call the former case {\it \racon}, and the latter one {\it \ercon}.

As expected, we have noticed a marked difference between the execution time of a
dequeue operation returning \nulle and one that returns a queue item, \ie whether the
queue was empty or contained at least one item.
That is why we decompose \thrd into throughput of dequeue on empty queue \thrde
(that returns a \nulle item), and dequeue on non-empty queue \thrdne (that does not
return \nulle).

Further, the impact of \ercon on dequeue operations is negligible compared to the
impact of the queue being empty; therefore we ignore \ercon for dequeues.

In contrast, the queue being empty does not notably change the execution time of the
enqueue operation, while dequeue operations can impact the behavior of concurrent
enqueue operations greatly when the queue is close to empty. Hence, we split \thre into the
enqueue throughput \thrend when the queue is not inter-contended, and the enqueue
throughput \thred when the queue experiences the maximum possible \ercon.

These basic throughputs fulfill the two following inequalities: $\thrde \geq \thrdne$ and
$\thrend \geq \thred$.

Thanks to this separation into the four basic throughput cases \thrde, \thrdne, \thrend
and \thred, we earn a better understanding of the factors that influence the general
throughput, and we deinterlace their dependencies, which dramatically decreases the number
of points in the \ps sizes set where we need to take measurements for our modeling.
%
More precisely, by construction, \thrde and \thrdne do not indeed depend on \pwe, while
\thrend and \thred do not depend on \pwd. Nonetheless \thrd (resp. \thre) is defined as a
barycenter between \thrde and \thrdne (resp. \thred and \thrend), whose weights depend on
both \pwd and \pwe.

In Section~\ref{sec:bas-thput}, we describe the basic throughputs, we combine them in
Section~\ref{sec:comb-thputs}, then we explain how to instantiate the parameters of the
model in Section~\ref{sec:instantiation}, and finally exhibit results in
Section~\ref{sec:thput-res}.

\paragraph{Basic Throughputs}
\label{sec:bas-thput}

We aim in this section at estimating the throughput \thrxy of one of the basic operations
described in the previous subsection, where $\bx \in \{ \be, \bd\}$ and $\bb \in \{ \ub,
\lb\}$. We assume that \thrxy depends only on \pwx, in addition to the tacit dependencies
on the clock frequency, number of threads and queue implementation.
We denote by \cwxy the amount of work in a single try of the \rl related to operation
$\bx$ in case $\bb$ when the queue is not intra-contended.

\subparagraph{Low Intra-Contention}
\newcommand{\accns}{\ema{a}}
\newcommand{\accfs}{\ema{a'}}
\newcommand{\bccfs}{\ema{b'}}

We study in this section the low \racon case, \ie when (i) the threads do not suffer from
expansion due to threads that perform the same operation, and (ii) a success is obtained
with a single try of the \rl. As it appears in Figure~\ref{fig.sch-lc}, we
have a cyclic execution, and the length of the shortest cycle is $\et{\psx} +
\et{\slxy}$. Within each cycle, every thread performs exactly one successful operation,
thus the throughput is straightforward:
\begin{equation}
\thrxy = \frac{\nth}{\et{\psx} + \et{\slxy}} = \frac{\nth \freq}{\pwx + \cwxy}. \label{eq.thr-lc}
\end{equation}



\newcommand{\extth}[3]{%
\checkendxt(#1)
\edef\prev{\cachedata}
\pgfmathparse{\prev+#2}
\endxt(#1)={\pgfmathresult}
\checkendxt(#1)
\edef\nene{\cachedata}
\draw[#3] (\prev,\yt[#1]) -- (\nene,\yt[#1]);
}

\def\yt{{0.5,4.5,3,1.5}}
\def\morea{.8}
\def\moreb{.5}
\def\cle{2.4}
\def\ple{7.6}
\def\prevend{0}

\begin{figure}[t!b]
\begin{center}
\begin{tikzpicture} [scale=0.6, font=\small, thick, par/.style={blue,|-|}, csu/.style={green,|-|},%
    cfa/.style={red,|-|}, ini/.style={blue,-|}, finc/.style={green,|-}, finp/.style={blue,|-}]
\newarray\endxt
\expandarrayelementtrue
\readarray{endxt}{0&0&0&0}
\newarray\wpl

\draw[<->] (\cle,\yt[0]) -- ++ (\cle+\ple,0)  node[midway,fill=white] {Cycle};

\node [above, font=\small, text width=20, align=center] at (.5*\cle,\yt[1]) {Retry Loop};
\node [above, font=\small] at (\cle+.5*\ple,\yt[1]) {Parallel Work};

\extth{1}{\cle}{csu}

\extth{2}{3.8}{ini}
\extth{2}{\cle}{csu}

\extth{3}{7}{ini}
\extth{3}{\cle}{csu}

\extth{1}{\ple}{par}
\extth{1}{\cle}{csu}

\extth{2}{\ple}{par}
\extth{2}{1}{finc}

\extth{3}{5.4}{finp}

\extth{1}{2.4}{finp}

\draw[dotted] (\cle,\yt[0]-\moreb) -- ++ (0,\yt[1]-\yt[0]+\morea+\moreb);
\draw[dotted] (\cle+\ple+\cle,\yt[0]-\moreb) -- ++ (0,\yt[1]-\yt[0]+\morea+\moreb);

\end{tikzpicture}
\end{center}
\caption{Cyclic execution under low \racon\label{fig.sch-lc}}
\end{figure}







\subparagraph{High Intra-Contention}
\label{sec:high-intra}


As explained in Section~\ref{sec:gen-thput}, in this case, the direct evaluation of
the execution time of a \rl is more complex, but we have experimentally observed 
that the throughput is approximately linear with the expected number of
threads that are in the \rl at a given time. In addition, this expected number is almost
proportional to the amount of work in the \ps. As a result, a good approximation of the
throughput, in high \racon cases, is a function that is linear with the amount of work in
the \pwx.

\subparagraph{Frontier}

\def\cle{2.4}
\def\ple{4.8}
\def\prevend{0}

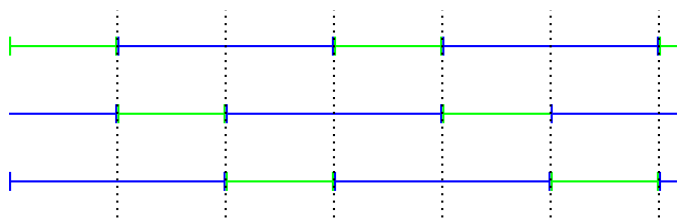
\begin{figure}[b!h]
\begin{center}
\begin{tikzpicture} [scale=0.6, font=\small, thick, par/.style={blue,|-|}, csu/.style={green,|-|},%
    cfa/.style={red,|-|}, ini/.style={blue,-|}, finc/.style={green,|-}, finp/.style={blue,|-}]
\newarray\endxt
\expandarrayelementtrue
\readarray{endxt}{0&0&0&0}
\newarray\wpl


\extth{1}{\cle}{csu}

\extth{2}{\cle}{ini}
\extth{2}{\cle}{csu}

\extth{3}{\ple}{par}
\extth{3}{\cle}{csu}

\extth{1}{\ple}{par}
\extth{1}{\cle}{csu}

\extth{2}{\ple}{par}
\extth{2}{\cle}{finc}

\extth{3}{\ple}{par}
\extth{3}{\cle}{csu}

\extth{1}{\ple}{par}
\extth{1}{.4}{finc}

\extth{2}{2.8}{finp}

\extth{3}{.4}{finp}

\draw[dotted]   (\cle,\yt[3]-\morea) -- ++ (0,\yt[1]-\yt[3]+\morea+\morea);
\draw[dotted] (2*\cle,\yt[3]-\morea) -- ++ (0,\yt[1]-\yt[3]+\morea+\morea);
\draw[dotted] (3*\cle,\yt[3]-\morea) -- ++ (0,\yt[1]-\yt[3]+\morea+\morea);
\draw[dotted] (4*\cle,\yt[3]-\morea) -- ++ (0,\yt[1]-\yt[3]+\morea+\morea);
\draw[dotted] (5*\cle,\yt[3]-\morea) -- ++ (0,\yt[1]-\yt[3]+\morea+\morea);
\draw[dotted] (6*\cle,\yt[3]-\morea) -- ++ (0,\yt[1]-\yt[3]+\morea+\morea);

\end{tikzpicture}
\end{center}
\caption{Intra-contention frontier\label{fig.sch-crit}}
\end{figure}

We now have to estimate whether the queue is highly intra-contended.
We recall that, generally speaking, a long parallel section leads to a low
intra-contended queue since threads are most of the time processing some
computations and are not trying to access the shared data. Reversely, when the
\ps is short, the ratio of time that threads spend in the \rl is higher, and
gets even higher because of both expansion and retries.

That being said, there exists a simple lower bound of the amount of work in the
\ps, such that there exists an execution where the threads are never failing in
their \rl.
We plot in Figure~\ref{fig.sch-crit} an ideal
execution with $\nth = 3$ threads and $\et{\psx}=(\nth-1) \times \et{\slxy}$. In this
execution, all threads always succeed at their first try in the \rl. Nevertheless, if we
shorten the \ps, then there is not enough parallel potential any more, and the
threads will start to fail: the queue leaves the low intra-contention state.





In practice, this lower bound ($\et{\psx}=(\nth-1) \times \et{\slxy}$) is actually a good
approximation for the critical point where the queue switches its state.
%

\paragraph{Combining Basic Throughputs}
\label{sec:comb-thputs}

We are given \pss sizes, and show how to link the throughput of the four basic operations,
with the dequeuers' and enqueuers' throughput. There are two possible states for the
queue: either it is mostly empty (\ie some \nulle items are dequeued), or it gets
larger and larger.

In the first case, some of the dequeues will occur on
an empty queue. In $1$ unit of time, \thre items are enqueued. These items are
dequeued in $\thre/\thrdne$ units of time (the queue is non-empty while they are dequeued),
which leads to a slack of $1-\thre/\thrdne$, where dequeues of \nulle items can take
place at a rate \thrde, hence the following throughput formula:
\begin{equation}
\thrd = \frac{\thre}{\thrdne} \times \thrdne +
        \left( 1 -\frac{\thre}{\thrdne} \right) \times \thrde. \label{eq:deq-comp}
\end{equation}

Concerning the enqueuers, we use the same assumption on \ercon as used on \racon in
Section~\ref{sec:high-intra}, saying that the throughput is linear with the expected number
of threads inside the \rl. Here, the expected number of threads inside the dequeue
operation is proportional to the ratio \ratd of the time spent by one dequeuer in its
dequeue operation. We do not know \et{\rld}, but we know that in average, to complete a
successful operation, a thread needs $\et{\psd}+\et{\rld}$ units of time, and among this
time it will spend \et{\psd} in the \ps. Therefore 
\begin{equation*}
\ratd = 1-\et{\psd}/(\et{\psd}+\et{\rld}) = 1 - \frac{\thrd \times \pwd}{\nth \times \freq}.
\end{equation*}
The minimum intra-contention is reached when this
ratio is $0$, while the maximum is obtained when it is $1$, thus:
\begin{equation}
\thre = \frac{\thrd \times \pwd}{\nth \times \freq} \times \thrend +
           \left( 1 - \frac{\thrd \times \pwd}{\nth \times \freq} \right) \times \thred.
        \label{eq:enq-comp}
\end{equation}

In the second case, enqueuers and dequeuers do not access to the same part of the queue,
thus \ercon does not take place, then $\thre=\thrend$, and all dequeues return a
non-\nulle item, hence $\thrd=\thrdne$.

\medskip

The discrimination of these two cases is trivial when enqueuers' and dequeuers' throughput
are given: the queue is in the first state (mostly empty) if and only if $\thre \leq
\thrd$.

Reversely, if we know the four basic throughputs
$\left( \thrend, \thred, \thrde, \thrdne\right)$, and aim at
reconstituting the dequeuers' and enqueuers' throughput
$(\thrd,\thre)$, several solutions could be consistent, \ie either a
growing queue, such that $\thre=\thrend$ and $\thrd=\thrdne$ and
fulfilling the inequality $\thre > \thrd$, or a mostly empty queue,
fulfilling Equations~\ref{eq:deq-comp} and~\ref{eq:enq-comp}.

\begin{theorem}
Given $\left( \thrend, \thred, \thrde, \thrdne\right)$, there exists a
consistent solution $(\thrd,\thre)$ with a growing queue if and only if
$\thrend>\thrdne$. In addition, this solution is unique and is such
that $\thre = \thrend$ and $\thrd = \thrdne$.
\label{th:grow}
\end{theorem}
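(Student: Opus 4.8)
The plan is to treat the statement as essentially an unfolding of the definition of a \emph{consistent growing-queue solution} stated just above the theorem, combined with the discrimination rule that the queue is mostly empty if and only if $\thre \leq \thrd$. By definition, a growing-queue solution is one in which \ercon does not occur and no dequeue returns \nulle, so that the defining equations of the regime read $\thre = \thrend$ and $\thrd = \thrdne$; and for such a solution to be \emph{consistent} (\ie to genuinely correspond to the growing regime rather than the mostly-empty one) it must additionally satisfy the discrimination inequality $\thre > \thrd$. Making these two ingredients precise is the only nontrivial part of the argument.

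First I would prove the forward implication. Suppose a consistent growing-queue solution $(\thrd, \thre)$ exists. Then by the defining equations of the growing regime we have $\thre = \thrend$ and $\thrd = \thrdne$, while consistency forces $\thre > \thrd$. Substituting the two equalities into the strict inequality yields $\thrend > \thrdne$, which is exactly the claimed necessary condition.

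Next I would prove the reverse implication together with existence. Assuming $\thrend > \thrdne$, I would exhibit the explicit candidate $(\thrd, \thre) = (\thrdne, \thrend)$ and verify that it is a consistent growing-queue solution: it satisfies the defining equations $\thre = \thrend$ and $\thrd = \thrdne$ by construction, and the hypothesis $\thrend > \thrdne$ gives $\thre > \thrd$, so the discrimination rule places it in the growing regime. Hence a consistent solution exists whenever $\thrend > \thrdne$.

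Finally, uniqueness is immediate: the defining equations of the growing regime pin down $\thre$ and $\thrd$ to the single pair $(\thrend, \thrdne)$, so no other growing-queue solution can exist. The main obstacle here is conceptual rather than computational — one must correctly read off from the surrounding text that the growing regime fully determines the two throughputs and that ``consistency'' is precisely the inequality $\thre > \thrd$. Once these definitions are made explicit, the proof reduces to a one-line substitution in each direction, with nothing left to estimate or bound.
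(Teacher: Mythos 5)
Your proposal is correct and follows essentially the same route as the paper's proof: the forward direction reads off $\thre=\thrend$, $\thrd=\thrdne$ and $\thre>\thrd$ from the growing regime, the reverse direction exhibits $(\thrdne,\thrend)$ as a valid solution, and uniqueness follows because any growing solution with $\thre<\thrend$ would require inter-contention, which only occurs for a mostly empty queue. The only cosmetic difference is that the paper spells out this last observation explicitly for uniqueness, whereas you fold it into the definition of the growing regime.
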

\begin{proof}
\noindent$(\Rightarrow)$ If the queue is growing, then $\thre>\thrd$. Moreover, dequeues never occur on an empty
queue, hence $\thrd = \thrdne$, and there is no \ercon, thus $\thre = \thrend$.

\noindent$(\Leftarrow)$ Let us assume now that $\thrend>\thrdne$. $\thre = \thrend$ and
$\thrd = \thrdne$ is a valid solution, such that the queue is growing, since then
$\thre>\thrd$.

By construction, $\thre \leq \thrend$; if we had another solution such that the queue
grows and $\thre<\thrend$, it would mean that enqueues are inter-contended, which is
possible only when the queue is mostly empty. This is absurd, hence the uniqueness.
\end{proof}

\begin{theorem}
Given $\left( \thrend, \thred, \thrde, \thrdne\right)$, there exists a
consistent solution $(\thrd,\thre)$ with a mostly empty queue if and
only if
\begin{equation}
\frac{\thred}{\thrdne} \leq 1 - \frac{\pwd}{\nth \times \freq}
\left( \thrend - \thred \right).
\label{eq:in-comp}
\end{equation}
In addition, this solution is unique and is given by Equations~\ref{eq:enq-comp}
and~\ref{eq:deq-comp}.
\label{th:empty}
\end{theorem}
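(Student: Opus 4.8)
The plan is to read Theorem~\ref{th:empty} as a statement about the linear system formed by the two defining relations of the mostly-empty regime, Equations~\ref{eq:deq-comp} and~\ref{eq:enq-comp}, viewed as equations in the two unknowns $(\thrd,\thre)$ with the four basic throughputs $\thrend,\thred,\thrde,\thrdne$ and the constants $\pwd,\nth,\freq$ held fixed. First I would put them in affine form: Equation~\ref{eq:deq-comp} reduces to $\thrd = \thrde + \beta\,\thre$ with $\beta = 1 - \thrde/\thrdne$, and Equation~\ref{eq:enq-comp} to $\thre = \thred + \gamma\,\thrd$ with $\gamma = \tfrac{\pwd}{\nth\freq}(\thrend-\thred)$. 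The inequalities $\thrde \geq \thrdne$ and $\thrend \geq \thred$ recalled earlier give $\beta \leq 0$ and $\gamma \geq 0$, so $\beta\gamma \leq 0$ and the system determinant $1-\beta\gamma \geq 1 > 0$. Non-degeneracy immediately produces a unique pair, namely $\thre = (\thred+\gamma\thrde)/(1-\beta\gamma)$ and $\thrd = (\thrde+\beta\thred)/(1-\beta\gamma)$; this is what ``the solution is unique and is given by Equations~\ref{eq:enq-comp} and~\ref{eq:deq-comp}'' means, so the uniqueness claim is settled at this stage.

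Next I would translate the qualitative requirement that the queue be mostly empty into an inequality. By the discrimination rule stated just before the theorem, a solution is consistent with a mostly-empty queue exactly when $\thre \leq \thrd$. The key simplification is $\thrd-\thre = \thrde + (\beta-1)\thre = \thrde\,(1-\thre/\thrdne)$, obtained from $\beta-1 = -\thrde/\thrdne$; since $\thrde>0$ this shows $\thre \leq \thrd$ is equivalent to the much cleaner condition $\thre \leq \thrdne$. Substituting the closed form for $\thre$ and clearing the positive denominator $1-\beta\gamma$, the cancellation $\thrde + \beta\thrdne = \thrdne$ collapses the inequality to $\gamma\,\thrdne \leq \thrdne-\thred$, i.e. $\gamma \leq 1 - \thred/\thrdne$; expanding $\gamma$ gives back exactly the stated condition~\eqref{eq:in-comp}.

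With these computations in hand, both directions fall out. For the $(\Leftarrow)$ direction, if~\eqref{eq:in-comp} holds then the unique system solution satisfies $\thre \leq \thrd$ and is therefore a consistent mostly-empty solution. For $(\Rightarrow)$, any consistent mostly-empty solution obeys Equations~\ref{eq:deq-comp} and~\ref{eq:enq-comp} together with $\thre \leq \thrd$; by the uniqueness already established it must coincide with the computed pair, and $\thre \leq \thrd$ forces~\eqref{eq:in-comp}.

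I expect the only real obstacle to be careful sign and inequality bookkeeping rather than anything conceptual: one must invoke $\thrde \geq \thrdne$ and $\thrend \geq \thred$ to guarantee $1-\beta\gamma>0$, so that clearing this denominator preserves the direction of the inequality, and one must verify the two exact algebraic identities $\thrd-\thre = \thrde(1-\thre/\thrdne)$ and $\thrde+\beta\thrdne = \thrdne$ that make the reduction to~\eqref{eq:in-comp} an equivalence and not merely an implication. A minor point worth checking in passing is that the barycentric weights $\thre/\thrdne$ and $\tfrac{\pwd\,\thrd}{\nth\freq}$ stay in $[0,1]$ on the solution, so that Equations~\ref{eq:deq-comp} and~\ref{eq:enq-comp} are genuine convex combinations; note that $\thre/\thrdne \leq 1$ is precisely the mostly-empty condition itself, so the first weight is automatically valid whenever~\eqref{eq:in-comp} holds.
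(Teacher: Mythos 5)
Your proposal is correct, and it reaches the same conclusion as the paper but packages the argument differently. The paper proves the two directions separately: for $(\Rightarrow)$ it uses the monotonicity of \thre as a function of \thrd in Equation~\ref{eq:enq-comp} together with $\thrd \geq \thrdne$ and $\thrdne \geq \thre$; for $(\Leftarrow)$ it inverts Equation~\ref{eq:deq-comp} into ${\thre}'(\thrd') = (\thrd'-\thrde)/(1-\thrde/\thrdne)$, a decreasing line, and intersects it with the increasing line ${\thre}''$ from Equation~\ref{eq:enq-comp}, checking ${\thre}''(\thrdne)\leq\thrdne={\thre}'(\thrdne)$ and reading off existence, uniqueness and the bracketing $\thrdne \leq \thrd \leq \thrde$, $\thred \leq \thre \leq \thrend$ from the picture. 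You instead solve the $2\times 2$ affine system head-on: the determinant bound $1-\beta\gamma \geq 1$ (from $\beta\leq 0$, $\gamma\geq 0$) gives uniqueness \emph{unconditionally}, and the two identities $\thrd-\thre=\thrde\left(1-\thre/\thrdne\right)$ and $\thrde+\beta\thrdne=\thrdne$ collapse the discrimination condition $\thre\leq\thrd$ into Inequality~\ref{eq:in-comp} as a single exact equivalence, so both directions of the theorem drop out at once. What your route buys is a cleaner logical structure (uniqueness is decoupled from the regime test), an explicit closed form for $(\thrd,\thre)$, and graceful handling of the degenerate case $\thrde=\thrdne$ where the paper's rewriting of Equation~\ref{eq:deq-comp} divides by zero; what the paper's route buys is the explicit interval bounds on \thrd and \thre, which certify that the barycentric weights are genuine convex coefficients (the point you flag but only partially verify). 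Your version is complete as a proof of the statement; filling in the paper's bracketing bounds would be a two-line addendum from your closed forms.
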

\begin{proof}
\noindent$(\Rightarrow)$ Let a solution $(\thrd,\thre)$ with a mostly empty queue. By construction, the
throughputs follow Equations~\ref{eq:enq-comp} and~\ref{eq:deq-comp}. As \thre is an
increasing function according to \thrd (because $\thrend\geq\thred$), we derive
\[ \thre \geq \frac{\thrdne \times \pwd}{\nth \times \freq} \times \thrend +
           \left( 1 - \frac{\thrdne \times \pwd}{\nth \times \freq} \right) \times \thred.\]

The queue is mostly empty, thus the dequeues of non-\nulle items have to be faster than
the enqueues, which translates into $\thrdne \geq \thre$. The two inequalities combined
show the implication.

\noindent$(\Leftarrow)$ Let us assume now that Inequality~\ref{eq:in-comp} is fulfilled.
Equation~\ref{eq:deq-comp} can be rewritten into
\[ \thre = \frac{\thrd-\thrde}{1-\frac{\thrde}{\thrdne}}.\]
Let us consider now ${\thre}'$ and ${\thre}''$ two functions of ${\thrd}'$ that fulfill the
following system of equations:
\[
\left\{ \begin{array}{l}
{\thre}'\left( {\thrd}' \right) = \frac{{\thrd}'-\thrde}{1-\frac{\thrde}{\thrdne}}\\
{\thre}'' \left( {\thrd}' \right) = \frac{{\thrd}' \times \pwd}{\nth \times \freq} \times \thrend +
           \left( 1 - \frac{{\thrd}' \times \pwd}{\nth \times \freq} \right) \times \thred.
\end{array} \right.
\]
We have ${\thre}'\left( \thrde \right) = 0$ and ${\thre}'\left( \thrdne \right) =
\thrdne$. According to Inequality~\ref{eq:in-comp}, we know also that ${\thre}''\left(
\thrde \right) \leq \thrde$. In addition, ${\thre}''$ is a linearly increasing function of
${\thrd}'$ and ${\thre}'$ a linearly decreasing function of ${\thrd}'$. This shows that
there exists a unique \thrd such that ${\thre}'\left( \thrd \right) = {\thre}''\left(
\thrd \right)$, and if we define \thre as $\thre = {\thre}'\left( \thrd \right) = {\thre}''\left(
\thrd \right)$, the pair $\left(\thrd,\thre\right)$ is such that
\[
\left\{ \begin{array}{l}
\thrdne \leq \thrd \leq \thrde\\
\thred \leq \thre \leq \thrend\\
\thre \leq \thrd
\end{array} \right. .
\]
This implies that $\left(\thrd,\thre\right)$ is a solution with an empty queue, and we have shown that this
solution is unique.
\end{proof}


\begin{corollary}
Given $\left( \thrend, \thred, \thrde, \thrdne\right)$, there exists at least one solution
$(\thrd,\thre)$.
\end{corollary}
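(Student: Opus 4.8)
The plan is to show that for every quadruple $\left( \thrend, \thred, \thrde, \thrdne\right)$ at least one of the two sufficient conditions supplied by Theorems~\ref{th:grow} and~\ref{th:empty} must be satisfied, so that a consistent pair $(\thrd,\thre)$ always exists. The natural dichotomy is on the sign of $\thrend - \thrdne$. If $\thrend > \thrdne$, then Theorem~\ref{th:grow} immediately produces a growing-queue solution and we are done. Hence the only case that requires work is $\thrend \leq \thrdne$, where I would verify that Inequality~\ref{eq:in-comp} holds and then appeal to Theorem~\ref{th:empty}.

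Before treating that case I would record the two facts I need. The first is the basic-throughput ordering $\thrend \geq \thred$ (and $\thrde \geq \thrdne$), already established. The second, which is the real engine of the argument, is the bound $\frac{\pwd \cdot \thrdne}{\nth \times \freq} \leq 1$. This follows directly from the steady-state relations of Section~\ref{sec:gen-thput}: since $\thrdne = \nth / \left( \et{\psd} + \et{\rldne} \right)$ and $\et{\psd} = \pwd/\freq$, we obtain
\[ \frac{\pwd \cdot \thrdne}{\nth \times \freq} = \frac{\et{\psd}}{\et{\psd} + \et{\rldne}} \leq 1, \]
which is nothing more than the statement that the fraction of time a dequeuer spends in its retry loop, $\ratd$, lies in $[0,1]$.

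With these in hand, the case $\thrend \leq \thrdne$ is a short sign-tracking computation. From $\thrend \leq \thrdne$ together with $\thrend \geq \thred$ we get $0 \leq \thrend - \thred \leq \thrdne - \thred$, and also $\thred \leq \thrend \leq \thrdne$, so that $1 - \thred/\thrdne \geq 0$. I would then write
\[ \frac{\pwd}{\nth \times \freq}\left( \thrend - \thred \right) \leq \frac{\pwd}{\nth \times \freq}\left( \thrdne - \thred \right) = \frac{\pwd \cdot \thrdne}{\nth \times \freq}\left( 1 - \frac{\thred}{\thrdne} \right) \leq 1 - \frac{\thred}{\thrdne}, \]
where the final inequality uses $\frac{\pwd \cdot \thrdne}{\nth \times \freq} \leq 1$ and $1 - \thred/\thrdne \geq 0$. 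This is exactly Inequality~\ref{eq:in-comp}, so Theorem~\ref{th:empty} yields a mostly-empty-queue solution and the case analysis is complete.

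The main obstacle is isolating and justifying the bound $\frac{\pwd \cdot \thrdne}{\nth \times \freq} \leq 1$: it is not among the hypotheses of the two theorems being combined, so it must be pulled from the underlying model rather than read off from the statements. Once it is recognized as the assertion $\ratd \in [0,1]$, it is immediate, and the single algebraic manipulation that makes everything fit is the factorization $\frac{\pwd}{\nth \times \freq}\left(\thrdne - \thred\right) = \frac{\pwd \cdot \thrdne}{\nth \times \freq}\left(1 - \thred/\thrdne\right)$, which routes the two recorded facts into the right-hand side of Inequality~\ref{eq:in-comp}. Everything else is bookkeeping on the signs of the basic-throughput differences.
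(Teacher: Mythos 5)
Your proof is correct and follows essentially the same route as the paper's: both reduce the corollary to showing that the failure of the growing-queue condition $\thrend>\thrdne$ forces Inequality~\ref{eq:in-comp}, and both ultimately rest on the ordering $\thrend\geq\thred$ together with the bound $\frac{\pwd\,\thrdne}{\nth\,\freq}\leq 1$. The only real difference is that you isolate and justify that bound explicitly as the statement $\ratd\in[0,1]$, whereas the paper uses it silently when swapping $\thred$ for $\thrend$ in its rearranged expression --- a minor gain in rigor, not a different argument.
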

\begin{proof}
We show that if the inequality of Theorem~\ref{th:grow} is not fulfilled, \ie if
$\thrend \leq \thrdne$, then the inequality of Theorem~\ref{th:empty} is true.
We have indeed
\begin{align*}
\left( 1 - \frac{\pwd}{\nth \times \freq}
\left( \thrend - \thred \right)\right) \thrdne  - \thred &=
 \left( 1 - \frac{\pwd \times \thrend}{\nth \times \freq}\right) \thrdne
- \left( 1 -  \frac{\pwd \times \thrdne}{\nth \times \freq}\right) \thred \\
&\geq \left( 1 - \frac{\pwd \times \thrend}{\nth \times \freq}\right) \thrdne
- \left( 1 -  \frac{\pwd \times \thrdne}{\nth \times \freq}\right) \thrend \\
&\geq \thrdne - \thrend \\
 \left( 1 - \frac{\pwd}{\nth \times \freq}
\left( \thrend - \thred \right)\right) \thrdne - \thred &\geq 0,
\end{align*}
which proves the Corollary.
\end{proof}

One can notice that if $\thrend > \thrdne$ and Inequality~\ref{eq:in-comp} are fulfilled
and the queue could be either mostly empty or growing. In this case, we choose, for each
operation, the mean of the two solutions, in order to minimize the discontinuities.


\subsubsection{Power Estimation}
\label{sec:power}

We recall that we are interested only in the dynamic powers as we assume that static and
activation powers are known.

\paragraph{CPU Power}

Firstly, as we map each thread on a dedicated core, there is no interference between the CPU power
of different cores, so we can compute the dynamic power as
\begin{equation}
\pow{C} = \nth \times \powe{C} + \nth \times \powd{C}.
\label{eq:dec-ed}
\end{equation}

Secondly, we assume that we can segment time and consider that, given a thread performing
operation $\bx \in \{ \be, \bd\}$, the power dissipated in the \rl and the power dissipated in the \ps are
independent. There only remains to weight the previous powers by the time spent in each of
these regions:
\begin{equation}
\powx{C} =  \ratx \times \powxrl{C} + (1 - \ratx) \times \powxps{C}.
\label{eq:dec-psrl}
\end{equation}

As shown in Section~\ref{sec:comb-thputs}, the ratio can be obtained through
\begin{equation}
\ratx = 1 - \frac{\thrx \times \pwx}{\nth \times \freq}.
\label{eq:ratx}
\end{equation}

Altogether, we obtain the final formula for dynamic CPU power
\begin{equation}
\pow{C} = \nth \left( \sum_{\bx \in \{ \be,\bd \}} \powxrl{C} +  
\frac{\thrx \times \pwx \times \left( \powxps{C} - \powxrl{C}  \right)}{\nth \times \freq}     \right)
\label{eq:cpu-pow}
\end{equation}

\paragraph{Memory and Uncore Power}


We have noticed in~\cite{EXCESS:D1.2} that the dynamic memory power is proportional to
the intensity (number of units of memory accessed per unit of time) of main memory
accesses and remote accesses, when the threads read separate places of the memory.

Here, the \ds does not directly involve the main memory since we keep its size reasonably
bounded (if the queue reaches the maximum size, we suspend the measurements, empty the
queue, and resume), hence the power dissipation in memory is only due to remote accesses,
which only appears as the threads are spread across sockets (\ie when $\nth>4$).

Moreover, as the \pss are full of \paus{}s, communications can only take place in the \rl,
and there is no dynamic memory power dissipated in the \pss.
Concerning the \rls, we make the following assumption: the amount of data
accessed per second in a \rl depends on the implementation, but given an implementation,
once a thread is in the \rl, it will always try to access the same amount of data per
second. When the queue is highly intra-contended, if a thread fails then it will retry and
will access the data in the same way as in the previous try; and if there is expansion, then
the thread will still try to access the data for the whole time it is in the \rl.

In addition, the dequeuers (and the same line of reasoning holds for
the enqueuers) tries here to access the same data. Therefore either
memory requests are batched together when sent outside the socket, or
the Home Agent (cache coherency mechanism/memory controller) keeps
track of the previous requests. This implies that the number of
threads attempting to access the data does not impact the dynamic
memory power greatly when the rate of requests is high.

All things considered, as a thread working on operation $\bx$ spends a fraction \ratx of its time
inside its \rl, we obtain that the dynamic memory power dissipated in the \rl is
proportional to $\ratx$ (times the amount of data accessed per unit of time in the \rl,
which is a constant). Hence
\begin{equation}
\pow{M} = \rate \times \codae{M} + \ratd \times \codad{M},
\label{eq:pow-mem}
\end{equation}
where \codae{M} and \codad{M} are constants.


%




The dynamic uncore power is computed exactly in the same way as the dynamic memory power.

\subsection{White-box Methodology for Instantiating the Energy Model of Queues on CPU Platform} 
\label{sec:cpu-model-inst}

In this section, we show how to obtain the parameters of the model, so
that predictions can further be made. Those parameters depend on the
architecture where the application is running, thus we need some
measurements on a few runs of the synthetic benchmark to discover
them. This methodology is white-box, since we know which runs to use,
so that the calibration of the model is achieved with a minimum
possible set of runs.

\subsubsection{Instantiating the Throughput Model}
\label{sec:instantiation}

We recall that, for all $\bx \in \{ \be, \bd\}$ and $\bb  \in \{ \ub,
\lb\}$, \thrxy depends only on \pwx, while \thre and \thrd depend on both \pwd and
\pwe. We denote now by $\thrd(\pwd,\pwe)$ (resp. $\thre(\pwd,\pwe)$) the dequeuers'
(resp. enqueuers') throughput as the amount of work in the \ps of the dequeuers is \pwd and
enqueuers' one is \pwe. The estimate of a value is denoted by a hat on top, while
the measured value does not wear the hat.

Let $\psma=1$, $\pmid=20$ and $\pbig=1000$ be three distinctive amounts of work, that
corresponds to different states of the execution. If $\pwx=\pbig$, we can neglect the
impact of operation $\bx$ on the queue, $\pwx=\pmid$ is a low \racon case since the
non-expanded critical sections are experimentally less than $2$ units of time, and
$\pwx=\psma$ corresponds to a highly inter- or \racon case. We note the we cannot use a
$0$ size as amount of work since it leads to undesirable results due to the back-to-back
effect (a thread does not allow other threads to access the queue for several consecutive
iterations).

\paragraph{Low Intra-Contention}


The basic throughputs that are not intra-contended can be spawned
from \cwxy (critical section size of operation $\bx$ in case $\bb$),
where $\bx \in \{ \be, \bd\}$ and $\bb \in \{ \ub,
\lb\}$, which we try
to estimate here.
We pick four points where the basic throughputs are easy to approximate.  We have
$\thrd(\pmid,\psma) < \thre(\pmid,\psma)$, as the amounts of
work in the \rls are in practice less than $10$. For the same reason, at this point, we are in low \racon
from the dequeuers' point of view. Altogether,
\[ \thrd(\pmid,\psma) = \thrdne(\pmid) = \frac{\nth \times \freq}{\pmid + \cwdne}, \text{ hence} \]
\begin{equation*}
\wh{\cwdne} = \frac{\nth \times \freq}{\thrd(\pmid,\psma)}- \pmid.
\end{equation*}

Then, according to Equation~\ref{eq:deq-comp}, we have
\begin{align*}
\frac{\nth \freq}{\pmid + \wh{\cwde}} &= \thrde(\pmid)\\
\frac{\nth \freq}{\pmid + \wh{\cwde}}
&= \frac{\thrd(\pmid,\pbig) - \thre(\pmid,\pbig)}%
{1- \frac{\left(\pmid + \wh{\cwdne}\right) \times \thre(\pmid,\pbig)}{\nth \times \freq}},\\
\end{align*}
from which we can extract \wh{\cwde} since we know already \wh{\cwdne}.

In the same way, we can compute \wh{\cwend} then \wh{\cwed}, by using $(\pbig,\pmid)$ and
$(\psma,\pmid)$.




\paragraph{High Intra-Contention}

We aim here at estimating \thrxy on a high \racon point. $\psma=1$ and $\pmid=20$ are such
that $\thrd(\psma,\pmid) \geq \thre(\psma,\pmid)$. According to Equation~\ref{eq:deq-comp}, we have
\[ \thrd(\psma,\pmid) = \thre(\psma,\pmid) +
        \left( 1 -\frac{\thre(\psma,\pmid)}{\wh{\thrdne}(\psma)} \right) \times \wh{\thrde}(\psma).
\]
In addition, if $\thrd(\psma,\psma) \geq \thre(\psma,\psma)$, then
\[ \thrd(\psma,\psma) = \thre(\psma,\psma) +
        \left( 1 -\frac{\thre(\psma,\psma)}{\wh{\thrdne}(\psma)} \right) \times \wh{\thrde}(\psma),
\]
otherwise, $\thrd(\psma,\psma) = \wh{\thrdne}(\psma)$. In both cases, we can find the two unknowns
$\wh{\thrdne}(\psma)$ and $\wh{\thrde}(\psma)$ thanks to the two equations.

This last point is also used in the same way for enqueuers: if $\thrd(\psma,\psma) \geq
\thre(\psma,\psma)$, then
\begin{equation*}
\thre(\psma,\psma) = \frac{\thrd(\psma,\psma) \times \psma}{\nth \times \freq} \times \wh{\thrend}(\psma)
+ \left( 1 - \frac{\thrd(\psma,\psma) \times \psma}{\nth \times \freq} \right) \times \wh{\thred}(\psma),
\end{equation*}

otherwise, $\thre(\psma,\psma) = \wh{\thrend}(\psma)$.

Like previously, we have $\thrd(\pmid,\psma) < \thre(\pmid,\psma)$, hence
$\wh{\thrend}(\psma)=\thre(\pmid,\psma)$. This implies that in any cases we can compute
$\wh{\thrend}(\psma)$, but we do not have access to $\wh{\thred}(\psma)$ if
$\thrd(\psma,\psma) < \thre(\psma,\psma)$. In this case, the bottleneck of the queue is
likely to be the dequeuers, hence we set the value $\wh{\thred}(\psma) =
\wh{\thrend}(\psma)$ by default.

All \wh{\thrxy} are then obtained by joining $\wh{\thrxy}(\psma)$ to the leftmost point
of the low \racon part:

\begin{equation*}
\wh{\thrxy}(\pwx) = \left\{ \begin{array}{ll}
\frac{\frac{f}{\wh{\cwxy}}-\wh{\thrxy}(\psma)}{(\nth - 1 ) \wh{\cwxy} - \psma} \times (\pwx - \psma) + \wh{\thrxy}(\psma) &\quad \text{if }\pwx \leq (\nth - 1 ) \wh{\cwxy}\\
\vspace*{.05cm}&\\
\frac{\nth \times \freq}{\pwx +\wh{\cwxy}} &\quad \text{otherwise}.
\end{array}\right.
\end{equation*}

Finally, dequeuers' and enqueuers' throughput are reconstituted as explained in
Section~\ref{sec:comb-thputs}: if Equation~\ref{eq:in-comp} is fullfilled, then they are
computed through Equations~\ref{eq:deq-comp} and~\ref{eq:enq-comp} that can be rewritten as:
\begin{equation*}
\left\{ \begin{array}{l}
\wh{\thrd}(\pwd,\pwe) = \frac{\wh{\thrde}(\pwd) + \wh{\thred}(\pwe) \left( 1- \frac{\wh{\thrde}(\pwd)}{\wh{\thrdne}(\pwd)}\right)}%
{1-\frac{\pwd}{\nth \freq}\left( \wh{\thrend}(\pwe) - \wh{\thred}(\pwe) \right) \left( 1- \frac{\wh{\thrde}(\pwd)}{\wh{\thrdne}(\pwd)} \right)}\\
\vspace*{.05cm}\\
\wh{\thre}(\pwd,\pwe) = \frac{\wh{\thrd}(\pwd,\pwe) \times \pwd}{\nth \times \freq} \times \wh{\thrend}(\pwe) 
+\left( 1 - \frac{\wh{\thrd}(\pwd,\pwe) \times \pwd}{\nth \times \freq} \right) \times \wh{\thred}(\pwe).\\
\end{array}\right.
\end{equation*}

Otherwise, $\wh{\thrd}(\pwd,\pwe) = \wh{\thrdne}(\pwd)$ and $\wh{\thre}(\pwd,\pwe) =
\wh{\thrend}(\pwe)$.


\subsubsection{Instantiating the Power Model}

We use once again $\psma=1$, $\pmid=20$ and $\pbig=1000$ as three distinctive amounts of
work, that allows easy approximations for the power dissipation expressions.

We have seen that if $X \in \{ M,U \}$, then $\pow{X} = \ratd \times \codad{X} + \rate
\times \codae{X}$, which can be approximated at $(\pwd,\pwe) = (\pbig,\psma)$ by
$\pow{X}(\pbig,\psma) = \rate(\psma) \times \codae{X}$, since \ratd is then nearly $0$. It
implies that
\[ \wh{\codae{X}} = \frac{\pow{X}(\pbig,\psma)}{1 - \frac{\thre(\pbig,\psma) \times \psma}{\nth \times \freq}}. \]
We obtain \wh{\codad{X}} similarly at $(\pwd,\pwe) = (\psma,\pbig)$.

Concerning the dynamic CPU power, we firstly estimate the power dissipated in the \pss.
According to the implementation, the CPU power dissipated by the \ps of enqueuers and
dequeuers is the same for both, and this power does not depend on the amount of
work. These restrictions are not a loss of generality, since the aim here is to study the
queue implementations. It can then be estimated by using $(\pbig,\pbig)$,
where the ratios \ratx can be considered as $0$, which leads to
\[  \wh{\powxps{C}} = \frac{\pow{C}(\pbig,\pbig)}{ 2 \nth}. \]

We reuse the point $(\pbig,\psma)$, where \ratd is very close to $0$, to derive that
\[ \pow{C} = \nth \left( \rate(\psma) \times \wh{\powerl{C}} + (1 - \rate(\psma)) \wh{\poweps{C}} \right) + \nth \wh{\powdps{C}},
\]
which is equivalent to
\begin{equation*}
\wh{\powerl{C}} =\frac{\pow{C}(\pbig,\psma)}{\nth \left( 1 - \frac{\thre(\pbig,\psma) \psma}{\nth \times \freq} \right)}
 - \left( \frac{2}{1 - \frac{\thre(\pbig,\psma) \psma}{\nth \times \freq}} -1 \right) \wh{\powxps{C}}
\end{equation*}

Once again, we obtain \wh{\powdrl{C}} with the same line of reasoning at $(\pwd,\pwe) = (\psma,\pbig)$.

Finally, \wh{\pow{M}} and \wh{\pow{U}} (resp. \wh{\pow{C}}) are computed by using
Equation~\ref{eq:pow-mem} (resp. Equations~\ref{eq:dec-ed} and~\ref{eq:dec-psrl}), and
the estimates of the ratios that are issued from Section~\ref{sec:thput}
\[ \wh{\ratx} = 1 - \frac{\wh{\thrx} \times \pwx}{\nth \times \freq}. \]

\subsubsection{Summary}

To summarize, we have built a model that needs to be calibrated
(instantiated) before the execution of an application that would use
the queue.  The calibration phase starts with the run of synthetic
benchmarks on the following set of points and the measurements of the
dequeuers' and enqueuers' throughputs:
\begin{align*}
(\pwd,\pwe) \in \big\{ &(\pmid,\psma),
(\pmid,\pbig), (\psma,\pmid), (\pbig,\pmid),\\
& (\psma,\psma),
(\pbig,\psma), (\pbig,\pbig), (\psma,\pbig) \big\}.
\end{align*}

The calibration phase continues with the extraction of the parameters
that rule the four basic throughputs (hence the dequeuers' and
enqueuers' throughputs) on the whole domain.

After this calibration phase, we are able, given any parallel section
values $(\pwd,\pwe)$, to estimate the throughput and the energy
consumption of the application, through a few basic arithmetic
operations, as explained in the previous subsections.


\newpage
\section{Programming Abstractions and Libraries} \label{sec:libraries}
In this section, we describe our studies on programming abstractions and libraries such as concurrent search trees and concurrent lock-free queues.

\subsection{Concurrent Search Trees} \label{sec:search-trees}
In this section, we present libraries of concurrent search trees and their performance and energy analysis.

\subsubsection{Energy-efficient Concurrent Search Trees} \label{sec:delta-trees}

\paragraph{DeltaTree ($\Delta$Tree)} \label{sec:DeltaTree}
Concurrent trees are fundamental data structures that are widely used in different contexts such as load-balancing \cite{DellaS00, HaPT07, ShavitA96} and searching \cite{Afek:2012:CPC:2427873.2427875, BronsonCCO10, Brown:2011:NKS:2183536.2183551, Crain:2012:SBS:2145816.2145837, DiceSS2006, EllenFRB10}.
Most of the existing highly-concurrent search trees are not considering the fine-grained
data locality. The non-blocking concurrent search trees 
\cite{Brown:2011:NKS:2183536.2183551, EllenFRB10} and Software Transactional
Memory (STM) search trees
\cite{Afek:2012:CPC:2427873.2427875, BronsonCCO10,
Crain:2012:SBS:2145816.2145837, DiceSS2006} have been regarded
as the state-of-the-art concurrent search trees. They have been proven
to be scalable and highly-concurrent. 
However these trees are not designed for fine-grained data locality. 
Prominent concurrent search trees which are often included in several benchmark 
distributions such as the concurrent red-black
tree \cite{DiceSS2006} by Oracle Labs and the concurrent AVL tree
developed by Stanford \cite{BronsonCCO10} are not designed for data locality either. It is challenging to devise search trees that are portable, highly concurrent and fine-grained locality-aware. A platform-customized locality-aware search trees \cite{KimCSSNKLBD10, Sewall:2011aa} are not portable while there are big interests of concurrent data structures for unconventional platforms~\cite{Ha:2010aa, Ha:2012aa}. Concurrency control techniques such as transactional memory~\cite{Herlihy:1993aa,Ha:2009aa} and 
multi-word synchronization~\cite{Ha:2005aa,Ha:2003aa,Larsson:2004aa} do not take into account fine-grained locality while fine-grained locality-aware techniques such as van Emde Boas layout \cite{Prokop99,vanEmdeBoas:1975:POF:1382429.1382477} poorly support concurrency. 

Based on the new concurrency-aware vEB (cf. Section~\ref{sec:concurrentvEB}), we implement $\Delta$Tree \cite{deltatreeTR2013}, a portable locality-aware unbalanced concurrent search tree. Figure \ref{fig:treeuniverse} illustrates a $\Delta$Tree $U$ which is composed by a group of subtrees ($\Delta$Nodes).
A $\Delta$Node's internal nodes are put together in cache-oblivious
fashion using the concurrency-aware vEB layout (cf. Section \ref{sec:relaxed-veb}). The search operation for $\Delta$Tree
is wait-free. We are aware that the $\Delta$Tree has two major 
shortcomings, namely being an unbalanced tree and having a
poor memory utilization because of the inter-node pointers usage.

In order to address the $\Delta$Tree shortcomings, we implement the Balanced $\Delta$Tree (b$\Delta$Tree).
b$\Delta$Tree is devised by improving the structure and the 
algorithm of $\Delta$Tree to support the concurrent, Btree-like bottom-up insertions, which
ensures balanced tree.
Another major change is that in b$\Delta$Tree, the fat-pointer $\Delta$Nodes are 
replaced with pointer-less $\Delta$Nodes. The removal of
$\Delta$Nodes internal pointers made way for 200\% more nodes to fit into the tree (in 64-bit x86 systems, a pointer
costs 8 bytes of memory while an unsigned variable requires only 4 bytes of memory). The 
concurrent search operations supported by b$\Delta$Tree are still without locks and waits, but
they are no longer wait-free. Based on our study, a more-compact tree results in less data transfers.
And less data transfers leads to better performance and energy efficiency.

Finally, we implement the Heterogeneous $\Delta$Tree (h$\Delta$Tree), which is 
a better performing, more-compact tree than b$\Delta$Tree. h$\Delta$Tree is devised by changing
the leaf-oriented (external) tree layout of the leaf $\Delta$Nodes into an internal tree layout. 
This improvement allows 100\% more nodes to fit inside the tree, which further
improves the tree's overall operation performance.

Based on experimental insights, our $\Delta$Trees are different from previous theoretical designs of concurrent 
cache-oblivious (CO) trees such as the concurrent packed-memory CO tree and concurrent exponential CO tree \cite{BenderFGK05}. 
The concurrent packed-memory CO tree gives a good amortized memory transfer cost of $\Theta(\log_B N + (\log^2 N/B))$
for tree updates, assuming that operations occur {\em sequentially}. However, the proposed data representation 
requires each node to have the parent-child pointers. 
Besides the complication in re-arranging those pointers, we have found that eliminating pointers 
from the node to minimize memory footprint is significantly 
beneficial for cache-oblivious tree in practice (cf. improvement from $\Delta$Tree to b$\Delta$Tree in Section \ref{sec:bDeltaTree}). 
In the $\Delta$Tree experimental evaluation (cf. Section \ref{sec:evaluation}), b$\Delta$Tree
is 100\% faster than $\Delta$Tree in searching, which is attained by simply
removing pointers from the tree node.

In the concurrent exponential CO tree by Bender et al. \cite{BenderFGK05}, expected memory transfer cost
for search and update operations is $\mathcal{O}(\log_B N + (\log_\alpha \text{lg} N))$, assuming that all processors are {\em synchronous}. Cormen et al. \cite[pp. 212]{CormenSRL01}, however,  
wrote that although the underlying exponential tree algorithm \cite{548472} is an important theoretical
breakthrough, it is complicated and unlikely to compete with similar sorting algorithms. 
In fact, nodes in the exponential tree grow exponentially in size, which not only complicates maintaining inter-node pointers but also exponentially increases the tree's memory footprint in practice. 
In contrast, the memory footprint of $\Delta$Tree with the fixed size $\Delta$Nodes gradually expands on-demand when the tree grows. Thanks to the fixed size $\Delta$Nodes, $\Delta$Tree exploits further locality by utilizing a "map" and an efficient inter-node connection (cf. Section \ref{sec:interdesc}). 

\begin{figure}[!t] \centering \includegraphics[width=0.8\textwidth]{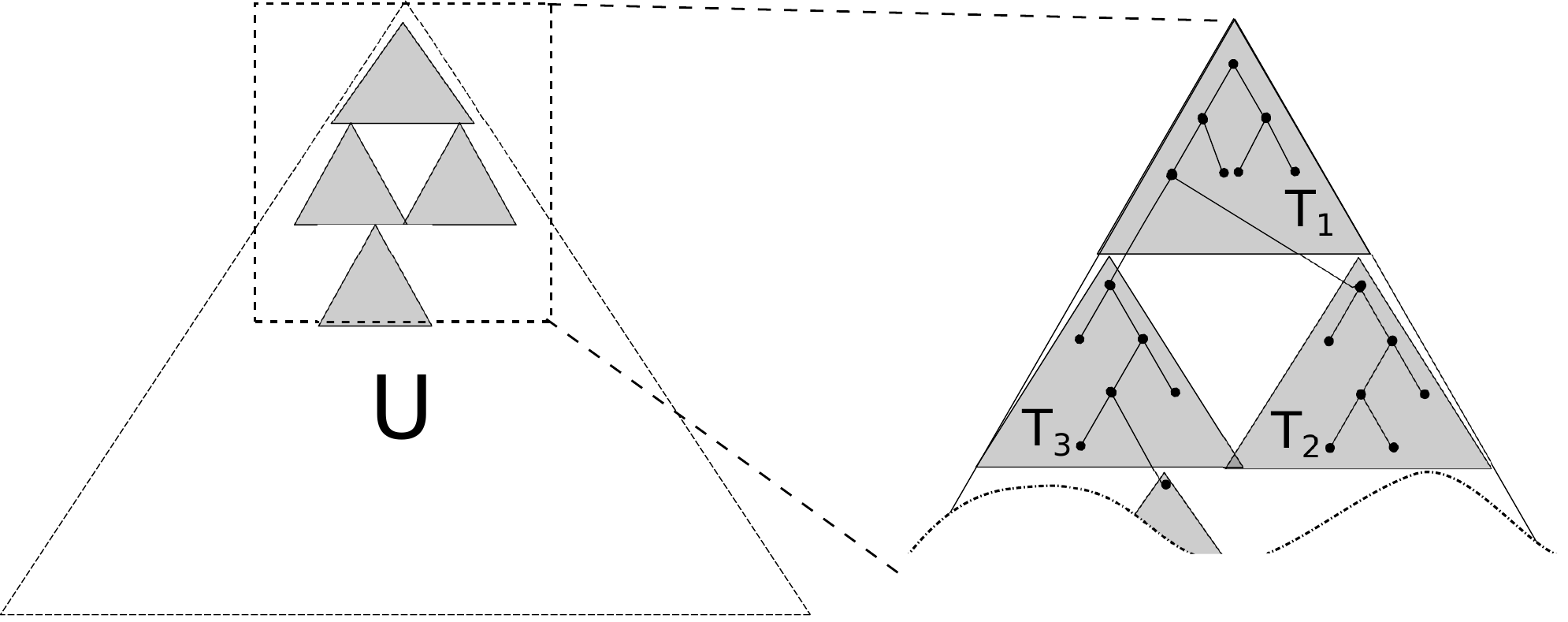}
\caption{Depiction of a $\Delta$Tree $U$. Triangles $T_x$ represent the $\Delta$Nodes.}
\label{fig:treeuniverse}
\end{figure}

\begin{figure}[!t] 
\centering 
\begin{algorithmic}[1] 
\Start{\textsc{node} $n$} \label{lst:line:nodestruct}
	\START{member fields:}
\State $tid \in \mathbb{N}$, if $> 0$ indicates the node is \textit{root} of a $\Delta$Node with an id of $tid$ ($T_{tid}$) 
\State $value \in
	\mathbb{N}$, the node value, default is \textbf{empty} 
\State $mark \in \{true,false\}$, a value of \textbf{true} indicates a logically deleted node
\State $left, right \in \mathbb{N}$, left and right child pointers 
\State $isleaf \in {true,false}$, indicates whether the node is a leaf of a $\Delta$Node,  \par 
	\hskip\algorithmicindent default is \textbf{true} 	\label{lst:line:leafdefault}
\END \End

\Statex
\Start{\textsc{$\Delta$Node} $T$} \label{lst:line:deltanodestruct}
\START{member fields:}
\State $\mathit{nodes}$, a group of pre-allocated \textsc{node} $n$ $\{n_1,n_2,\ldots,n_{\UB}\}$
\State $\mathit{buffer}$, $\Delta$Node's buffer (pre-allocated array of $b_1,b_2,\ldots,b_{\mathit{\#threads}}$)
\State $\mathit{meta}$, $\Delta$Node's metadata (single \textsc{$\Delta$NodeMeta})

\END \End

\Statex
\Start{\textsc{$\Delta$NodeMeta} $M$} \label{lst:line:trianglestruct}
\START{member fields:}
\State $\mathit{locked}$, indicates whether a $\Delta$Node is locked 
\State $\mathit{opcount}$, a counter for the active update operations 
\State $\mathit{root}$, pointer to the root node of the $\Delta$Node ($T_x.n_1$)
\State $\mathit{mirror}$, pointer to the root node of the $\Delta$Node's mirror ($T_{x'}.n_1$)
\END \End

\Statex
\Start{\textsc{universe} $U$} 		\label{lst:line:universe}
	\START{member fields:}
\State $root$, pointer to the $root$ of the topmost $\Delta$Node ($T_1.root$) 
\END \End
\end{algorithmic}
\caption{$\Delta$Tree's data structures.} \label{lst:datastruct}
\end{figure}

The $\Delta$Tree consists of $|U|$ $\Delta$Nodes of fixed size $\UB$. Each of
the $\Delta$Node contains a \textit{leaf-oriented} binary search tree (BST) $T_i, i=1,
\dots,|U|$. The $\Delta$Tree $U$ provides the following operations: \textsc{insertNode($v, U$)}, 
which adds value $v$ to the set $U$, \textsc{deleteNode($v, U$)} 
for removing a value $v$ from the set, and
\textsc{searchNode($v, U$)}, for determining whether value $v$ exists in the
set. We use the term \textit{update} operation for either insert or delete
operation. We assume that duplicate values are not allowed inside the set and a
special value, for example $0$, is reserved as an indicator of an \textsc{Empty}
value.

\subparagraph{Data structures.}

The implementation of $\Delta$Tree utilizes the data structure in
Figure \ref{lst:datastruct}. The topmost level of $\Delta$Tree is represented by
a struct \textsc{universe} (line \ref{lst:line:universe}) that contains 
a pointer (\textbf{root}) 
to the root node of the first $\Delta$Node ($T_1.n_1$).
A $\Delta$Tree is formed by a group of $\Delta$Nodes. 

$\Delta$Tree's $\Delta$Nodes are represented by the struct
\textsc{$\Delta$Node} (line \ref{lst:line:deltanodestruct}). A $\Delta$Node consists
of a collection of $\UB$ \textbf{nodes} and a \textbf{buffer} array. 
$\Delta$Node's \textbf{buffer} array length is equal 
to the number of operating threads. Each $\Delta$Node is accompanied by
a metadata \textsc{$\Delta$NodeMeta} that holds lock and counters variable.

Struct \textsc{$\Delta$NodeMeta} (line \ref{lst:line:trianglestruct})
acts as metadata for every $\Delta$Node.
This structure consists of a field \textbf{opcount}, which is a counter that indicates 
the number of insert/delete threads that
are currently operating within that $\Delta$Node; and field \textbf{locked} that indicates 
whether a $\Delta$Node is currently locked by a maintenance operation.
When \textbf{locked} is set as \textit{true}, no insert/delete threads 
are allowed to get into a $\Delta$Node. 
Lastly, \textsc{$\Delta$NodeMeta} contains a \textbf{root} pointer 
that points to the first  \textsc{node}
of $\Delta$NodeMeta's accompanying $\Delta$Node, and the pointer \textbf{mirror} that points 
to the \textit{root} of the $\Delta$Node's mirror (cf. Section \ref{sec:mirroring}).

Each \textsc{node} structure (line \ref{lst:line:nodestruct}) contains field \textbf{value}, 
which holds a value for guiding the search, or a data value if it resides in a leaf-node.
Field \textbf{mark} indicates a logically deleted value, if set to \textit{true}. 
A \textit{true} value of \textbf{isleaf} indicates the node is a leaf
node, and \textit{false} otherwise. Field \textbf{tid} is a unique
identifier of a corresponding $\Delta$Node and it is used to let a thread know whether itself 
has moved
between $\Delta$Nodes. A \textbf{tid} is only defined in the root node of a $\Delta$Node.

\subparagraph{$\Delta$Tree functions.}

\begin{figure}[!t] 
\centering \begin{algorithmic}[1] 

\Function{searchNode}{$v, U$}
\State $p \gets U.root$
\While{(TRUE)} 	\label{lst:line:search-while}
	\State $lastnode \gets p$	\Comment{atomic assignment} \label{lst:line:lastnode-p}
        \If{($p.value < v$)}		\label{lst:line:searchless}
            \State $p \gets p.left$
        \Else					\label{lst:line:searchelse}
            \State $p \gets p.right$ 
        \EndIf
        \If{($!p$ \textbf{or} $lastnode.isleaf =$ TRUE)}	\label{lst:line:searchifleaf}
        		\State \textbf{break} 					 \label{lst:line:search-end}
	\EndIf
\EndWhile
\If{($lastnode.value = v$)} 		\label{lst:line:linsearch3}
	\If{($lastnode.mark =$ FALSE)}	 \Comment{lastnode is not deleted}	\label{lst:line:linsearch1}
		\State\Return TRUE 
	\Else 
		\State\Return FALSE 
	\EndIf 
\Else \State \{Search the last visited $\Delta$Node's \textit{buffer} for $v$\}	\label{lst:line:searchbuffer}
	\If{(\{$v$ is found\})}							\label{lst:line:linsearch2}
		\State\Return TRUE 
	\Else 
		\State\Return FALSE 
	\EndIf 
\EndIf 
\EndFunction
\end{algorithmic}
\caption{$\Delta$Tree's \textit{wait-free} search algorithm.}\label{lst:nodeSearch}
\end{figure}

$\Delta$Tree provides basic functions such as search, insert and delete functions. 
We refer insert and delete operations
as the \textit{update} operations.

Function \textsc{searchNode($v,U$)} (cf. Figure \ref{lst:nodeSearch}), is going to walk 
over the $\Delta$Tree to
find whether the value $v$ exists in $U$ (Figure \ref{lst:nodeSearch}, lines \ref{lst:line:lastnode-p}--\ref{lst:line:search-end}). 
The  \textsc{searchNode($v,U$)} function returns \textbf{true} whenever $v$ has been
found, or \textbf{false} otherwise (Figure \ref{lst:nodeSearch}, line \ref{lst:line:linsearch1}). 
This operation is
guaranteed to be wait-free (cf. Lemma \ref{lem:waitfreesearch}).

\begin{lemma} \label{lem:waitfreesearch}
$\Delta$Tree search operation is \textit{wait-free}.
\end{lemma}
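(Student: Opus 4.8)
The plan is to show that a call to \textsc{searchNode}($v,U$) always returns after a number of its \emph{own} steps bounded by the current height of the tree, no matter how the insert, delete, and maintenance threads are scheduled. The algorithm in Figure~\ref{lst:nodeSearch} consists of a single \textbf{while} loop, followed by a constant number of field reads (the comparisons $lastnode.value = v$ and the test on $lastnode.mark$) and one scan of the last visited $\Delta$Node's \textbf{buffer}. Since the \textbf{buffer} has fixed length equal to the number of operating threads, the post-loop work is clearly bounded, so the whole argument reduces to bounding the number of iterations of the \textbf{while} loop.

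First I would establish that each iteration performs a strict descent. In one pass the algorithm copies the current node into $lastnode$ (an atomic read), reads $lastnode.value$ to branch, and reassigns $p$ to the corresponding child via $p.left$ or $p.right$; hence after each iteration the node $p$ is a child of the node examined in that iteration, so its depth increases by exactly one. Because a $\Delta$Tree is acyclic — the \textbf{left} and \textbf{right} pointers only ever point to descendant nodes, both \emph{inside} a $\Delta$Node, where the concurrency-aware vEB layout is a genuine tree (cf.\ Lemma~\ref{lem:dynamic_vEB_search}), and across the inter-$\Delta$Node links — no node is revisited and the loop cannot cycle.

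Next I would argue termination together with the absence of waiting. The loop exits as soon as the freshly read child pointer is null ($!p$) or the node just left is a leaf ($lastnode.isleaf = \text{TRUE}$); since the descent is strictly monotone and every root-to-leaf path in the current (finite) tree has length at most its height $T$, at most $T$ iterations occur. Crucially, the search only \emph{reads} the fields \textbf{value}, \textbf{left}, \textbf{right}, \textbf{isleaf}, \textbf{mark} and the \textbf{buffer}: it never acquires the $\Delta$Node lock (\textbf{locked}), never touches \textbf{opcount}, performs no \cas, and never restarts. Hence no concurrent operation can force it to spin, block, or repeat an iteration; each loop pass is an unconditional single step downward.

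The main obstacle is ruling out an \emph{unbounded} descent caused by concurrent structural changes, since inserts turn leaves into internal nodes and maintenance operations rebuild whole $\Delta$Nodes. Here I would lean on the data-structure invariants: an insert only \emph{appends} nodes below a leaf and a delete only sets \textbf{mark}, so neither redirects the search upward nor produces a path that fails to terminate at a genuine leaf of the live tree; and a maintenance operation, guarded by \textbf{locked}/\textbf{opcount}, builds its replacement in the $\Delta$Node's mirror (reached through \textbf{mirror}) and publishes it atomically, so a search traversing the old copy still reaches a real leaf of a finite tree rather than being trapped. Proving formally that at every instant the structure reachable from $U.root$ is a finite tree whose leaves satisfy $\textbf{isleaf}=\text{TRUE}$, and hence that the monotone descent always meets such a leaf (or a null child), is the delicate part; granting these invariants, the iteration count is bounded by $T$, and combined with the bounded post-loop buffer scan this shows that \textsc{searchNode} completes in a bounded number of its own steps, i.e.\ it is wait-free.
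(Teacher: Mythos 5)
Your proposal is correct and follows essentially the same route as the paper's own (sketch) proof: bound the \textbf{while} loop by the tree height $T$ via strict descent to a leaf or empty node, observe that the search acquires no locks and performs no retries, and bound the final buffer scan by the constant buffer size. You go further than the paper in explicitly flagging the need for invariants guaranteeing that concurrent inserts, deletes, and mirrored maintenance operations cannot produce an unbounded or non-terminating descent, a point the paper's three-observation sketch leaves implicit.
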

\begin{proof}(Sketch) The proof can be served based on these observations on Figure \ref{lst:nodeSearch}:
\begin{enumerate}
\item \textsc{SearchNode} and invoked \textsc{SearchBuffer} (line \ref{lst:line:searchbuffer}) do not wait for any locks.
\item The number of iterations in the \textit{while} loop (line \ref{lst:line:search-while})
	 is bounded by the \textit{height}  $T$ of the tree, or $\mathcal{O}(T)$. The loop is always ended
	 whenever a leaf node or an empty node is found in line \ref{lst:line:search-end}.
\item \textsc{SearchBuffer} time complexity is bounded by the buffer size, which is a constant.
\end{enumerate}
Therefore the  \textsc{SearchNode} time is bounded by $\mathcal{O}(T)$, where $T$ is the height of the tree.
\end{proof}

\begin{figure}

\centering
\footnotesize
\begin{algorithmic}[1]

\Function{insertNode}{$v, p$}	\label{lst2:line:insertfunc}
\State $p \gets U.root$

\State \textit{BEGIN}:

\If{(\{Entering new $\Delta$Node $T_x$\})}                 			                                                 
   \State  Decrement previous $\Delta$Node's opcount ($T'_x.opcount$)				
   \State  Wait if $\Delta$Node is currently maintained 		\label{lst2:line:waitinsert}
\EndIf
    
\If{Not at the tree's last level node}                                                    
        \If{($v < p.value$)}                                                                              
            \If{Is at leaf ($p.isleaf =$ TRUE)}  									                              
                \If{(CAS($p.left.value$, \textbf{empty}, $v$) = \textbf{empty})}       \label{lst2:line:ins-lp1} 	\label{lst2:line:growins1}    
                	   \State Do insert to the left   	                          \label{lst2:line:growins1-end}                         
                   \State Decrement $T_x.opcount$ 
                \Else
                    \State Re-try insert from $p$ 		\label{lst2:line:retryinsleft}                    
                \EndIf
            \Else
                \State Go to left child ($p \gets p.left$) ) and restart from \textit{BEGIN}     
            \EndIf
        \ElsIf {($v > p.value$)}                                                                      
            \If{(Is at leaf ($p.isleaf =$ TRUE)}                                 
                \If{(CAS($p.left.value$, $\textbf{empty}$, $p.value$) = \textbf{empty})}  \label{lst2:line:ins-lp2}  \label{lst2:line:growins2} 
                	   \State Do insert to the right                    
	   	   \State Decrement $T_x.opcount$ 
                \Else
                    \State Re-try insert from $p$  \label{lst2:line:retryinsright}             
                \EndIf
            \Else
                \State  Go to right child ($p \gets p.right$) and restart from \textit{BEGIN}             	\label{lst2:line:right-end}
             \EndIf
        \ElsIf {($v = p.value$)}                                                                                            
            \If{Is at leaf ($p.isleaf =$ TRUE)}                                   
                \If{$v$ is not deleted ($p.mark =$ FALSE)}								
                    \State Decrement $T_x.opcount$                            \Comment{$v$ already exist} \label{lst2:line:valexist}
                \Else
                	   \State Do insert right
                \EndIf
            \Else
                \State Go to right child ($p \gets p.right$) and restart from \textit{BEGIN}       
            \EndIf
        \EndIf
\Else			
		\If{$v$ is already in $T_x.\mathit{buffer}$}
			\State Decrement $T_x.opcount$
		\Else
			\State Put $v$ inside $T_x.\mathit{buffer}$ \label{lst2:line:insertbuffer}		\Comment{buffered insert}	
			\If{(\textsc{TAS}($T_x.locked$))} \Comment{Acquire maintenance lock} \label{lst2:line:lockbuf}
			\State Decrement $T_x.opcount$ \label{lst2:line:dec-opcount}
			\State Wait for all updates to finish ($T_x.opcount$=0)  \label{lst2:line:spinwait-main}
			\State do \textsc{rebalance}($T_x$) or \textsc{expand($p$)}  \label{lst2:line:expand}
			\EndIf 
		\EndIf
\EndIf
\EndFunction
\algstore{bkbreak}
\end{algorithmic}
\end{figure}
\clearpage
\begin{figure}
\footnotesize
\begin{algorithmic}[1]

\algrestore{bkbreak}

\Function{deleteNode}{$v, p$}		\label{lst2:line:deletefunc}				

\State $p \gets U.root$
\State \textit{BEGIN}:
\If{(\{Entering new $\Delta$Node $T_x$\})}                 			                                                 
   \State  Decrement previous $\Delta$Node's opcount ($T'_x.opcount$)				
   \State  Wait if $\Delta$Node is currently maintained 		\label{lst2:line:waitdelete}
\EndIf

\If{Is at leaf ($p.isleaf =$ TRUE)) \textbf{or} at the tree's last level node}
        \If{($p.value = v$)}
            \If{(\textsc{CAS}($p.mark$, FALSE, TRUE) != FALSE))}              	\label{lst2:line:markdel}       
                	\State Decrement  $T_x.opcount$	                                      \Comment{$v$ is already deleted}
            \Else	
             	\If{$p$ is still the leaf node} 	\label{lst2:line:markdel-check}
			\If{(\textsc{TAS}($T_x.locked$))} \Comment{Acquire the maintenance lock} \label{lst2:line:lockbuf-del}
				\State Decrement $T_x.opcount$	\label{lst2:line:dec-opcount-del}
				\State Wait for all updates to finish ($T_x.opcount$=0) \label{lst2:line:spinwait-main-del} 
                			\State Do node merge if needed \label{lst2:line:merge}          
			\EndIf       
		\Else		 
			\State Re-try delete from current node $p$ \label{lst2:line:retrydel}
		\EndIf
            \EndIf
        \Else
        		\State Search ($T_{x}.\mathit{buffer}$) for $v$
	   	\If{$v$ is found in $T_{x}.\mathit{buffer.idx}$}
                 	\State Remove $v$ from $T_{x}.\mathit{buffer.idx}$	\label{lst2:line:bufdel} 	\Comment{buffered delete}
	    	\EndIf
		\State Decrement $T_x.opcount$                                            
    	\EndIf
\Else 
	\If{($v < p.value$)}
        		\State Go to left child ($p \gets p.left$)
	\Else
        		\State Go to right child  ($p \gets p.right$)
	\EndIf	
	\State Restart from \textit{BEGIN}
\EndIf
\EndFunction

\end{algorithmic}
\caption{$\Delta$Tree's concurrent update (\textit{insert} and \textit{delete}) algorithms.}
\label{lst:pseudo-ops-simple}
\end{figure}

Function \textsc{insertNode($v, U$)} (cf. Figure \ref{lst:pseudo-ops-simple}, line \ref{lst2:line:insertfunc}) 
inserts value $v$ at a leaf of $\Delta$Tree, 
provided $v$ does not exist in the
tree (Figure \ref{lst:pseudo-ops-simple}, line \ref{lst2:line:valexist}). 
Following the nature of a leaf-oriented tree, a successful insert operation 
replaces a leaf with a subtree of three nodes \cite{EllenFRB10} (cf. Figure
\ref{fig:treetransform}a and in Figure \ref{lst:pseudo-ops-simple}, 
lines \ref{lst2:line:growins1} and \ref{lst2:line:growins2}). Because the $\Delta$Tree
structure is pre-allocated, an insert operation needs only to do a "logical" replacement, or 
only replacing the value of a leaf and its children.

\begin{figure}[!t] \centering \includegraphics[scale=0.8]{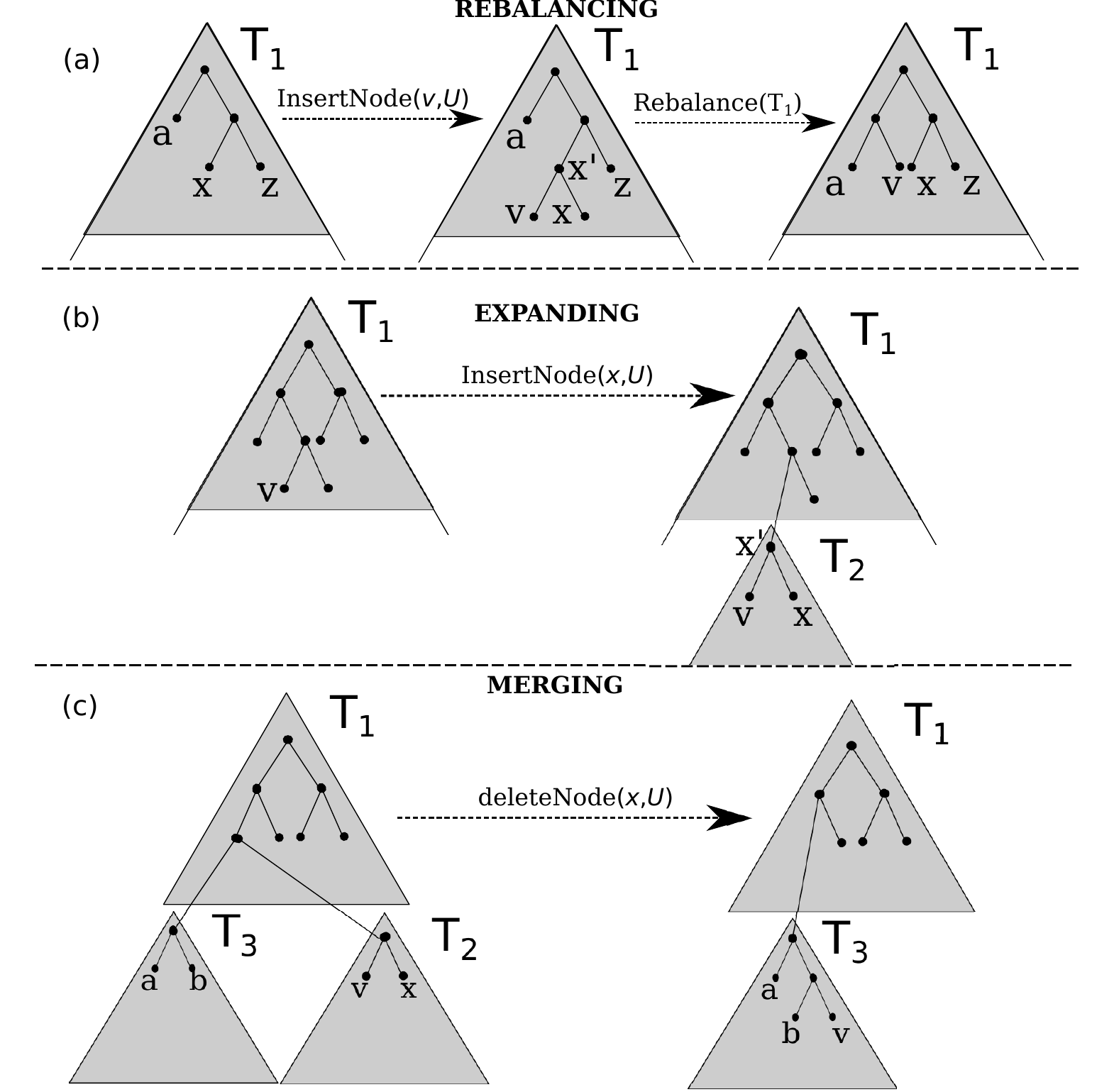}
\caption{(a)\textit{Rebalancing}, (b)\textit{Expanding}, and (c)\textit{Merging} operations on $\Delta$Tree.}
\label{fig:treetransform}
\end{figure}

The function \textsc{deleteNode($v, U$)} (cf. Figure \ref{lst:pseudo-ops-simple}, line \ref{lst2:line:deletefunc}) 
{\em marks} the leaf that contains value $v$ as deleted. \textsc{deleteNode($v, U$)} fails if $v$ does not exist in the tree or 
the leaf containing $v$ is already mark as deleted. 

To avoid conflicting insert and delete operations, 
\textsc{insertNode} and \textsc{deleteNode} operations
are using the single word CAS (Compare and Swap)
and \textit{leaf-checking} to coordinate between the
operations (cf. Figure \ref{lst:pseudo-ops-simple}, lines \ref{lst2:line:markdel} and 
\ref{lst2:line:markdel-check} for delete, and \ref{lst2:line:ins-lp1}
and \ref{lst2:line:ins-lp2} for insert).

\subparagraph{Maintenance functions.} \label{sec:maintenance-func}

Other than the basic functions, $\Delta$Tree has tree maintenance functions 
that are invoked in certain cases of inserting and deleting a node from the tree. 
These functions, namely rebalance, expand and merge, are the unique
feature of $\Delta$Tree that acts as a garbage collectors for marked nodes 
and as a safeguard to maintain $\Delta$Tree's small height.

Function \textsc{rebalance($T_v.root$)} (cf.  Figure \ref{lst:pseudo-ops-simple} 
line \ref{lst2:line:expand}) is responsible for rebalancing a
$\Delta$Node after an insertion.
Figure \ref{fig:treetransform}a illustrates the rebalance operation. If a
new node $v$ is to be inserted at the last level $H$ of $\Delta$Node $T_1$, the
$\Delta$Node is rebalanced to a complete BST. All
leaves' heights are then set to $\lfloor\log N\rfloor + 1$,
where $N$ is the number of leaves (e.g., $a,v,x,z$ in Figure \ref{fig:treetransform}a). 
In the same Figure \ref{fig:treetransform}a, after rebalancing, tree $T_1$ height reduces from 4 to 3.

We also define the \textsc{expand($l$)} function (cf.  Figure \ref{lst:pseudo-ops-simple} 
line \ref{lst2:line:expand}) that is responsible for
creating a new $\Delta$Node and connecting it to the parent of a leaf node $l$ 
(cf. Figure \ref{fig:treetransform}b).
Expanding is triggered only if 1) after \textsc{insertNode($v, U$)}, leaf that contains 
$v$
will be at the last level of a $\Delta$Node; and 2) rebalancing will no longer
reduce the current height of $\Delta$Node $T_i$. 
For example if an expand takes place at node $l$ which is located at the lowest
level of a $\Delta$Node, or $depth(l) = H$, 
the parent of $l$ swaps one of its child pointers that previously points to
$l$ into the root of the newly created $\Delta$Node's $root$ (cf. Figure \ref{fig:treetransform}b).

Function \textsc{merge($T_x.root$)} (cf.  Figure \ref{lst:pseudo-ops-simple} 
line \ref{lst2:line:merge}) is defined to merge $T_x$ with its sibling. 
For example in Figure \ref{fig:treetransform}c, $T_2$ is
merged into $T_3$. Then
the pointer of $T_3$'s grandparent that previously points
to the parent of both $T_3$ and $T_2$ is replaced by a pointer to $T_3$.
Merge operation is invoked provided that a particular $\Delta$Node, where 
a deletion has taken place, is filled less than $2^{^t/_2}$ of its capacity 
(where $t = \log(\UB)$) and both
nodes of that $\Delta$Node and its siblings are going to fit into a single $\Delta$Node. 

Before any maintenance operation starts, the threads responsible 
are required to do Test and Set (TAS) on $\Delta$Node's \textit{locked} 
field (cf. Figure \ref{lst:pseudo-ops-simple}, lines \ref{lst2:line:lockbuf} and  \ref{lst2:line:lockbuf-del}). Advanced locking techniques \cite{HaPT07_JSS, KarlinLMO91, LimA94} can also be used. If a $\Delta$Node's \textit{locked} has been set to TRUE,
new incoming update operations are forced to wait at the tip of that $\Delta$Node
so that they will not interfere with the maintenance operation  
(cf. Figure \ref{lst:pseudo-ops-simple}, line \ref{lst2:line:waitdelete}
and \ref{lst2:line:waitinsert}) . 

For efficiency, a $\Delta$Node's maintenance operation starts 
after a batch of concurrent update operations within that $\Delta$Node. The TAS
also serves as a mechanism to make the concurrent update operations compete for 
a maintenance lock (\textit{locked} variable). The losing threads will
leave their maintenance burden 
to a thread who successfully acquired the $\Delta$Node's \textit{locked} variable.

After the maintenance operation finished, $\Delta$Node's \textit{locked} 
is released by the maintenance thread. 
Therefore, waiting update threads can continue with their respective operation 
within the newly-maintained $\Delta$Node.

Note that although \textsc{rebalance}
and \textsc{merge} execution are sequential within a $\Delta$Node, invoking 
these functions involves at most $2\times\UB$ nodes, where a constant $\UB \ll N$.

\subparagraph{Mirroring.} \label{sec:mirroring}
Whenever a $\Delta$Node is undergoing a
maintenance operation (balancing, expanding, or merging), a mirroring operation
also takes place to facilitate the wait-free search. 
Mirroring works by maintaining the original $\Delta$Node and writing 
the results into the mirror $\Delta$Node. 

After a maintenance operation finishes, the pointer to the root of the maintained $\Delta$Node 
is switched to the root of the $\Delta$Node's mirror. In result, the mirror, or maintained $\Delta$Node, 
is now become part 
of $\Delta$Tree, replacing the "old" $\Delta$Node. As the $\Delta$Node's buffer is enclosed within a $\Delta$Node, 
it is also switched at the same time $\Delta$Node was switched. 
As a result, the original $\Delta$Node and its helping buffer served as the latest snapshot, which
enables wait-free search for $\Delta$Tree. 

\paragraph{Balanced $\Delta$Tree}\label{sec:bDeltaTree} 

$\Delta$Tree served as a proof of concept
of a concurrency-aware vEB-based search trees, although it has major weaknesses that 
can affect its empirical performance.
First, the 
\textit{left} and \textit{right} pointers are occupying larger memory 
spaces compared to the data values. For example, in a $\Delta$Node 
with 127 nodes, the set of pointers will occupy 2032 bytes ($127 \times 16$ bytes) 
of memory in a 64-bit operating system. 
Therefore, the amount of space used by pointers alone 
is four times the amount of space used by the data values (i.e., $127 \times 4$ bytes = 508 bytes, 
assuming node's $value$ is a 4 bytes integer).
This is inefficient, because every block transfer between any levels of memory carries
a bigger portion of pointers instead of important data.
Second, inserting a sequence of increasing or decreasing 
numbers into $\Delta$Tree will results a linked-list of $\Delta$Node (i.e., $\Delta$Tree's
height is equal to $N$, or the number of elements). This is a direct implication 
of the unbalanced form of $\Delta$Tree.  

Balanced $\Delta$Tree (b$\Delta$Tree) is implemented to address the $\Delta$Tree's weaknesses. 
b$\Delta$Tree is developed by devising three major
strategies, namely replacing internal $\Delta$Node pointers with map, crafting the
efficient inter-node connection, and ensures that the tree is always balanced.

Improving $\Delta$Tree to become a pointer-less concurrent cache-oblivious 
is difficult. For example, Brodal, et al. \cite{BrodalFJ02}
have discussed about pointer-less CO BST and their
implementation involves heavy calculation to determine the children positions of a node.
In contrast to our $\Delta$Tree that consists of group of $\Delta$Nodes, 
they are treating their tree as $N$-sized vEB-layout tree. Thus, not only we need 
to address the node parent-child pointers, we must also consider the inter-$\Delta$Node pointers. 

As a matter of fact, having fixed-sized $\Delta$Nodes ($\UB \ll N$) is 
actually advantageous. A single pre-calculated
map of all $\UB$ nodes can be generated once and used by all operations 
throughout the 
whole tree. The inter-$\Delta$Node connection still need pointers
though, but instead of $2\times\UB$ pointers, we will show that 
a significant reduction of pointers ($^1/_2\times\UB$) is possible.

\subparagraph{"Map" instead of pointers.}  \label{sec:mapdesc}
\begin{figure}[!t] \centering 
\begin{algorithmic}[1]

\Start{\textbf{Map}} \label{lst:line:map}
	\START{member fields:}
\State $left \in
	\mathbb{N}$, interval of the \textit{left} child pointer address
\State $right \in
	\mathbb{N}$, interval of the \textit{right} child pointer address 
\END \End

\Statex
\State \textbf{Map} $map[\UB]$

\Statex 

\Function{right}{p, base}  \label{lst:line:right}
    \State $nodesize \gets  \textsc{sizeOf}(\textbf{single node})$
    \State $idx \gets (p - base)/nodesize$
    
    \If {($map[idx].right$ != 0)}
        \State \Return $base + map[idx].right$
    \Else
        \State \Return 0
    \EndIf
\EndFunction

\Statex 

\Function{left}{p, base} \label{lst:line:left}
    \State $nodesize \gets  \textsc{sizeOf}(\textbf{single node})$
    \State $idx \gets (p - base)/nodesize$
    
    \If {($map[idx].left$ != 0)}
        \State \Return $base + map[idx].left$
    \Else
        \State \Return 0
    \EndIf
\EndFunction
\end{algorithmic}
\caption{\textit{Mapping} functions.}
\label{lst:map-func}
\end{figure}

The Balanced $\Delta$Tree is implemented 
by completely eliminating (\textit{left} and \textit{right}) 
pointers within a $\Delta$Node. Instead, these pointers are replaced with the corresponding
\textsc{left} and \textsc{right} functions (Figure \ref{lst:map-func}, lines  \ref{lst:line:right}
and \ref{lst:line:left}). \textsc{left} and \textsc{right} functions, given an arbitrary node $v$ and
the root memory address of its container $\Delta$Node, will return the the addresses 
of the left and right child nodes of $v$. These functions will return $0$ if $v$ is the deepest node
within a $\Delta$Node. The \textsc{left} and \textsc{right} functions requires 
a \textit{map} array with $\UB$ length to determine a node's children address (Figure  \ref{lst:map-func}, line  \ref{lst:line:map}).

$\Delta$Node's \textit{map} is a read-only record of the nodes' memory address differences.
The address differences of every $\Delta$Node's nodes and its left and right children 
are pre-calculated and recorded when the Balanced $\Delta$Tree is initialized. 
Because $\Delta$Nodes are using the same fixed-size concurrency-aware vEB layout, 
only one map is needed for all of the available $\Delta$Nodes.  
Therefore, since only one \textit{map} with size
$\UB$ is used for all traversing, memory footprint for the Balanced $\Delta$Tree's 
$\Delta$Node operations can be kept minimum (namely, the read-only map is always cached). 
Please note that the \textit{map} only supports internal (within $\Delta$Node) node traversal. 

\subparagraph{Inter-$\Delta$Node connection.}  \label{sec:interdesc}

To facilitate the traversing from one $\Delta$Node to another, the inter-$\Delta$Node connection
mechanism is used in addition to the \textit{map}.
To use the inter-$\Delta$Node connection, first we logically assign a color to each $\Delta$Node's
node edge. 
Each node has only two outgoing edges, which are the left edge and the right edge. 
These left and right edges are assigned color 0 and 1, respectively.
Now any paths traversed from the $root$ of a $\Delta$Node to reach 
any internal node will produce a bit-sequence of colors. This bit representation
can be translated into an array index that contains a pointer to a child $\Delta$Node. 
The maximum size of the bit representation is the height of $\Delta$Node or $\log(\UB)$ bits.
We are using a leaf-oriented tree and allocate a pointer array with
$^1/_2\times\UB$ length. Pseudocode in Figure \ref {lst:scannode-func}
explains how the inter-$\Delta$Node connection works in a pointer-less search function.

The removal of the internal nodes' pointers results in 200\% more
node counts in a $\Delta$Node compared to the $\Delta$Tree with the same $\UB$ size.
A $\Delta$Node's struct is now consist of internal nodes, which is just an array of keys, 
plus an array of $^1/_2\times\UB$ pointers for the inter-$\Delta$Node connection.

\begin{figure}[!t] \centering 
\begin{algorithmic}[1] 

\Function{pointerLessSearch}{key, $\Delta$Node, maxDepth}
    
  \While{$\Delta$Node is not leaf}
   	\State $bits \gets 0$
    	\State $depth \gets 0$
   	\State $p \gets {\Delta}Node.root$
   	\State $base \gets p$
    	\State $link \gets {\Delta}Node.link$
    
    	\While{($p \And p.value$ != EMPTY)}  \Comment{continue until leaf node} 
        		\State $depth \gets depth + 1$ \Comment{increment depth}
        		\State $bits \gets bits << 1$ \Comment{shift one bit to the left in each level}
        		\If {($key < p.value$)}
             		\State $p \gets$ \textsc{left}($p, base$)
        		\Else
            		\State $p \gets$ \textsc{right}($p, base$)
            		\State $bits \gets bits+1$ \Comment{right child color is 1}
        		\EndIf
    	\EndWhile
	\LComment{pad the $bits$ to get the index of the child $\Delta$Node's pointer:}
    	\State $bits \gets bits << (maxDepth - depth) - 1$ 
    	\LComment{follow nextRight if highKey is less than searched value:}
    	\If{(${\Delta}Node.highKey <= key$)}
     		\State ${\Delta}Node \gets {\Delta}Node.nextRight$
   	\Else
     		\State ${\Delta}Node \gets link[bits]$ \Comment{jump to child $\Delta$Node}
   	\EndIf
  \EndWhile
  \State \Return ${\Delta}Node$
\EndFunction

\end{algorithmic}
\caption{Search within pointer-less $\Delta$Node. This function will return the {\em leaf} $\Delta$Node containing
the searched key. From there, a simple binary search using \textsc{left} and \textsc{right} functions
is adequate to pinpoint the key location. The search operations are utilizing both the 
$nextRight$ pointers and  $highKey$
variables to handle concurrent $\Delta$Node splitting~\cite{Lehman:1981:ELC:319628.319663}.}
\label{lst:scannode-func}
\end{figure}

\subparagraph{Concurrent and balanced tree.}

To enable both of the concurrency and balanced tree out of the $\Delta$Tree, 
new variables are added into the $\Delta$Nodes and
changes are made to the $\Delta$Tree's update and maintenance operations.
We improved $\Delta$Tree's update algorithms by adopting 
the algorithms from the concurrent lock-based 
B-link trees \cite{Lehman:1981:ELC:319628.319663} 
by Lehman and Yao.

The insert operations in b$\Delta$Trees are now done in a bottom-up fashion 
to ensure that the tree is always balanced.
Meanwhile, the search operations are done in a top-down, left-to-right fashion.

The modification to how the insert and search operations work results in the omission
of both \textsc{merge($T_x.root$)} and \textsc{expand($l$)} functions 
(cf. Section \ref{sec:maintenance-func}). 
Instead, the new \textsc{split($T_v$)} functions is introduced to aid the bottom-up
tree building.

\textsc{split($T_v$)} functions splits the $\Delta$Node $T_v$ into $T'_v$ and 
a right sibling $T_x$. The internal nodes that were previously occupying 
$T_v$ are distributed evenly between $T'_v$ and $T_x$. 
If $T_v$ is the topmost $\Delta$Node, a new parent $\Delta$Node $T_p$ is created as well, taking
both $T'_v$ and $T_x$ as its children. 
The split operation concludes with an insertion of the lowest value from $T_x$ nodes
into $T_p$.
The $\Delta$Node split is triggered only when a $\Delta$Node overflows after an insertion, 
which means it is at least filled with ($2^{^t/_2}$) internal nodes, where $t=\log(\UB)$.

Beside the addition of split operation, 
two additional variables are added inside the struct \textsc{$\Delta$NodeMeta}. 
These additional variables 
are $nextRight$ pointer, which points to the right sibling $\Delta$Node; and $highKey$
variable that contains the upper-bound value of the $\Delta$Node. Starting with NULL values,
both of these new variables are populated when the $\Delta$Node splits. For example, 
after $T_v$ splits, the $T_v.nextRight$ is going to point to $T_x$, 
while $T_v.highKey$ is set to the minimum value of all the
nodes in $T_x$. These new variables
help the concurrent search with respect to the node splits \cite{Lehman:1981:ELC:319628.319663}.

With the addition of $nextRight$ and $highKey$ variables, 
Balanced $\Delta$Tree's search operations do not need locks and waits
(cf. Figure \ref{lst:scannode-func}). However, 
the search cannot be regarded as wait-free. 
According to Lehman and Yao \cite{Lehman:1981:ELC:319628.319663}, 
in the worst case, which is extremely unlikely, an infinite loop caused by 
continuous $\Delta$Node splits might happen.

To address the memory waste, 
the same rebalancing procedure as in Figure \ref{fig:treetransform}a is employed by b$\Delta$Tree. 
The rebalancing also helps
to clean-up the nodes marked for deletion, keeping $\Delta$Nodes always in a good shape.
A mirroring whenever the tree splits and rebalances is also retained to make the tree
traversals not require any locks and waits.

\paragraph{Heterogeneous $\Delta$Tree}\label{sec:hDeltaTree}
 
In Balanced $\Delta$Tree, the leaf-oriented (or external tree) layout is 
adopted for $\Delta$Nodes in order to facilitate the inter-$\Delta$Node connection mechanism
using $^1/_2\times\UB$ pointers.
However, it is not necessary for the leaf $\Delta$Nodes to have leaf-oriented 
layout since they do not have any successor $\Delta$Nodes.
 
Based on the above observation, we devise a heterogeneous balanced 
$\Delta$Tree by changing the layout of the balanced $\Delta$Tree's leaf $\Delta$Nodes.
This new layout uses internal tree instead of the external tree for the leaf $\Delta$Nodes. 
With this layout change, 100\% more key nodes are now fit into any leaf $\Delta$Node,
if compared to the previously non-leaf $\Delta$Nodes with the same $\UB$ limit.
To save more memory spaces, we also remove the array of pointers for intra-$\Delta$Node connection
in the leaf $\Delta$Nodes metadata. 

With this improved version of $\Delta$Tree, or heterogeneous $\Delta$Tree, we find that the efficiency of 
searches is improved (cf. Section \ref{sec:perfeval}). Compared to $\Delta$Tree and balanced $\Delta$Tree, the heterogeneous $\Delta$Tree
delivers lower number of cache misses and more efficient branching.

\subsubsection{Libraries of Concurrent Search Trees}
We have developed concurrent search tree libraries that contain the following components:
\begin{enumerate}
  \item Non-blocking binary search tree (NBBST). The Non-blocking binary search tree (NBBST) library contains the non-blocking binary search tree of Ellen \cite{EllenFRB10}.
  \item STM-based search trees. STM-based search trees library contains the Software Transactional Memory (STM)-based AVL tree (AVLtree), red-black tree (RBtree), and speculation
friendly tree (SFtree) from the Synchrobench benchmark \cite{synchrobench}.
  \item Concurrent B-tree (CBtree). Concurrent B-tree (CBTree) library contains an optimized Lehman and Yao B-link tree \cite{Lehman:1981:ELC:319628.319663}. 
B-link tree is a highly-concurrent B-tree variant 
and it is still being used as a backend in popular database systems such as 
PostgreSQL\footnote{\url{https://github.com/postgres/postgres/blob/master/src/backend/access/nbtree/README}}.
  \item Static cache-oblivious tree using the static vEB layout (VTMtree). The Concurrent static vEB binary search tree (VTMTree) library contains a concurrent version of the static vEB binary search tree  \cite{BrodalFJ02} developed using GNU C Compiler v4.9.1's STM.
  \item DeltaTree ($\Delta$Tree). The tree families include Delta tree, Balanced Delta tree and Heterogeneous Delta tree are described in Section ~\ref{sec:delta-trees}.

\end{enumerate}

\paragraph{Obtaining and compilation.} The libraries are provided in a separate directory for easy access and maintenance. The repository address 
is http://gitlab.excess-project.eu/ibrahim/tree-libraries. A makefile for each of the libraries is also provided to aid compilations. 
The libraries have been tested on Linux and Mac OS X platforms.

\paragraph{Running and outputs.} By default, the provided makefile will build the standalone benchmark 
version of the libraries which will accept these following parameters:

\noindent\fbox{%
    \parbox{0.95\textwidth}{%
   \texttt{-r <NUM>    : Allowable range for each element (0..NUM)\\
-u <0..100> : Update ratio. 0 = Only search; 100 = Only updates\\
-i <NUM>    : Initial tree size (initial pre-filled element count)\\
-t <NUM>    : DeltaNode ($\UB$) size (ONLY USED IN DELTATREE FAMILIES)\\
-n <NUM>    : Number of benchmark threads\\
-s <NUM>    : Random seed. (0 = using time as seed, Default)
}}}
\\
\\
The benchmark outputs are formatted in this sequence: 

\noindent\fbox{%
    \parbox{0.95\textwidth}{%
    \texttt{
0: range, insert ratio, delete ratio, \#threads, \#attempted insert, \#attempted delete, \#attempted search, \#effective insert, \#effective delete, \#effective search, time (in msec.)
}}}

NOTE: {\tt 0:} characters are just unique token for easy tagging (e.g., for using {\tt grep}).

\noindent\fbox{%
    \parbox{0.95\textwidth}{%
\ttfamily
\$ ./DeltaTree -h \newline
DeltaTree v0.1\newline
===============\newline
Use -h switch for help.\newline
\newline
Accepted parameters\newline
-r <NUM>    : Range size\newline
-u <0..100> : Update ratio. 0 = Only search; 100 = Only updates\newline
-i <NUM>    : Initial tree size (inital pre-filled element count)\newline
-t <NUM>    : DeltaNode size\newline
-n <NUM>    : Number of threads\newline
-s <NUM>    : Random seed. 0 = using time as seed\newline
-d <0..1>   : Density (in float)\newline
-v <0 or 1> : Valgrind mode (less stats). 0 = False; 1 = True\newline
-h          : This help\newline
\newline
Benchmark output format: \newline
"0: range, insert ratio, delete ratio, \#threads, attempted insert, attempted delete, attempted search, effective insert, effective delete, effective search, time (in msec)"
}}

\noindent\fbox{%
    \parbox{0.95\textwidth}{%
\ttfamily
\$ ./DeltaTree -r 5000000 -u 10 -i 1024000 -n 10 -s 0\newline
DeltaTree v0.1\newline
===============\newline
Use -h switch for help.\newline
\newline
Parameters:\newline
- Range size r:		 5000000\newline
- DeltaNode size t:	 127\newline
- Update rate u:	 10\% \newline
- Number of threads n:	 10\newline
- Initial tree size i:	 1024000\newline
- Random seed s:	 0\newline
- Density d:		 0.500000\newline
- Valgrind mode v:	 0\newline
\newline
Finished building initial DeltaTree\newline
The node size is: 25 bytes\newline
Now pre-filling 1024000 random elements...\newline
...Done!\newline
\newline
Finished init a DeltaTree using DeltaNode size 127, with initial 1024000 members\newline
\#TS: 1421050928, 511389\newline
Starting benchmark...\newline
Pinning to core 0... Success\newline
Pinning to core 3... Success\newline
Pinning to core 1... Success\newline
Pinning to core 8... Success\newline
Pinning to core 9... Success\newline
Pinning to core 10... Success\newline
Pinning to core 2... Success\newline
Pinning to core 11... Success\newline
Pinning to core 4... Success\newline
Pinning to core 12... Success\newline
\newline
0: 5000000, 5.00, 5.00, 10, 249410, 248857, 4501733, 195052, 53720, 1000568, 476\newline
\newline
Active (alloc'd) triangle:258187(266398), Min Depth:12, Max Depth:30 \newline
Node Count:1165332, Node Count(MAX): 1217838, Rebalance (Insert) Done: 234, Rebalance (Delete) Done: 0, Merging Done: 1\newline
Insert Count:195052, Delete Count:53720, Failed Insert:54358, Failed Delete:195137 \newline
Entering top: 0, Waiting at the top:0
}}

NOTE: {\tt \#TS:} is the benchmark start timestamp. 

\paragraph{Pluggable library.} To use any component as a library, each library provides a (.h) header file and a simple, 
uniform API in C. These available and callable APIs are:

\noindent\fbox{%
    \parbox{0.95\textwidth}{%
\textsc{Structure:}\\
\\
\texttt{<libname>\_t} : Structure variable declaration.\\
\\
\textsc{Functions:}\\
\\
\texttt{<libname>\_t* <libname>\_alloc()} :	Function to allocate the defined structure, returns the allocated (empty) structure.\\
\\
\texttt{void* <libname>\_free(<libname>\_t* map)} :	Function to release all memory used by the structure, returns NULL on success.\\
\\
\texttt{int <libname>\_insert(<libname>\_t* map, void* key, void* data)} : Function to insert a key and a linked pointer (data), returns 1 on success and 0 otherwise.\\
\\
\texttt{int <libname>\_contains(<libname>\_t* map, void* key)} : Function to check whether a key is available in the structure, returns 1 if yes and 0 otherwise.\\
\\
\texttt{void *<libname>\_get(<libname>\_t* map, void* key)} : Function to get the linked data given its key, returns the pointer of the data of the corresponding key and 0 if the 
key is not found.\\
\\
\texttt{int <libname>\_delete(<libname>\_t* map, void* key)} : Function to delete an element that matches the given key, returns 1 on success and 0 otherwise. 
}}

As an example, the concurrent B-tree library provides the \texttt{cbtree.h} file that can be linked into
any C source code and provides the callable \texttt{cbtree\_t* cbtree\_alloc()} function. It is also possible
to use the MAP selector header (\texttt{map\_select.h}) plus defining which tree type to use 
so that  MAP\_\textless{operator}\textgreater functions are used instead as specific tree function as the below example:

\begin{lstlisting}[frame=single, language=C]
#define MAP_USE_CBTREE 
#include "map_select.h"

int main(void)
{
	long numData = 10;
	long i;
	char *str;
	puts("Starting...");
	MAP_T* cbtreePtr = MAP_ALLOC(void, void); 
	assert(cbtreePtr);
	for (i = 0; i < numData; i++) { 
		str = calloc(1, sizeof(char)); *str = ÕaÕ+(i%254);
		MAP INSERT(cbtreePtr, i+1, str); 
	}
	for (i = 0; i < numData; i++) {
		printf("%ld: %c\n", i+1, 
			*((char*)MAP_FIND(cbtreePtr, i+1))); 
	}
	for (i = 0; i < numData; i++) {
		printf("%ld: %d\n", i+1, 
			MAP_CONTAINS(cbtreePtr, i+1));
	}
	for (i = 0; i < numData; i++) {
		MAP_REMOVE(cbtreePtr, i+1);
	}
	for (i = 0; i < numData; i++) {
		printf("%ld: %d\n", i+1, 
			MAP_CONTAINS(cbtreePtr, i+1));
	}
	MAP_FREE(cbtreePtr)
	puts("Done."); 
	return 0;
}

\end{lstlisting}

\paragraph{Intel PCM integration.} All of the libraries provide support for Intel PCM measurement. 
To enable Intel PCM measurement metrics,
the compiler must be invoked using \texttt{-DUSE\_PCM} parameter during the libraries's compilation 
and all the Intel PCM compiled object files must be linked to the output executables.
  





\subsubsection{Performance and Energy Analysis of Concurrent Search Trees} \label{sec:evaluation}
We have experimentally analyzed the performance and energy efficiency of $\Delta$Tree (Section \ref{sec:DeltaTree}), 
balanced $\Delta$Tree (b$\Delta$Tree) (Section \ref{sec:bDeltaTree}), and heterogeneous $\Delta$Tree
(h$\Delta$Tree) (Section \ref{sec:hDeltaTree}) in comparison with that of the other prominent concurrent search trees in the libraries (cf. Section \ref{sec:benchsetup}). The experimental evaluation was conducted on both Intel high performance computing (HPC) and ARM embedded platforms. 
We have also evaluated how $\Delta$Trees would perform 
in the worst-case and average-case setups against the highly optimized B-tree  and the widely used GCC's std::set (cf. Section \ref{sec:worstCase}).
The experimental evaluation was conducted on both Intel high performance computing (HPC) platforms and an ARM embedded platform. 

\paragraph{Testbed choices} \label{sec:benchsetup}

Pthreads were used for threading and all running threads were pinned to
the available physical cores using pthread\_setaffinity\_np. 
GCC 4.9.1 was used with -O2 for all program compilations. All of the tests
are repeated at least $10\times$ to guarantee consistent results.

Several decisions have been made for choosing VTMtree's size, $\Delta$Nodes' $\UB$
and CBTree's block size. Since the VTMtree's size was fixed, 
we set it to $2^{23}$ so that VTMtree needed not to expand. For fair
comparisons, $\Delta$Trees' $\UB$ and CBTree's 
order were set to their respective values so that each $\Delta$Node and CBTree's node fitted within a memory page (4KB).
Therefore, CBTree's order was set to 336.

\subparagraph{Performance benchmark setup}

Performance indicators (in operations/second) were calculated 
using the number of ($\mathit{rep}=\num[group-separator={,}]{5000000}$) operations 
divided by the maximum time for the threads to finish the whole operations.
Combination of
update rate $u=\{0, 20, 50\}$ and number of thread 
$\mathit{nr}=\{1, 2, \ldots, 16\}$ were used for each run.
Update rate of 0 equals to 100\% search, 
while 50 update rate equals to 50\% insert and delete operations out of $\mathit{rep}$ operations.
All involved operations used random
values of $v \in (0, \mathit{init} \times 2], v \in \mathbb{N}$. 
All the trees were initialized with a number $init$ of nodes to simulate trees 
that partially fit into the last level cache (LLC) of the evaluating platform. 
The $init$ values were $\num[group-separator={,}]{4194303}$ and $\num[group-separator={,}]{2097151}$ for the Intel HPC platform and ARM embedded platform, respectively. 

The Intel HPC platform was equipped with 
dual Intel Xeon E5-2670 CPUs, 
with total of 16 cores (no hyperthreading). The platform had 32GB of RAM,  
2MB (8$\times$256KB) L2 cache, and shared 20MB L3 cache for each CPU. 
Linux with kernel version 2.6.32-358 was used in this platform.

The embedded ARM platform was a Samsung Exynos 5410
octa-core ARM board with ARM's big.LITTLE CPU. This board was equipped with four Cortex-A15 1.6Ghz cores 
and four Cortex-A7 1.2Ghz cores in a single CPU and 2GB LPDDR3 RAM. The A15 cores
had a 2MB shared L2 cache while the A7 cores had a 512KB shared L2 cache.
Although the CPU had a total of 8
cores, only 4 cores could be active at a time because of its design limitation. Linux with kernel version 3.4.98 was used
in this board. As the GCC compiler available for this board did not support the transactional memory
extension, we had to remove VTMtree from the benchmarks for this platform.

\subparagraph{Energy benchmark setup}

To assess the energy consumption of the trees, energy indicators 
were subsequently collected during 100\% search and 50\% update benchmarks.
The ARM platform was equipped with a built-in "off-chip" power measurement system that 
was able to measure the energy for the A15 cores, A7 cores, and memory in real-time. 
For the Intel HPC platform, a server with two Intel Xeon E5-2690 v2
of 20 cores in total was used. The Intel PCM library using built-in CPU counters was used to measure 
the energy for each CPU and DRAM. 
  
The benchmarks run only with the minimum and maximum
available physical cores. The total energy
consumed by all CPUs and memory (in Joules) was collected and divided by the number of
operations. The collected measurements did not include the initialization cost of trees.

\paragraph{Performance results} \label{sec:perfeval}

\begin{figure}[!t] \centering
\resizebox{0.9\columnwidth}{!}{\scriptsize\input{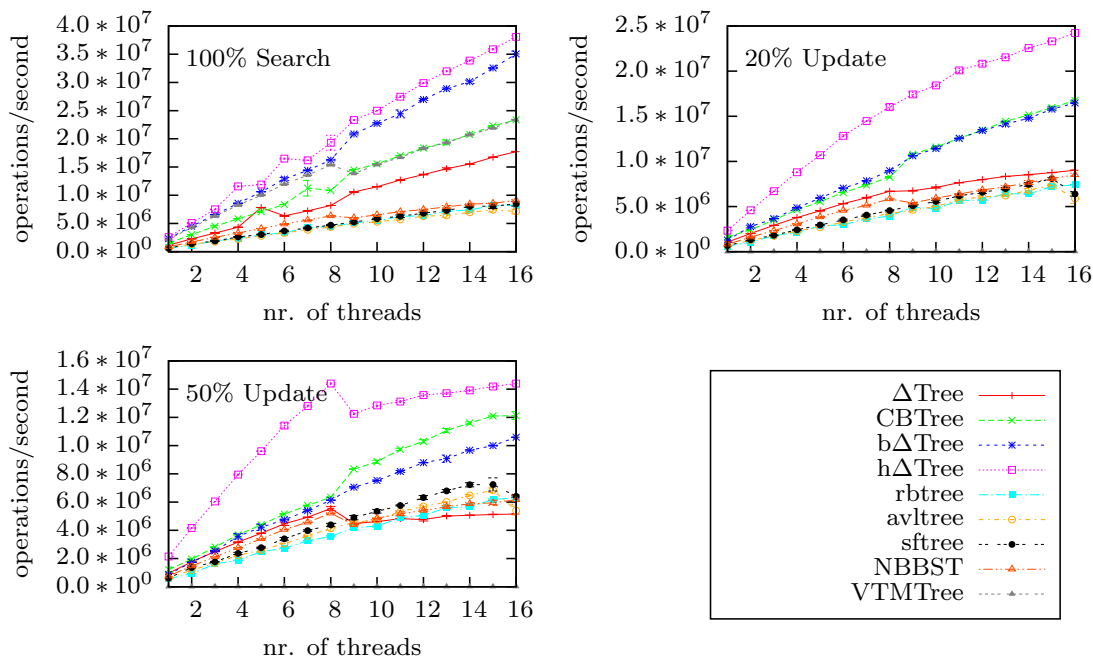}}
\caption{Performance comparison of the tested trees with 
\num[group-separator={,}]{4194303} initial members on an Intel HPC platform with two 8-core chips. On a single chip, the heterogeneous $\Delta$Tree (h$\Delta$Tree) is 140\% faster than the concurrent B-tree (CBTree) in the 50\%-update benchmark with 8 threads. The h$\Delta$Tree performance decreases when the number of threads goes from 8 to 9 because of the cache-coherence issue between two chips (cf. Section \ref{sec:perfeval}).}
\label{fig:perf-4194303}
\end{figure}

\begin{figure}[!t] \centering
\resizebox{0.9\columnwidth}{!}{\scriptsize\input{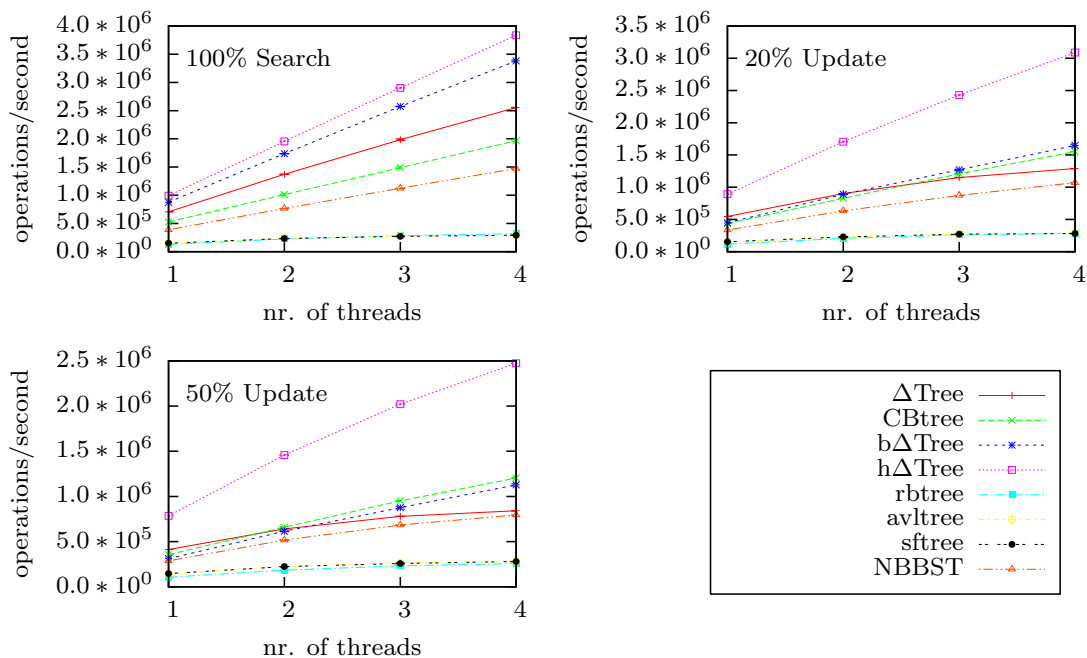}}
\caption{Performance comparison of the tested trees with 
\num[group-separator={,}]{2097151} initial members on an embedded ARM platform. The heterogeneous $\Delta$Tree (h$\Delta$Tree) is 100\% faster than the concurrent B-tree (CBTree) in the 50\%-update benchmark with 4 threads.}
\label{fig:perf-2097151}
\end{figure}

Among the new trees proposed in this paper, b$\Delta$Tree was up to 100\% faster 
than $\Delta$Tree for 100\%
search and was up to 20\% faster in 50\% updates. The h$\Delta$Tree
was up to 5\% faster than b$\Delta$Tree in 100\%
searching. However, in 20\% and 50\% updates,
h$\Delta$Tree was faster by up to 140\% than the b$\Delta$Tree (cf.
Figure \ref{fig:perf-4194303}). 

In comparison with the other trees, Figure 
\ref{fig:perf-2097151} and \ref{fig:perf-4194303} show that the heterogeneous $\Delta$Tree (h$\Delta$Tree)
was the fastest. The h$\Delta$Tree was up to 140\% faster than CBTree
in the 50\% update/8-thread setting on the Intel HPC platform, and  
was up to 100\% faster on the embedded ARM platform
in the 20\% update.  The b$\Delta$Tree's and CBTree's performance was trailing behind h$\Delta$Tree
in all test-cases.

\paragraph{Energy consumption results} \label{sec:energyEval}

\begin{figure}[!t] \centering
\resizebox{0.9\columnwidth}{!}{ \includegraphics{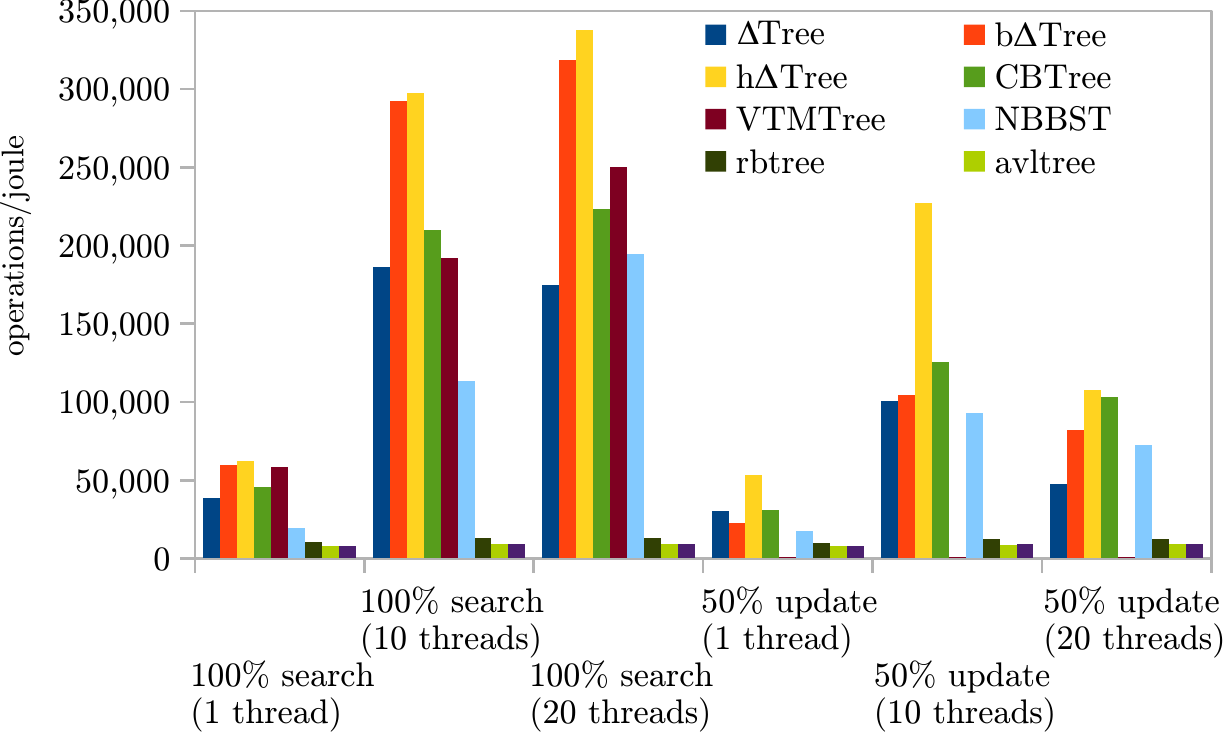}}
\caption{Energy efficiency comparison on an Intel HPC platform with two 10-core chips. On a single chip, the heterogeneous $\Delta$Tree (h$\Delta$Tree) was 80\% more energy efficient than the concurrent B-tree (CBTree) in the 50\%-update benchmark with 10 threads. The energy efficiency decreases for some trees in the 50\%-update benchmark with 20 threads (i.e., with 2 chips) because of the cache-coherence issue between two chips (cf. Section \ref{sec:perfeval}). (\textit{Standard errors} $\leq 0.4\%$).}
\label{fig:X86-power}
\end{figure}

\begin{figure}[!t] \centering
\resizebox{0.9\columnwidth}{!}{ \includegraphics{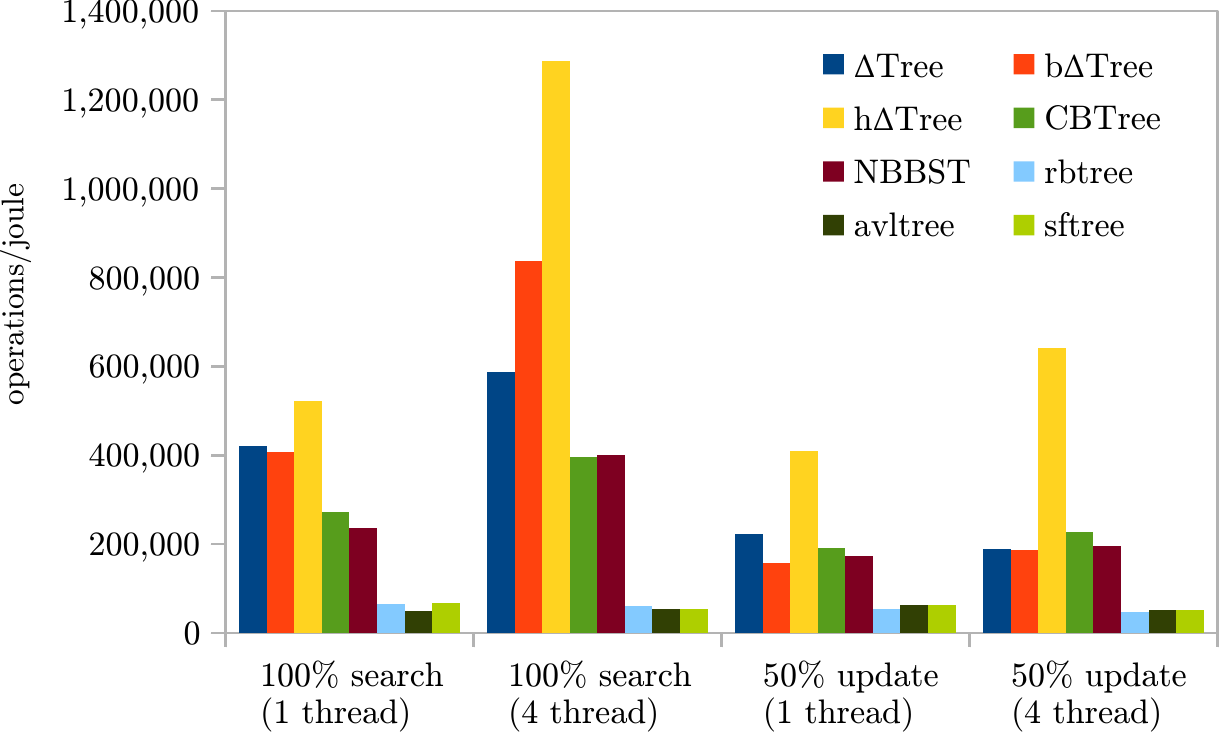}}
\caption{Energy efficiency comparison on an embedded ARM platform. The heterogeneous $\Delta$Tree (h$\Delta$Tree) was 220\% more energy efficient than the concurrent B-tree (CBTree) in the 100\%-search benchmark with 4 threads. (\textit{Standard errors} $\leq 0.27\%$)}
\label{fig:arm-power}
\end{figure}

Among the new trees proposed, balanced $\Delta$Tree (b$\Delta$Tree) was up to 85\% more energy efficient
than $\Delta$Tree (cf. Figure \ref{fig:X86-power}, 100\% search/20-thread setting).
On the other hand, heterogeneous $\Delta$Tree (h$\Delta$Tree) was up to 125\% more energy efficient
than b$\Delta$Tree (cf. Figure \ref{fig:X86-power}, 50\% update/10-thread setting). The results indicated that the 
increased locality by pointer-less $\Delta$Nodes in b$\Delta$Tree and
heterogeneous $\Delta$Nodes with 100\% more leaf nodes in h$\Delta$Tree significantly lowered the energy consumption of the trees.
 
In comparison with the other trees (e.g., CBTree) on the HPC platform (Figure
\ref{fig:X86-power}), h$\Delta$Tree was up to 33\% more energy
efficient in the 100\% search benchmark and 80\% more energy
efficient in the 50\% update/10-thread setting. 

Figure \ref{fig:arm-power} demonstrated the energy efficiency of concurrency-aware vEB-based trees on the embedded ARM platform. 
The most energy efficient tree was h$\Delta$Tree which was 
220\% more energy efficient than CBTree and NBBST
in the 100\% search/4-thread setting. In the 50\% update/4-thread setting, 
h$\Delta$Tree was the only one that managed to 
beat CBTree by 180\%.

\begin{figure}[!t] \centering
\resizebox{0.9\columnwidth}{!}{ \includegraphics{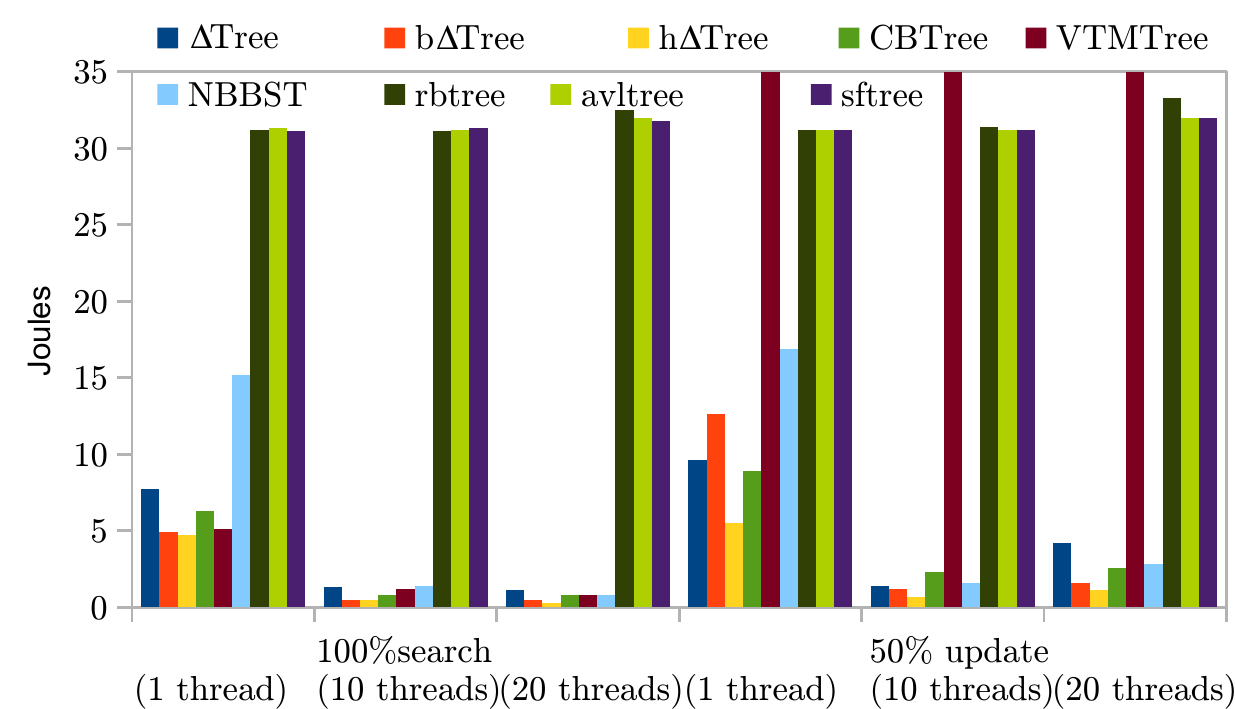}}
\caption{Memory (DRAM) energy consumption on Intel HPC platform. 
Measured using Intel PCM.}
\label{fig:mem-energy}
\end{figure}

\begin{figure}[!t] \centering
\resizebox{0.9\columnwidth}{!}{ \includegraphics{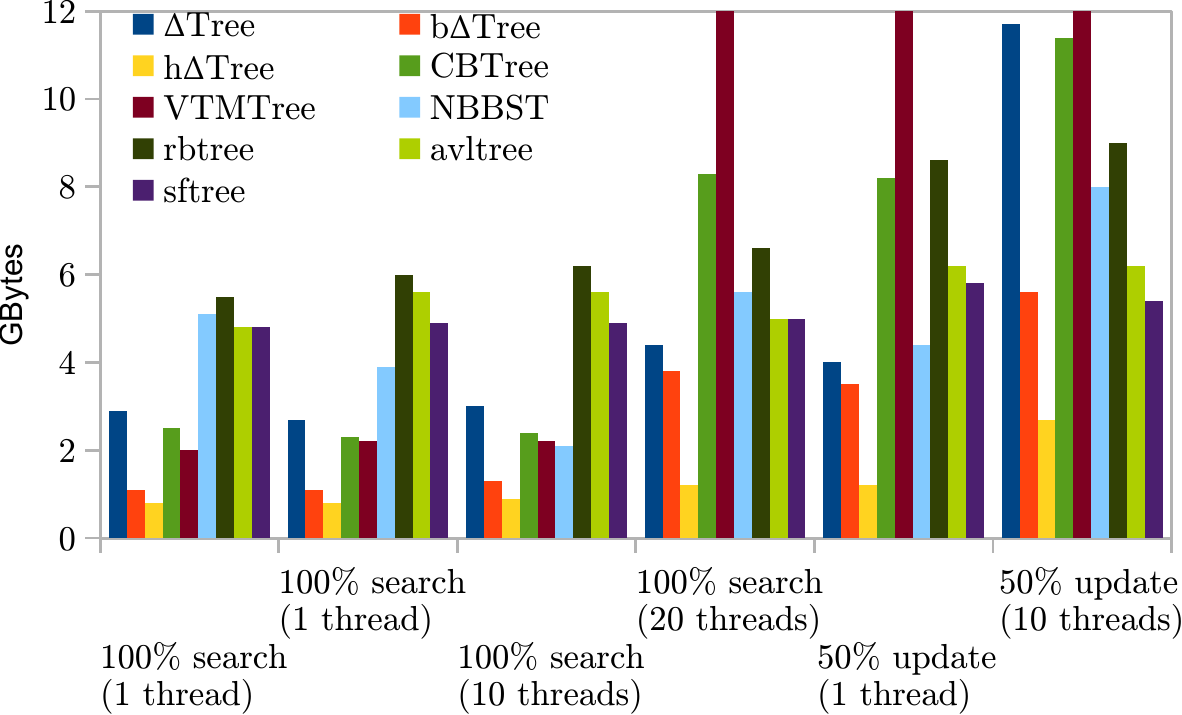}}
\caption{Amount of data transferred between RAM and CPU for Read + Write operations. Measured using Intel PCM.
Data transfer goes up considerably for some trees in 50\% update
on 20 threads (w/ 2 CPUs) because of the cache-coherence mechanism.}
\label{fig:mem-rw}
\end{figure}

A detailed breakdown of the energy-affecting factors was conducted on the Intel HPC platform.
Figure \ref{fig:mem-energy} shows the DRAM-only energy consumption 
whereas Figure \ref{fig:mem-rw} shows the amount of memory transfers between RAM and CPU for read-write operations.

Figure \ref{fig:mem-rw} shows that h$\Delta$Tree was the 
most efficient in terms of memory transfers between
the RAM and CPU compared to the other trees. 
As a result, h$\Delta$Tree had the lowest DRAM energy 
consumption (cf. Figure \ref{fig:mem-energy}).  

\paragraph{Energy and performance}
The experimental results on energy efficiency (Figure
\ref{fig:X86-power} and \ref{fig:arm-power})  and on performance (Figure \ref{fig:perf-4194303} and \ref{fig:perf-2097151}) 
showed that the usage of concurrency-aware vEB layouts 
was able to reduce the $\Delta$Trees' energy consumption and at the same time increase their performance.

If we closely compare side-by-side all of the results on the Intel HPC platform, a strong 
relationship between energy efficiency, performance and data transfer
can be concluded for this platform. As an example, 
in the 100\% search benchmark where concurrency-aware vEB trees are expected
to perform best,  the heterogeneous $\Delta$Tree was 50\% faster 
and 40\% more energy efficient
than CBTree. Figure \ref{fig:mem-rw} tells us that in this scenario, CBTree was transferring
almost $2\times$ more data compared to h$\Delta$Tree. 

However, there are other cases where the energy benefit exceeds 
the performance benefit, which we strongly suspect were attributed
to having a locality-aware data structure.
On the ARM platform, in the 100\% search/4 threads benchmark, 
h$\Delta$Tree was 220\% more energy efficient than 
CBTree while its performance was only 100\% faster.

Figures \ref{fig:perf-4194303} and \ref{fig:X86-power} also show that a less than 2x performance benefit
sometimes results in a 3x to 4x energy benefit for the STM-based trees on the HPC platform. The 
DRAM energy consumption of the STM-based trees
(e.g., AVLtree, RBtree, SFtree and VTMTree) was up to 10x higher than that of the other
trees (cf. Figure \ref{fig:mem-energy}), which partly explains the phenomenon.

Finally, the above evaluation relies on the Intel CPU performance counters or the Intel PCM and ARM
platform off-chip energy counters. Using more robust prediction
models for performance \cite{Chen:2014:EYC:2597479.2597500}
and energy \cite{McCullough:2011:EEM:2002181.2002193} might enable us to 
obtain more
insights that are useful in designing energy-efficient and highly concurrent data structures.

\paragraph{Multi-CPU coherency issue} \label{sec:cache-coher}
Figure \ref{fig:perf-4194303} shows some "dips" in performance
for several trees after passing the 8-thread mark on the Intel HPC platform. 
The same behavior was reflected in energy efficiency results (cf. 
Figure \ref{fig:X86-power}). In this chart, the same "dips" were also shown in the 50\% update/20 thread setting. 

Our experimental analysis revealed that these "dips" were caused by the cache coherence protocol in a multi-CPU system. The Intel HPC platform consists of 2 CPUs with 8-cores each. 
Therefore, if a tree
was exploiting data-locality in the cache of a single CPU, cache-coherence protocol did not involve the bus that connects CPUs and memory.  
Starting from 9 threads (or from 11 threads in the energy benchmarks), 2 CPUs
were used and therefore the cache-coherence protocol must use the bus to transfer data. 
A closer look at Figure \ref{fig:mem-rw} explains that 
the dips were also related to the increase of RAM-CPU data transfers, which
strengthens the cache-coherence theory.
Moreover, there were no cache-coherence issues 
in the ARM single CPU platform. Therefore, the "dips" did not appear in any results on the ARM platform.

\paragraph{Comparison with GCC std::set} \label{sec:worstCase}

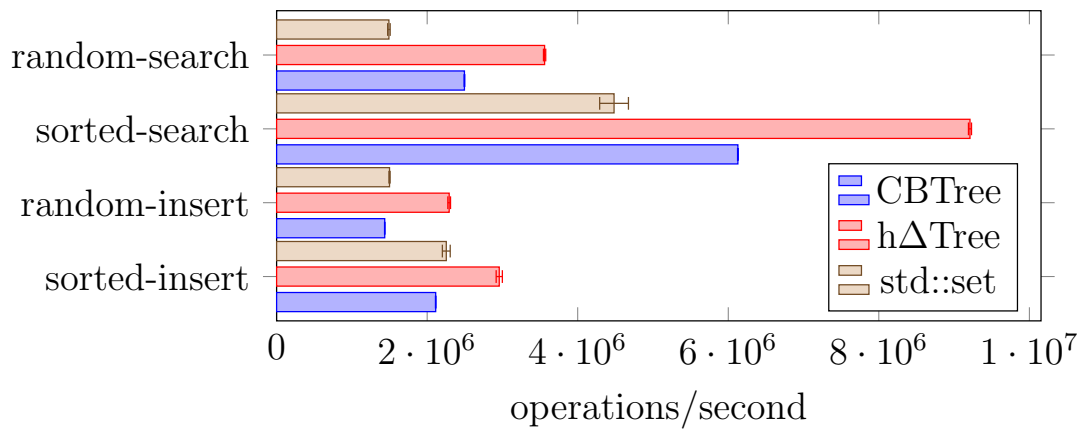
\begin{figure}[!t] \centering
\resizebox{0.9\columnwidth}{!}{ 
\begin{tikzpicture}
  \begin{axis}
    [
    width=10cm, height=5cm,
    enlarge y limits=0.2,
    scaled x ticks=false,
    xbar,
    xmin=0,
    xlabel={operations/second},
    symbolic y coords={sorted-insert,random-insert,sorted-search,random-search},
    bar width=6pt,
    ytick=data,
    legend pos=south east
    ]
    
\addplot+  [
	error bars/.cd, 
	x dir=both, 
	x explicit
]
table [x error=error, row sep=crcr] 
{
x        	  y             error \\
2113334.23	sorted-insert 	2888.0704221559\\
1436425.97	random-insert 		2303.2845403101\\
6127972.16	sorted-search 	4352.881447301\\
2495301.85	random-search		4215.9011296083\\
};

\addplot+ [
              	error bars/.cd, 
		x dir=both, 
		x explicit
		]
table [x error=error, row sep=crcr] 
{
x        	y                error \\
2957809.80	sorted-insert  	41063.0593075216 \\
2292632.53	random-insert  		17969.8233327317 \\
9211329.64	sorted-search 	20427.5043253486 \\
3560204.26	random-search 		15008.821675643 \\
};

\addplot+ [
              	error bars/.cd, 
		x dir=both, 
		x explicit
		]
table [x error=error, row sep=crcr] 
{
x        	y                error \\
2254212.20	sorted-insert  	52142.1560510366 \\
1500175.49	random-insert		7576.6687994266\\
4482888.90	sorted-search	191164.489145048 \\
1492020.76	random-search		16806.863869775 \\
};

\legend{CBTree, h$\Delta$Tree, std::set}  
\end{axis}
\end{tikzpicture}
}
\caption{Random and sorted insertions/searches of \num[group-separator={,}]{5000000} values on h$\Delta$Tree, CBTree and std::set\textless{int}\textgreater.}
\label{fig:worstcase}
\end{figure}

For $\Delta$Tree, inserting a sequence of increasing (sorted) numbers 
will result in a linked-list of $\Delta$Node.
The bottom-up insertions adopted by balanced $\Delta$Tree and subsequently 
heterogeneous $\Delta$Tree was supposed to prevent this (worst-case) from happening.

Therefore, we tested CBTree and h$\Delta$Tree for random and sorted number insertions. 
GCC standard library std::set\textless{int}\textgreater was also included as a baseline. For the sorted
inserts, 
starting with a blank tree, an increasing sequence of \num[group-separator={,}]{5000000} 
numbers starting from 1 were inserted into the tree using a single thread. The
same amount of random numbers was used instead for the random inserts. 
These tests were done on the same HPC platform described in Section \ref{sec:benchsetup}.
Since std::set is currently not supporting
concurrent operations, a concurrent insert test was deliberately left out.

The result in Figure \ref{fig:worstcase} showed that h$\Delta$Tree was 38\% faster than 
CBTree and 30\% faster than std::set in the sorted inserts. 
For random inserts, the h$\Delta$Tree was about 50\%
faster compared to CBTree and std::set. Further evaluation with \textit{perf} revealed
that in h$\Delta$Tree, random inserts triggers more branch misses (6.82\%) compared
to sorted inserts (1.04\%).

We also measured the time required to search all of the inserted values 
using the same way as they were inserted (random or sorted). 
In Figure \ref{fig:worstcase}, h$\Delta$Tree was up to 105\% faster in the random search
compared to std::set. This lead was increased to 130\% 
when searching the sorted sequence of inserted numbers.

The above results imply two things: 1) h$\Delta$Tree performed better in both of the worst and average
case of tree search/update compared to optimized B-tree and the widely used std::set; 
and 2) random value inserts and searches were not actually
benefit the B-tree based tree. Therefore, the benchmark setups
in Section \ref{sec:benchsetup} is assured to give a fair comparison of all the tested trees
performance. The reader is referred to \cite{HaTUTGRWA14, deltatreeTR2013} for more details about $\Delta$Trees.


\subsection{Concurrent Lock-Free Queues}
This section exposes the quality of the prediction of the model on
lock-free queues. We have used a library of well-known lock-free
implementations, and have predicted their throughput and energy
consumption thanks to the model explained in
Section~\ref{sec:cpu-model}, instantiated with the process exhibited
in Section~\ref{sec:cpu-model-inst}. In Section~\ref{sec:impl-desc},
we first give a brief description of the queue implementations, then
we show the experimental study in Section~\ref{sec:impl-xp}, that we
apply both on synthetic benchmarks and a more realistic application.

\subsubsection{Description of the Implementations}
\label{sec:impl-desc}

\readarray{alg}{MS&Val&TZ&Gid&caca&Hof&Moi}
\label{sec:NOBLE}
\paragraph{NOBLE}
Most of the implementations that we use are part of the NOBLE
library~\cite{Sundell02,Sundell08}.  The NOBLE library offers support
for non-blocking multi-process synchronization in shared memory
systems. NOBLE has been designed in order to: i) provide a collection
of shared data objects in a form which allows them to be used by
non-experts, ii) offer an orthogonal support for synchronization where
the developer can change synchronization implementations with minimal
changes, iii) be easy to port to different multi-processor systems,
iv) be adaptable for different programming languages and v) contain
efficient known implementations of its shared data objects. The
library provides a collection of the most commonly used data types.
The semantics of the components, which have been designed to be the
very same for all implementations of a particular abstract data type,
are based on the sequential semantics of common abstract data types
and adopted for concurrent use. The set of operations has been limited
to those which can be practically implemented using both non-blocking
and lock-based techniques. Due to the concurrent nature, also new
operations have been added, e.g.  Update which cannot be replaced by
Delete followed by Insert. Some operations also have stronger
semantics than the corresponding sequential ones, e.g. traversal in a
List is not invalidated due to concurrent deletes, compared to the
iterator invalidation in STL. As the published algorithms for
concurrent data structures often diverge from the chosen semantics, a
large part of the implementation work in NOBLE, besides from adoption
to the framework, also consists of considerable changes and extensions
to meet the expected semantics.

The various lock-free concurrent queue algorithms that we include in
this study are briefly described below.

\newcommand\walgmod[1]{\ema{(\protect\walg{#1})}}

\paragraph{Tsigas-Zhang \walgmod{2}}
Tsigas and Zhang~\cite{TsiZ01b} presented a lock-free extension of \cite{Lam83} for any
number of threads where synchronization is done both on the array elements and the shared
head and tail indices using \op{CAS}\footnote{The Compare-And-Swap
  (CAS) atomic primitive will update a given memory word, if and only if the word still
  matches a given value (e.g. the one previously read). CAS is generally available in
  contemporary systems with shared memory, supported mostly directly by hardware and in
  other cases in combination with system software.}, and the ABA problem is avoided by exploiting two (or
more) null values. We recall that the ABA problem is due to the
inability of \op{CAS} to detect concurrent changes of a memory word from a value (A) to
something else (B) and then again back to the first value (A). A {\em CAS} operation can
not detect if a variable was read to be A and then later changed to B and then back to A
by some concurrent processes. The {\em CAS} primitive will perform the update even though
this might not be intended by the algorithm's designer.
In \cite{TsiZ01b} synchronization is done both directly on the array
elements and the shared head and tail indices using \op{CAS}, thus supporting multiple producers
and consumers. Moreover, for lowering the memory
contention the algorithm alternates every other operation between scanning and updating
the shared head and tail indices.

\paragraph{Valois \walgmod{1}}
Valois \cite{Val94,Val95phd} makes use of linked list in his lock-free implementation
which is based on the \op{CAS} primitive. He was the first to present a lock-free
implementation of a linked-list. The list uses auxiliary memory cells between adjacent
pairs of ordinary memory cells. The auxiliary memory cells were introduced to provide an
extra level of indirection so that normal memory cells can be removed by joining the
auxiliary ones that are adjacent to them. His design also provides explicit cursors to
access memory cells in the list directly and insert or delete nodes on the places the the
cursors point to.

\paragraph{Michael-Scott \walgmod{0}}
Michael and Scott~\cite{lf-queue-michael} presented a lock-free queue that is more
efficient, synchronizing via the shared head and tail pointers as well as via the next
pointer of the last node.  Synchronization is done via shared pointers indicating the
current head and tail node as well via the next pointer of the last node, all updated
using \op{CAS}.  The tail pointer is then moved to point to the new item, with the use
of a \op{CAS} operation. This second step can be performed by the thread invoking the
operation, or by another thread that needs to help the original thread to finish before it
can continue. This helping behavior is an important part of what makes the queue
lock-free, as a thread never has to wait for another thread to finish.  The queue is fully
dynamic as more nodes are allocated as needed when new items are added. The original
presentation used unbounded version counters, and therefore required double-width \op{CAS}
which is not supported on all contemporary platforms. The problem with the version
counters can easily be avoided by using some memory management scheme as
e.g. \cite{Mic04b}.

\paragraph{Moir \etal \walgmod{6}}
Moir \etal~\cite{MoirNSS:2005:elim-queue} presented an extension of the Michael and Scott
\cite{lf-queue-michael} lock-free queue algorithm where elimination is used as a back-off
strategy to increase scalability when contention on the queue's head or tail is noticed
via failed \op{CAS} attempts. However, elimination is only possible when the queue is
close to empty during the operation's invocation.

\paragraph{Hoffman-Shalev-Shavit \walgmod{5}}
Hoffman \etal~\cite{DBLP:conf/opodis/HoffmanSS07} takes another approach in their design
in order to increase scalability by allowing concurrent \op{Enqueue} operations to insert
the new node at adjacent positions in the linked list if contention is noticed during the
attempted insert at the very end of the linked list. To enable these ``baskets'' of
concurrently inserted nodes, removed nodes are logically deleted before the actual removal
from the linked list, and as the algorithm traverses through the linked list it requires
stronger memory management than \cite{Mic04b}, such
as~\cite{DBLP:journals/tpds/GidenstamPST09} or \cite{HerLMM:2005:NMM} and a strategy to
avoid long chains of logically deleted nodes.

\paragraph{Gidenstam-Sundell-Tsigas \walgmod{3}}
Gidenstam \etal~\cite{Gidenstam10:OPODIS} combines the efficiency of using arrays and the
dynamic capacity of using linked lists, by providing a lock-free queue based on linked
lists of arrays, all updated using \op{CAS} in a cache-aware manner. In resemblance to
\cite{Lam83,GiaMoVa:2008:ff-queue,TsiZ01b} this algorithm uses arrays to store
(pointers to) the items, and in resemblance to \cite{TsiZ01b} it uses \op{CAS} and two
null values. Moreover, shared indices \cite{GiaMoVa:2008:ff-queue} are avoided and
scanning \cite{TsiZ01b} is preferred as much as possible. In contrast to
\cite{Lam83,GiaMoVa:2008:ff-queue,TsiZ01b} the array is not static or cyclic,
but instead more arrays are dynamically allocated as needed when new items are added,
making the queue fully dynamic.

\leaveout{ 
The underlying data structure that the algorithmic design uses is a linked list of arrays,
and is depicted in Figure \ref{fig:lockfreequeue}. In the data structure every array
element contains a pointer to some arbitrary value. Both the \op{Enqueue} and \op{Dequeue}
operations are using increasing array indices as each array element gets occupied versus
removed. To ensure consistency, items are inserted or removed into each array element by
using the \op{CAS} atomic synchronization primitive. To ensure that a \op{Enqueue}
operation will not succeed with a \op{CAS} at a lower array index than where the
concurrent \op{Dequeue} operations are operating, we need to enable the \op{CAS} primitive
to distinguish (i.e., avoid the ABA problem) between "used" and "unused" array
indices. For this purpose two null pointer values \cite{TsiZ01b} are used; one
(\code{NULL}) for the empty indices and another (\code{NULL2}) for the removed indices. As
each array gets fully occupied (or removed), new array blocks are added to (or removed
from) the linked list data structure. Two shared pointers, \var{globalHeadBlock} and
\var{globalTailBlock}, are globally indicating the first and last active blocks
respectively. These shared pointers are also concurrently updated using \op{CAS}
operations as the linked list data structure changes. However, as these updates are done
lazily (not atomically together with the addition of a new array block), the actually
first or last active block might be found by following the next pointers of the linked
list. As a successful update of a shared pointer will cause a cache miss to the other
threads that concurrently access that pointer, the overall strategy for improving
performance and scalability of the this algorithm is to avoid accessing pointers that can
be concurrently updated \cite{DBLP:conf/opodis/HoffmanSS07}. Moreover, our algorithm
achieves fewer updates by not having shared variables with explicit information regarding
which array index currently being the next active for the \op{Enqueue} or
\op{Dequeue}. Instead each thread is storing its own\footnote{Each thread have their own
  set of variables stored in separate memory using thread-local storage (TLS).} pointers
indicating the last known (by this thread) first and active block as well as active
indices for inserting and removing items. When a thread recognizes its own pointers to be
inaccurate and stale, it performs a scan of the array elements and array blocks towards
the right, and only resorts to reading the global pointers when it's beneficial compared
to scanning. The \op{Dequeue} operation to be performed by thread T3 in Figure
\ref{fig:lockfreequeue} illustrates a thread that has a stale view of the status of the
data structure and thus needs to scan. As array elements are placed next to each other in
memory, the scan can normally be done without any extra cache misses (besides the ones
caused by concurrent successful \op{Enqueue} and \op{Dequeue} operations) and also without
any constraint on in which order memory updates are propagated through the shared memory,
thus allowing weak memory consistency models without the need for additional memory fence
instructions.  For the implementation of the new lock-free queue algorithm, the lock-free
memory management scheme proposed by Gidenstam \etal~\cite{DBLP:journals/tpds/GidenstamPST09}
which makes use of the \op{CAS} and \op{FAA}
atomic synchronization primitives is used. The interface defined by the memory management
scheme is listed in Program \ref{fig:memory_management} and are fully described in
\cite{DBLP:journals/tpds/GidenstamPST09}. Using this scheme it can be assure that an array
block can only be reclaimed when there is no next pointer in the linked list pointing to
it and that there are no local references to it from pending concurrent operations or from
pointers in thread-local storage. By supplying the scheme with appropriate callback
functions, the scheme automatically reduces the length of possible chains of deleted nodes
(held from reclamation by late threads holding a reference to an old array block), and
thus enables an upper bound on the maximum memory usage for the data structure. The task
of the callback function for breaking cycles, see the \op{CleanUpNode} procedure in
Program \ref{fig:memory_callbacks}, is to update the next pointer of a deleted array block
such that it points to an active array block, in a way that is consistent with the
semantics of the \op{Enqueue} and \op{Dequeue} operations. The \op{TerminateNode}
procedure is called by the memory management scheme when the memory of an array block is
possible to reclaim.
} 





\subsubsection{Experiments: Predictions and Measurements}
\label{sec:impl-xp}

The legend depicted in Figure~\ref{fig.key} lists the implementations
that are compared, and will be used throughout this section.

\readarray{alg}{MS&Val&TZ&Gid&caca&Hof&Moi}

\begin{figure}
\begin{center}
\framebox{%
\begin{tikzpicture}
\matrix (m) [matrix of math nodes, ampersand replacement=\&, nodes={anchor=base west},
        row sep=0cm,column sep=.1cm] {
\vcenter{\hbox{\includegraphics[width=1cm]{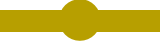}}} \& \hspace{-.2cm}\walg{1}\text{ \cite{Val94}} \& \hspace{.0cm} \&
\vcenter{\hbox{\includegraphics[width=1cm]{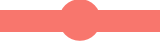}}} \& \hspace{-.2cm}\walg{0}\text{ \cite{lf-queue-michael}}\& \hspace{.0cm} \&
\vcenter{\hbox{\includegraphics[width=1cm]{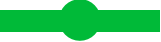}}} \& \hspace{-.2cm}\walg{2}\text{ \cite{TsiZ01b}} \\
\vcenter{\hbox{\includegraphics[width=1cm]{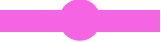}}} \& \hspace{-.2cm}\walg{6}\text{ \cite{MoirNSS:2005:elim-queue}} \& \&
\vcenter{\hbox{\includegraphics[width=1cm]{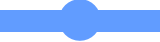}}} \& \hspace{-.2cm}\walg{5}\text{ \cite{DBLP:conf/opodis/HoffmanSS07}} \& \&
\vcenter{\hbox{\includegraphics[width=1cm]{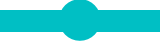}}} \& \hspace{-.2cm}\walg{3}\text{ \cite{Gidenstam10:OPODIS}}\\
};
\end{tikzpicture}
}
\framebox{%
\begin{tikzpicture}
\matrix (m) [matrix of math nodes, nodes={},ampersand replacement=\&,
        row sep=0cm,column sep=.1cm] {
\vcenter{\hbox{\includegraphics[width=1cm]{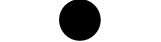}}} \& \text{Actual} \& \hspace{.5cm} \&
\vcenter{\hbox{\includegraphics[width=1cm]{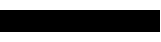}}} \& \text{Prediction} \\
};
\end{tikzpicture}
}
\end{center}
\caption{Key legend of the graphs\label{fig.key}}
\end{figure}


\paragraph{Synthetic Benchmark}

\subparagraph{Throughput}
\label{sec:thput-res}

\begin{figure}[h!p]
\begin{center}
\subfloat[$\protect\pwd=7$\label{fig:thr-enq-7}]{\includegraphics[height=.45\textheight]{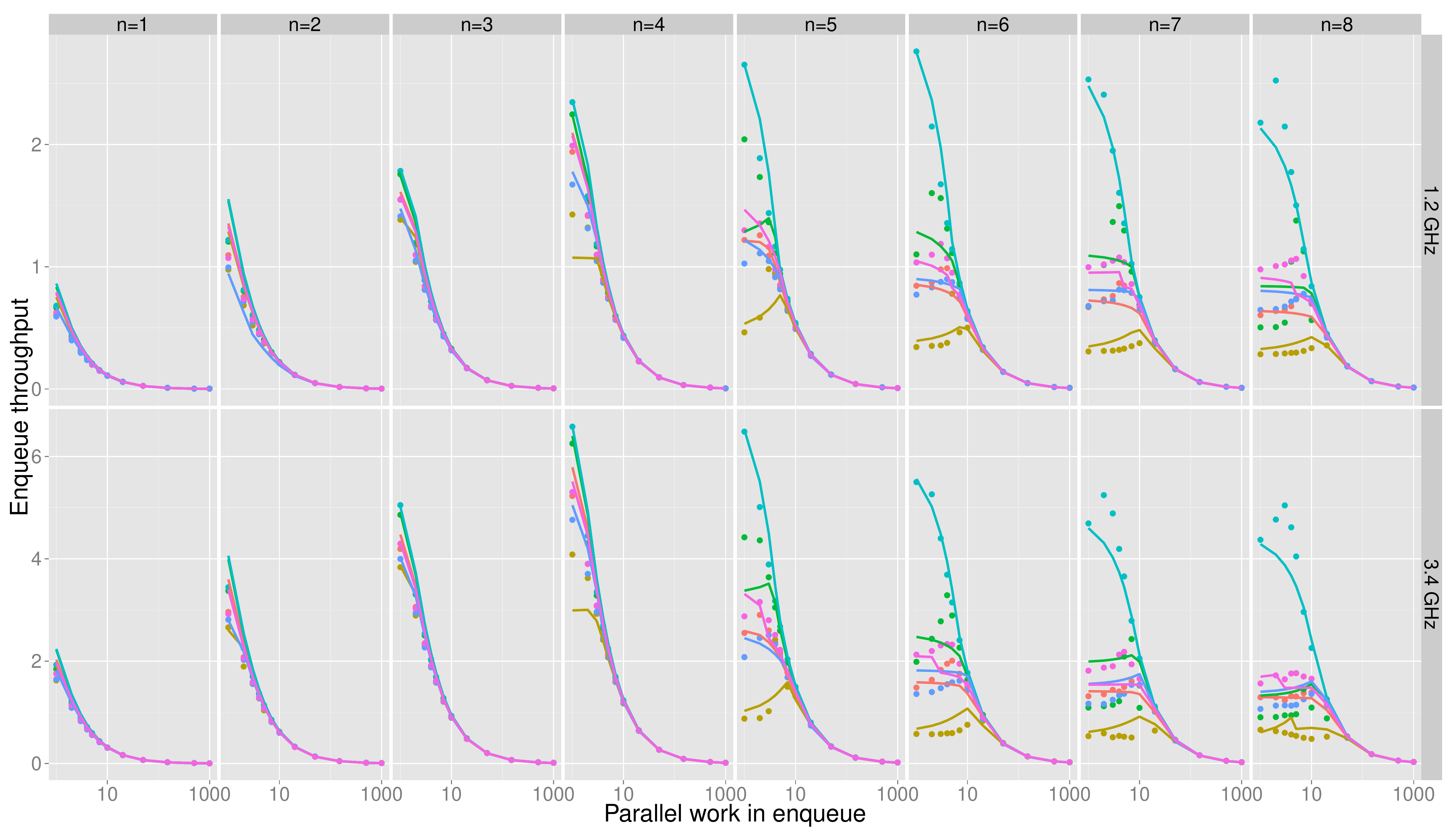}}

\subfloat[$\protect\pwd=50$\label{fig:thr-enq-50}]{\includegraphics[height=.45\textheight]{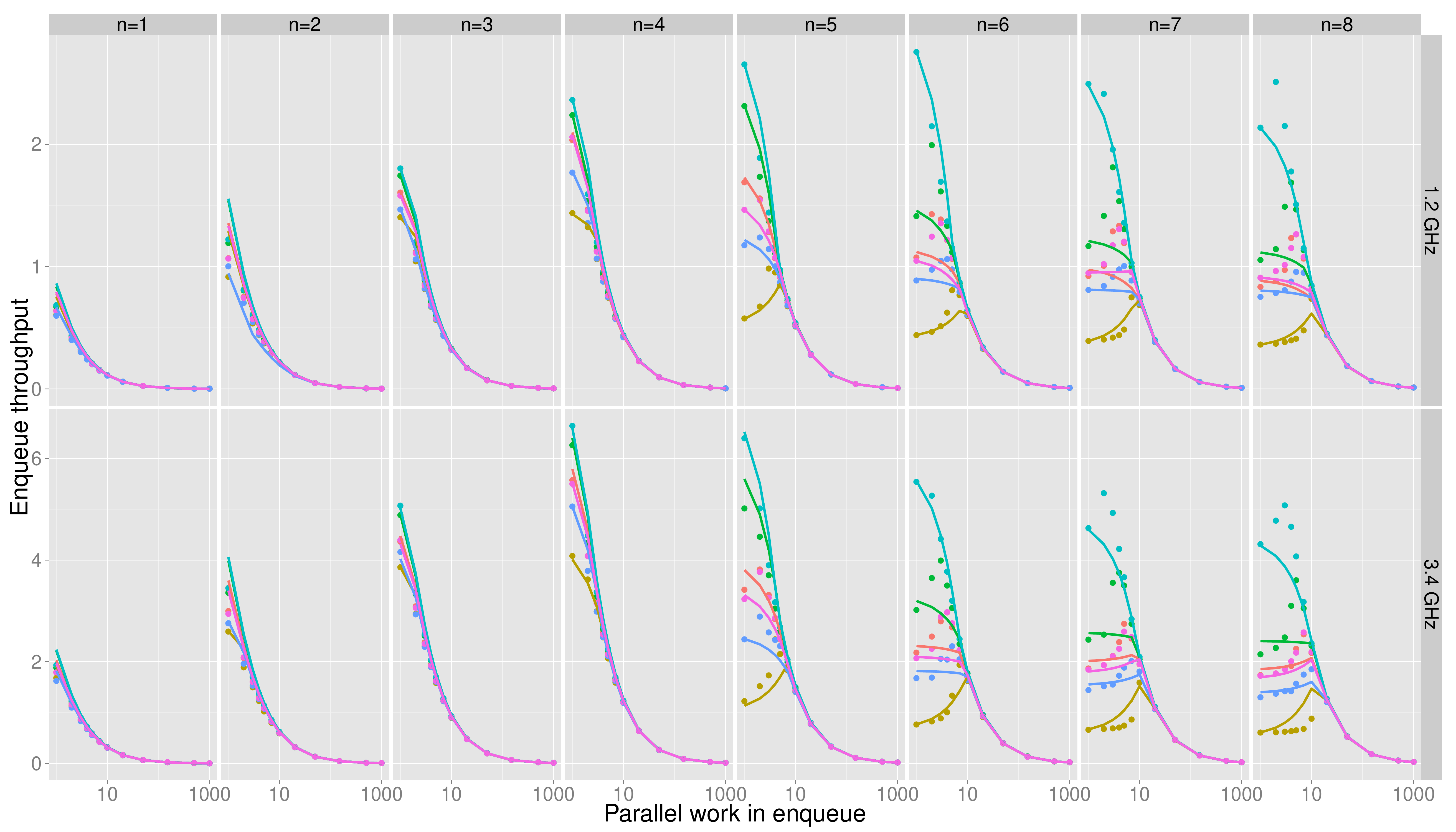}}
\end{center}
\caption{Enqueue throughput\label{fig:thr-enq}}
\end{figure}

\begin{figure}[t!h]
\incgrapr{td-d7}
\caption{Dequeue throughput with $\protect\pwd=7$\label{fig:thr-deq-7}}
\end{figure}

The throughput predictions are plotted in Figure~\ref{fig:thr-enq} for
the enqueuers, and in Figure~\ref{fig:thr-deq-7} for the dequeuers.
Points are measurements, while
lines are predictions. We will follow this rule for all comparisons
between prediction and measurement. 
In the actual execution, the queue goes through a transient state when the amount of work
in the \ps is near the critical point, but the prediction is not so far from the actual
measurements, as illustrated in Figure~\ref{fig:thr-enq}. Under
intra-contention, some of the curves get flat, since only one thread can be succeeding
at the same time, according to the definition of the \rl. Some curves even
decrease because the successful one is stalled by other failing ones due to serialization
of the atomic primitives, namely expansion. The slope presumably indicates the density of
atomic primitives in \rls which depends on the algorithm.

The comparison of Figures~\ref{fig:thr-enq-7} and~\ref{fig:thr-enq-50} illustrates the
impact of inter-contention. A decrease of the highest point of \thre, due to an increase
of \cwe, can be observed for the more inter-contended case. When \cwe increases, some
critical points shift slightly towards the right as the intra-contention starts with a
larger \pwe. In Figure~\ref{fig:thr-deq-7}, decomposition of \thrd is apparent. When
enqueue rate is low, \ie when \pwe is high, \thrd is ruled by \thrde due to majority
of \nulle dequeues, and it tends towards \thrdne when the enqueue rate increases.


\subparagraph{Power}

\begin{figure}[h!p]
\begin{center}
\subfloat[$\protect\freq=\ghz{3.4}$\label{fig:mem-pow-34}]{\includegraphics[height=.45\textheight]{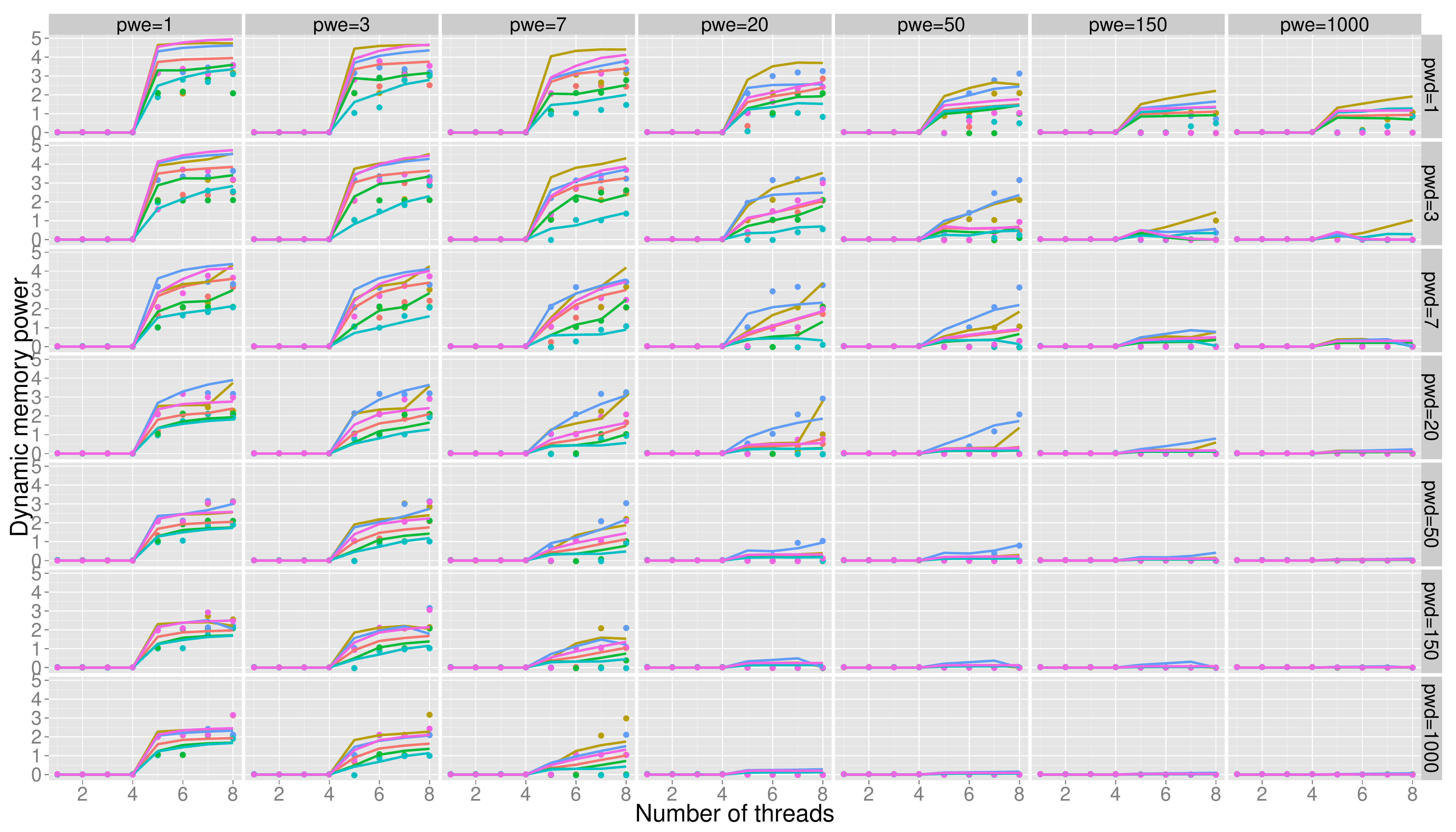}}

\subfloat[$\protect\freq=\ghz{1.2}$\label{fig:mem-pow-12}]{\includegraphics[height=.45\textheight]{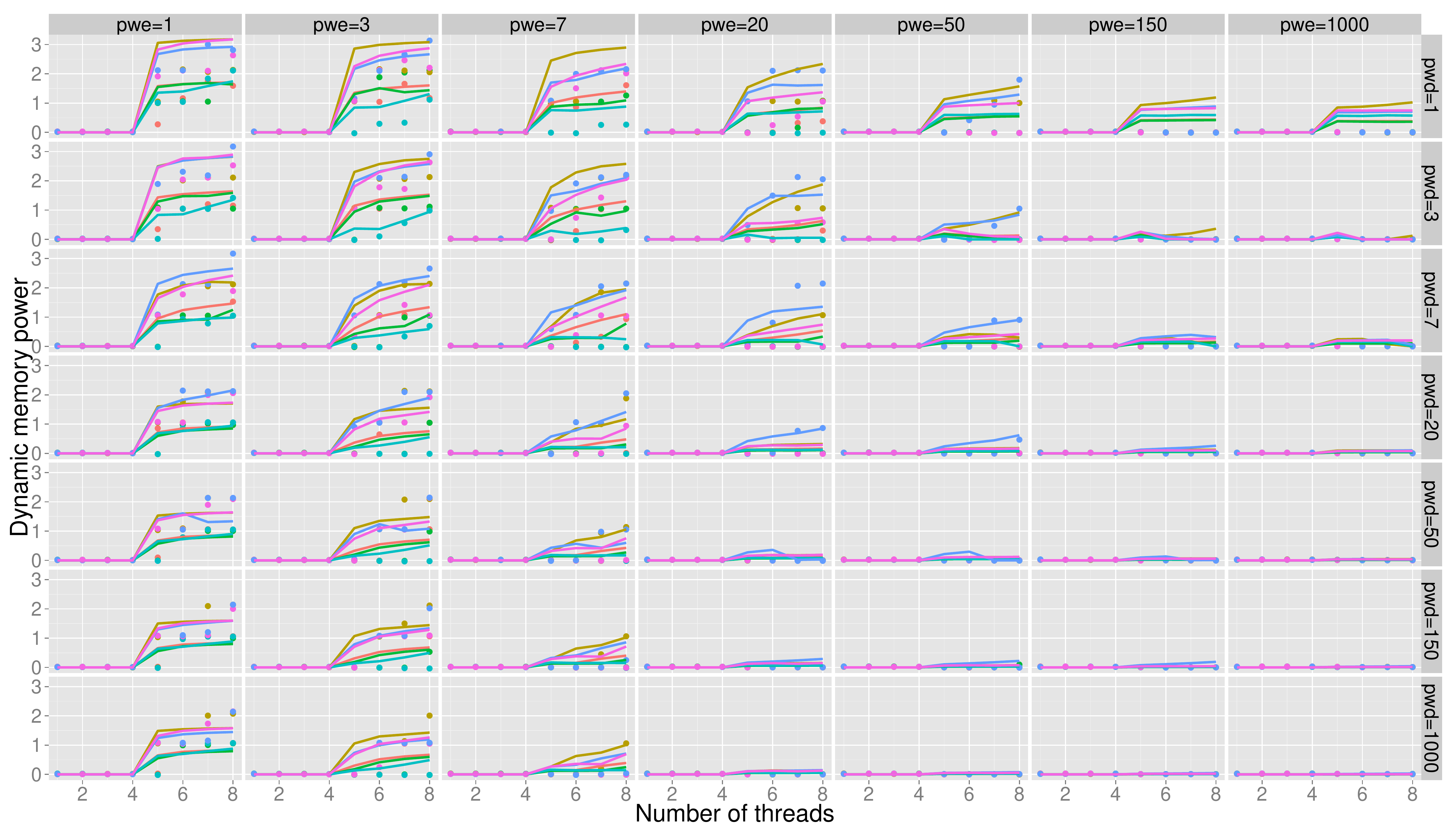}}
\end{center}
\caption{Dynamic memory power\label{fig:mem-pow}}
\end{figure}

\begin{figure}[h!p]
\begin{center}
\subfloat[Dynamic uncore power at $\protect\freq=\ghz{3.4}$\label{fig:unc-pow}]{\includegraphics[height=.45\textheight]{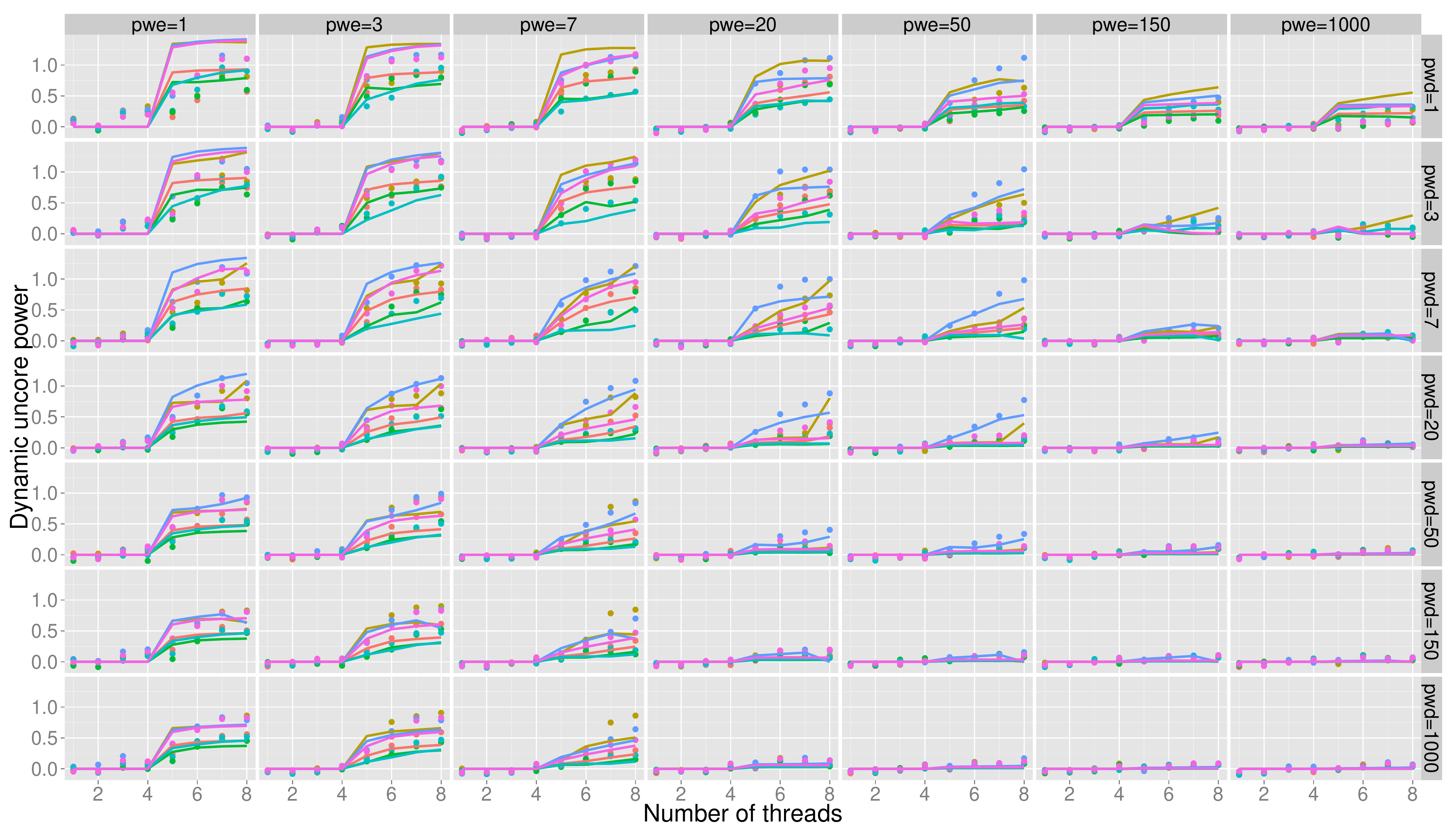}}

\subfloat[Dynamic CPU power at $\protect\freq=\ghz{1.2}$\label{fig:cpu-pow}]{\includegraphics[height=.45\textheight]{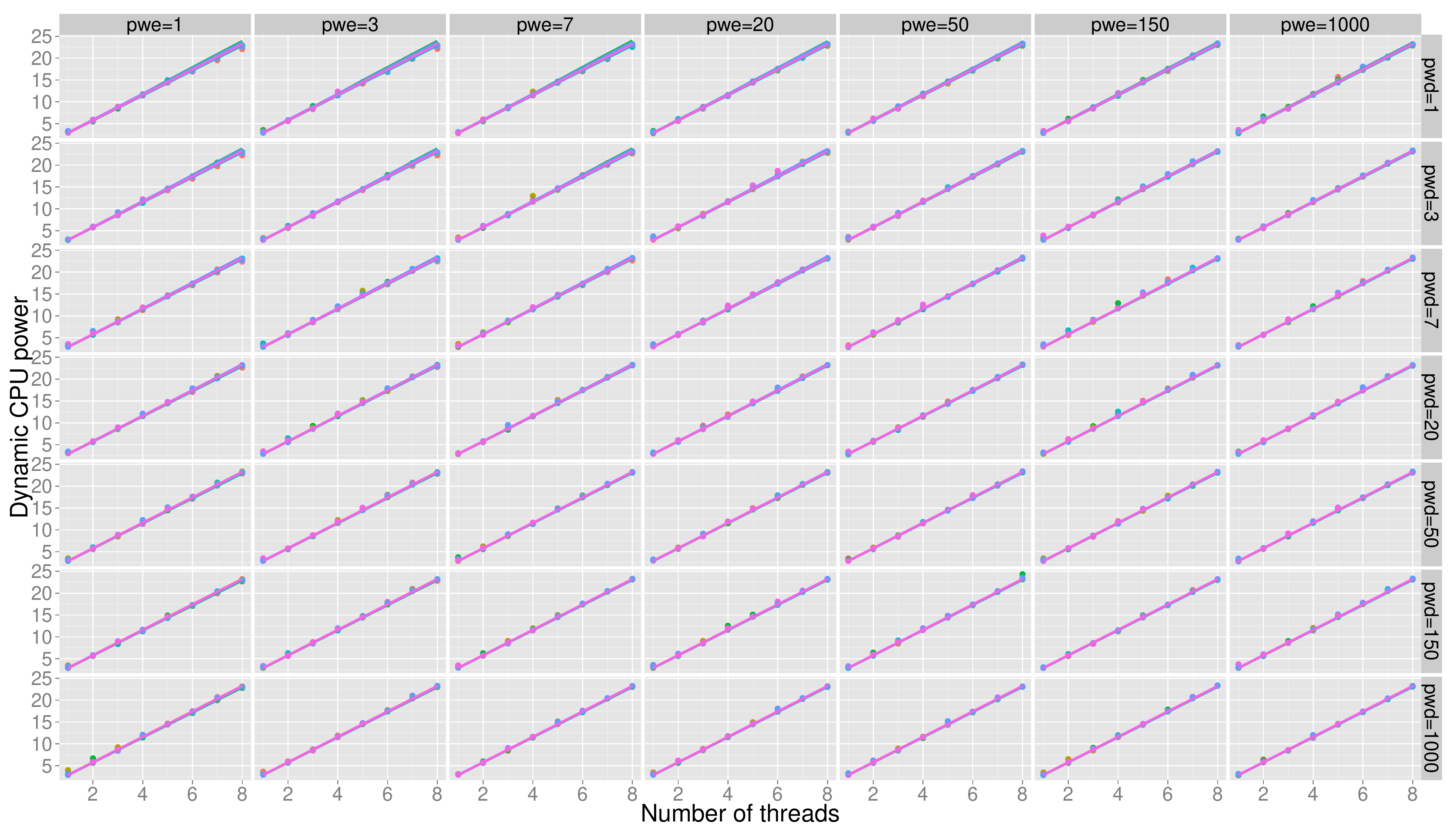}}
\end{center}
\caption{Dynamic uncore and CPU power}
\end{figure}

\begin{figure}[h!]
\incgrapr{pt-f12}
\caption{Sum of Dynamic Powers at $\freq=\ghz{1.2}$\label{fig:tot-pow}}
\end{figure}

The prediction and measurements, regarding power, are plotted in Figures~\ref{fig:mem-pow},%
~\ref{fig:unc-pow},~\ref{fig:cpu-pow} and~\ref{fig:tot-pow}, where we observe that the most
significant differences lie in the dynamic memory power.
The differences in CPU power are almost invisible, since the dynamic power of the \pss
(composed of \paus{}s instructions) is very close to the dynamic power of the \rls.
We remark some steps in the measured memory power, but we prefer to
keep a continuous estimate.

As the \rl, which is particular to each implementation, is
mainly composed of memory
operations, the main difference between the various implementations in terms of power
occurs in the dynamic memory power, which we represent in Figure~\ref{fig:mem-pow-34} (legend
is in Figure~\ref{fig.key}).
Overall, the prediction reacts correctly to the variations of \ps sizes, and some
specifics of the algorithms are caught, \eg \walg{5} detached from the others when
$\pwe=50$ or \walg{3} mostly well-predicted both absolutely and relatively as the less
power-dissipating implementation.
One can observe once again the asymmetry between enqueue and dequeue operations by
comparing the power values at $(\pwd,\pwe)=(2,1000)$ and $(1000,2)$; this asymmetry is
predicted by the model, with a lower impact though.


\subparagraph{Energy per Operation}
\label{sec:energy}






\begin{figure}[h!p]
\begin{center}
\subfloat[$\protect\freq=\ghz{3.4}$\label{fig:epo-34}]{\includegraphics[height=.45\textheight]{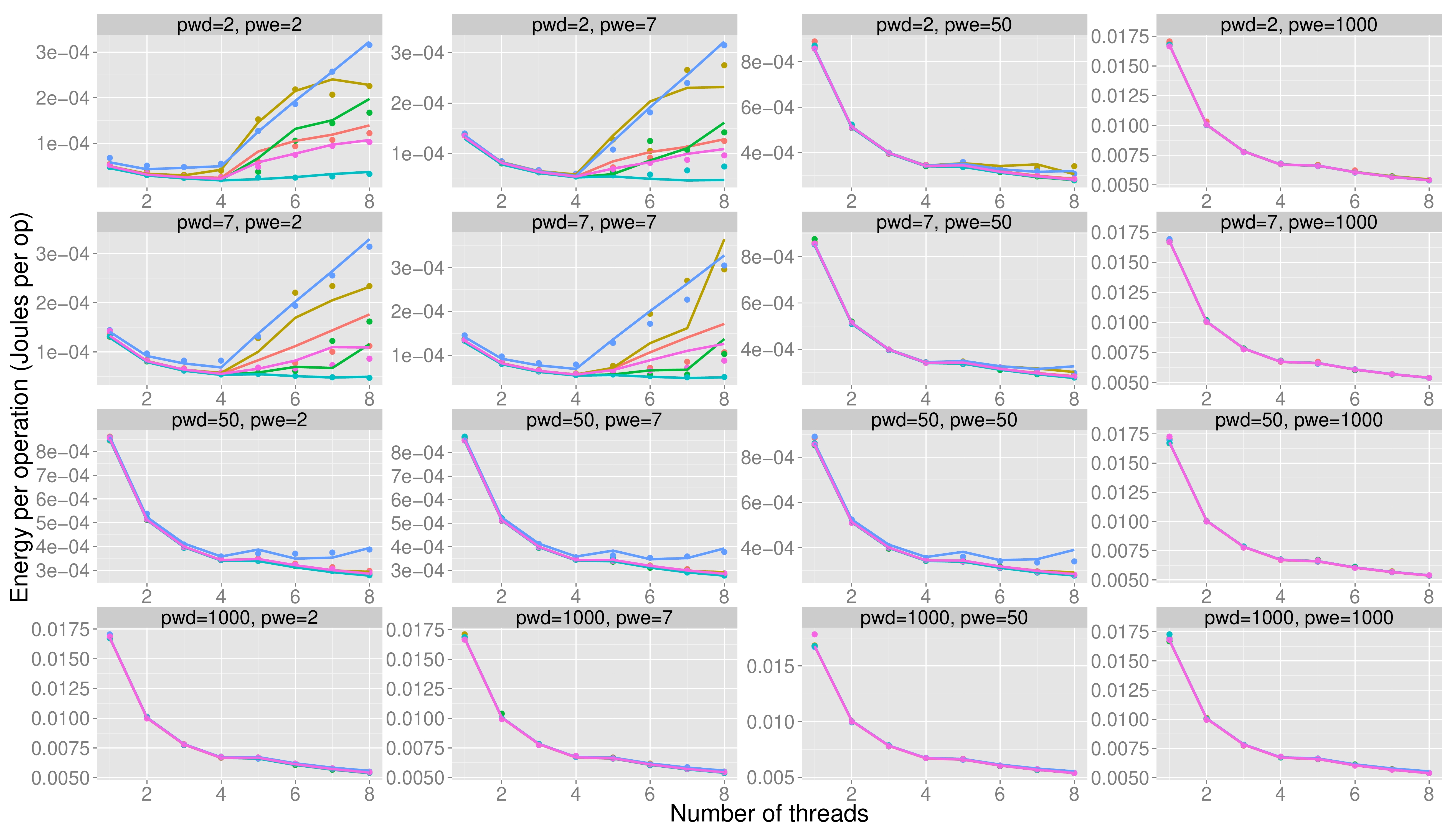}}

\subfloat[$\protect\freq=\ghz{1.2}$\label{fig:epo-12}]{\includegraphics[height=.45\textheight]{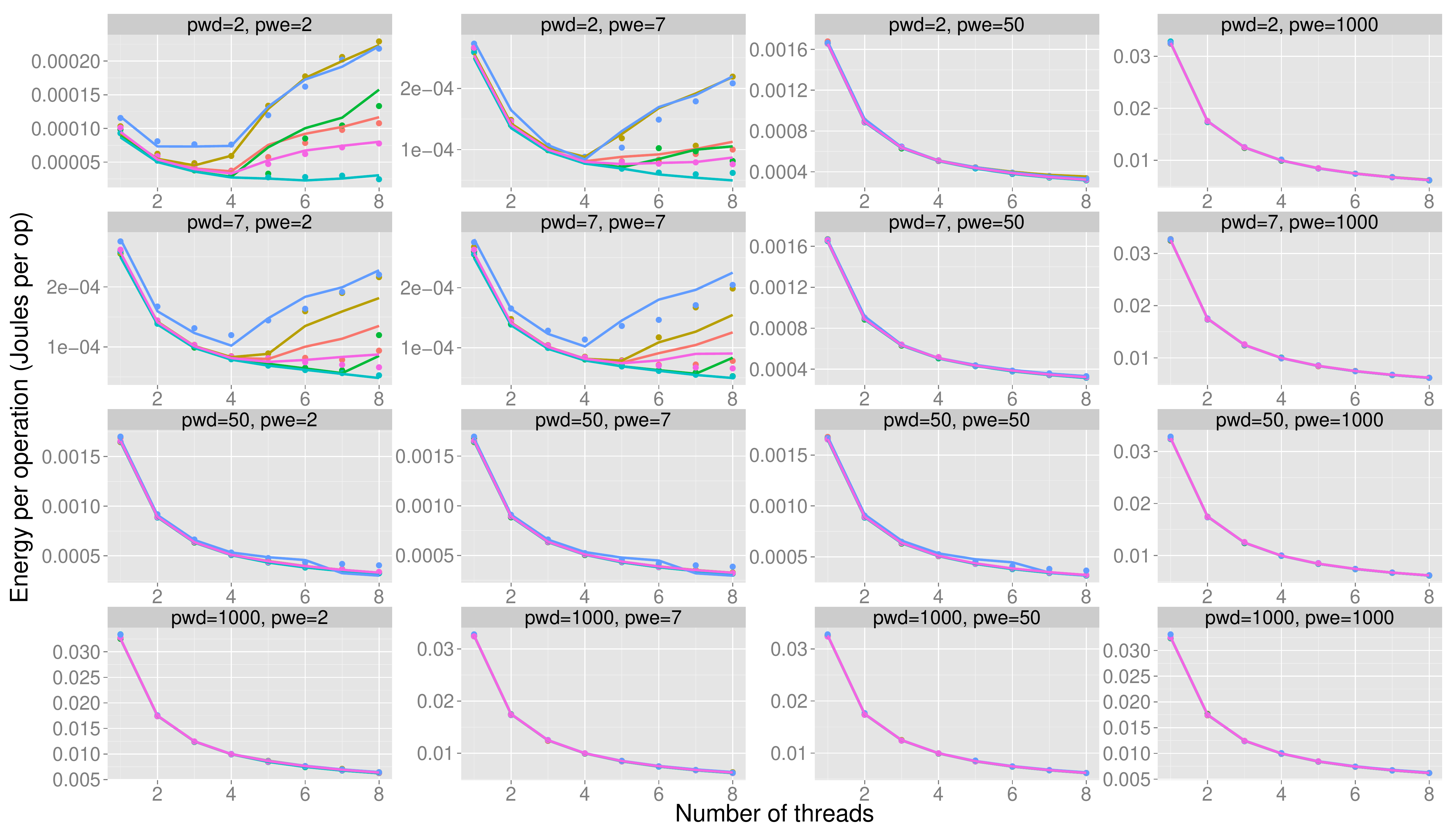}}
\end{center}
\caption{Energy per operation\label{fig:epo}}
\end{figure}


In Figure~\ref{fig:epo} is represented the energy per operation.
Overall we observe that the successful operations (dequeue of a non-\nulle item) are
cheaper and cheaper when the number of threads is increasing on the same socket: the cost
of turning the machine on is made profitable by an increase in performance. However, under
high-contention, the lack of performance improvement while increasing the number of cores
makes the use of supplementary cores useless. The inefficiency of adding cores is even more
apparent when cores are spread across the sockets. In this case, under high-contention,
performance could even be degraded by the implication of new cores, then, as performance
decreases and power increases, the energy per operation dramatically increases.


\paragraph{Towards Realistic Applications: Mandelbrot Set Computation} \label{sec:mandelbrot}


The performance and energy behavior of an application using a lock-free
queue depends on both the application specific code and the
implementation of the data structure.  For applications
where the queue is used in a steady state manner,
predictions can be made using the model instantiated with the synthetic benchmark,
combined with information about the behavior of the application specific
code. What is needed is:
\begin{itemize}
\item The size of the parallel work part of the application, both for
  enqueuers and dequeuers. These may be distributions rather than
  single values.
\item The dynamic power for these parts (as it may differ from
  that of the parallel work in the synthetic benchmark).
\end{itemize}


\subparagraph{Description of Mandelbrot Set Application}

As a case-study
we have used an existing application\footnote{Previously used for evaluation
  in~\cite{Sundell11}.}  that computes and renders an $8192\times 8192$ pixel
image of the Mandelbrot
set~\cite{Man80} in parallel using the producer/consumer pattern. The program uses a
concurrent queue to communicate between two major phases:

\begin{itemize}
\item Phase 1 consists of computing the number (with a maximum of 255)
  of iterations for a given set of points within a chosen region of
  the image. The results for each region together with its coordinates
  are then enqueued.
\item Phase 2 consists of, for each region dequeued from the
  queue, computing the RGB values for each contained point and
  draw these pixels to the resulting image.
  The colors for the
  corresponding number of iterations are chosen according to a rainbow
  scheme, where low numbers are rendered within the red and high
  numbers are rendered within the violet spectrum.
  
\end{itemize}

Half of the threads perform phase 1 and the rest perform phase 2.
The size of each square region is chosen to be one of $16 \times 16$,
$8\times 8$, $4\times 4$, or $2\times 2$ pixels which also determines
the amount of work to perform per queue operation and, hence, the level of
contention.
Similarly to the synthetic benchmark, the application uses a dense pinning strategy,
pinning producer/consumer pairs to consecutive pairs of cores.

This is just one of many possible ways to divide the work and pin threads,
it remains as future work to explore other ways.

\remove{ 
Each implementation has been run at each of the 4 work unit sizes ($2\times 2$, $4\times
4$, $8\times 8$ and $16\times 16$ pixels) and with 2, 6, 8, 9, 10, 12, 14 and 16 threads
on the EXCESS server at Chalmers. The results are presented in Figures~\ref{fig.mandel-p3}
to~\ref{fig.mandel-p0} in order of decreasing work unit size, i.e. increasing contention.
For each case the following metrics are shown (clockwise starting from the top left):
i)   throughput in pixels per second;
ii)  total system power in Watts;
iii) total system power normalized by {\bf a1} power; and
iv)  total energy in Joules consumed per pixel.

As mentioned above the method used to divide the Mandelbrot set into regions does not
share the work equally among the producer threads which results in the decreases in
throughput for 6 and 9 threads.

When the work units are large, such as in Figure~\ref{fig.mandel-p3}, the difference in
throughput between the different collection implementations is very small indeed for any
number of threads. The work load is dominated by independent parallel computation and
consequently the level of contention on the collection is low. There is, however, a
somewhat larger difference in energy per pixel. This difference is interesting as it ought
to be directly related to properties of the collection implementation as all
implementations carry out the same total amount of parallel work and a very similar number
of successful collection operations per second. Moreover, the lowest energy per pixel
costs are achieved by the implementation, {\bf a2}, which is among the worst at high
contention (compare with Figure~\ref{fig.mandel-p0}).
In this particular application the producers do a larger part of the total work than the
consumers which can lead to the shared collection becoming empty at times. However, the
cost is not distributed equally across all work units -- some are cheaper for the
producers than others. Consumers finding the shared collection empty will retry the
\op{TryRemove} operation in a tight loop. This could could be one reason for the
difference in power as the effort needed to determine that the collection is empty varies
among the different algorithms. E.g. for {\bf a1} and {\bf a2} this just requires reading
a small number of pointers (2 to 3), which however invokes memory barriers, while for {\bf
  a0} it entails scanning through (while invoking few memory barriers) at least one block
of pointers per thread using the data structure.

When the work units are small, such as in Figure~\ref{fig.mandel-p0}, there are large
differences in throughput from 4 threads and up. This together with the fact that the
total system power for the different implementations (at the same number of threads) is
even more close together than when using larger work units the differences in energy per
pixel varies considerably. Here the contention level on the collection is higher, above 8
threads where the throughput of the less scalable implementations flatten or decrease it
can be considered high. In this case all but one of the implementations have their energy
per pixel sweetspot at less than or equal to 8 threads (i.e. when using cores in only one
socket). Implementation {\bf a0} (the bag) is the only one that delivers the lowest energy
per pixel when using all cores of the machine. It is worth noting that the bag data type
has a potential to use less synchronization than a queue or stack data type that must
enforce an (illusion of) total order among all their items.

From this case-study some observations can be made about the problem of making an informed
selection of implementation for a multi-variant shared data structure in a certain
application and context:
\begin{itemize}
\item the semantic requirements of the application must be known (naturally) but should
  also not be overstated as that would limit the choices of implementation;
\item the required throughput of data structure operations (and their mix) needs to be
  predicted (bounded) from the parallel work-load to estimate the level of contention
  (which if too high would further bound the achievable throughput of data structure
  operations); and that, consequently,
\item a good prediction of achievable data structure operation throughput for each
  implementation and for a certain state will be needed to do that.
\end{itemize}
} 



\subparagraph{Mandelbrot Prediction}


\newcommand{\mantime}[1]{\ema{\mathit{Time_{#1}}}}
\newcommand{\size}{\ema{\mathit{size}}}
\newcommand{\Mpwd}{\gendu{\pw}{d}{}}
\newcommand{\Mpwe}{\gendu{\pw}{e}{}}
\newcommand{\Msize}{\gendu{\size}{}{i}}
\newcommand{\Mpweb}{\gendu{\pw}{e}{i}}
\newcommand{\thrbin}{\gendu{\thr}{}{i}}
\newcommand{\thrparam}{\gendu{\thr}{}{}(\Mpwd,\Mpweb)}

\newcommand{\powparam}{\powi_{}^{(X)}(\Mpwd,\Mpweb)}
\newcommand{\powbin}{\powi_{i}^{(X)}}

There are two main differences between the Mandelbrot application
and the synthetic benchmark: (i)~the instructions in the parallel
section differ; and (ii)~the size of the parallel section for producers
varies in Mandelbrot.

Firstly, we need to measure the CPU power dissipation for Mandelbrot;
we cannot expect to be able to predict the power dissipation
of any application that uses a queue without having any knowledge about
the power characteristics of the application. 
In contrast, memory power dissipation for the computation intensive Mandelbrot parallel
section is negligible in comparison to queue operations; hence, 
the dynamic memory power that we have measured and extrapolated in 
the synthetic benchmark is unchanged.

Secondly, Mandelbrot
provides a variety of producer parallel works. To
deal with this, 
the pixel region is decomposed row-wise in an interleaved manner among
producer threads. This decomposition leads to long enough execution intervals
in which the parallel sections of the producer threads are similar and constant.
This is
due to the computationally expensive pixels belonging to the Mandelbrot set being concentrated
together in the center of the domain and surrounded by cheaper pixels which diverge
quickly. This characteristic is congruent with our model where the data structure is
used in a steady state manner. Thus, predictions can be made using the
instantiated model over a linear combination of execution intervals.

We measure the latency of the computation intensive producer and consumer parallel works
for each frequency and contention level ($2\times 2$, $4\times 4$, $16\times 16$).
For this process, we make use of CPUID, RDTSC and RDTSCP instructions as specified in~\cite{intel-bench}. 
The distribution of parallel works reveals that there are two main groups for producers, that corresponds
to regions belonging to the Mandelbrot set or not. Concerning $2\times 2$ contention, due to the wide distribution, we gather
 the parallel works into bins of width $10$ pauses; the number of elements in the $i^{\mathrm{th}}$ bin
is then denoted by \Msize and its average amount of work by \Mpweb. We scale the width of bins linearly with
the area of the region for other contention levels. For the consumers, parallel works are similar for
the whole execution.

To make predictions, we assume that all consumer/producer pair $(\Mpwd,\Mpweb)$ is executed in
a steady state during an interval of time.
For each frequency, thread, algorithm and contention of interest,
we obtain the throughput $\thrbin=\thrparam$ and the powers $\powbin=\powparam$ for this interval from the corresponding synthetic benchmark input.
The only part of the model, instantiated with the synthetic benchmark that needs to be replaced by an application
specific entry, is the dynamic CPU power parameter.
Then, we combine intervals to obtain total execution time and average power dissipation.
This accumulation strategy should be applied with care as the synthetic benchmark is based upon
the steady state assumption. An interval which is assumed to take place with a mostly empty queue, could actually not
be in this state due to leftover items from the previous interval. Although our model is
capable of taking this initial state into consideration and provide metrics accordingly, we assume that each
interval is independent. This approximation is reasonable since the consumer parallel
work corresponds to the producer bin with one of smallest values, hence a mostly empty queue.

Note that we have implemented a constant back-off equivalent to the consumer parallel work, after dequeuing a \nulle
item instead of retrying immediately, because of several advantages. It cannot decrease the performance, since
either the queue is growing, and then the back-off never takes place, or the queue is mostly empty, and then 
the producers are the bottleneck of the queue. Conversely, it can increase the performance by diminishing
the queue contention. Those motivations drove the design of the synthetic benchmark, that we can
accordingly reuse here.

For each frequency, thread, algorithm and contention configuration, execution time and power
estimates for Mandelbrot application are obtained with the following equations:

\[ \mantime{total} =   \sum\limits_{i=1}^{BinCount} \Msize \times \frac{\facf}{\thrbin} \]
\[ \pow{X} = \frac{\sum\limits_{i=1}^{BinCount} (\Msize \times \frac{\facf}{\thrbin}) \times \powbin}{ \mantime{total}} \]

In Figure~\ref{fig.mand-exectime}, 
execution time estimates catch the queue algorithm
specific trend for high contention cases, which exhibit a more complicated behavior than
the low contention cases. Also, they reveal the impact of different queue implementations
to overall application performance, which does not appear under low contention. For the
highest contention level with region size $2\times 2$, an increasing trend in execution time is observed after
$8$ threads for many algorithms. The reason is the increasing
latency of atomic synchronization primitives originating from two main sources: (i)~inter-socket
communication, which starts after $8$ threads due to our pinning strategy, and (ii)~the increasing
serialization (expansion) probability for atomic primitives due to increasing number of threads
that interfere in the \rl. The ratio of atomic primitives and the size of queue
operations show variations between algorithms which in turn leads to different behaviors.
For the $4\times 4$ contention case, the difference between algorithms can still be
observed but the parallel sections are large enough to avoid interference in the \rl.
Therefore, execution time decreases with the increasing number of threads.
The difference between algorithms is due to different queue operation sizes which loses its
significance gradually with the decreasing contention level, as observed in low contention cases.

\begin{figure}[h!p]
\begin{center}
\subfloat[Execution time\label{fig.mand-exectime}]{\includegraphics[height=.45\textheight]{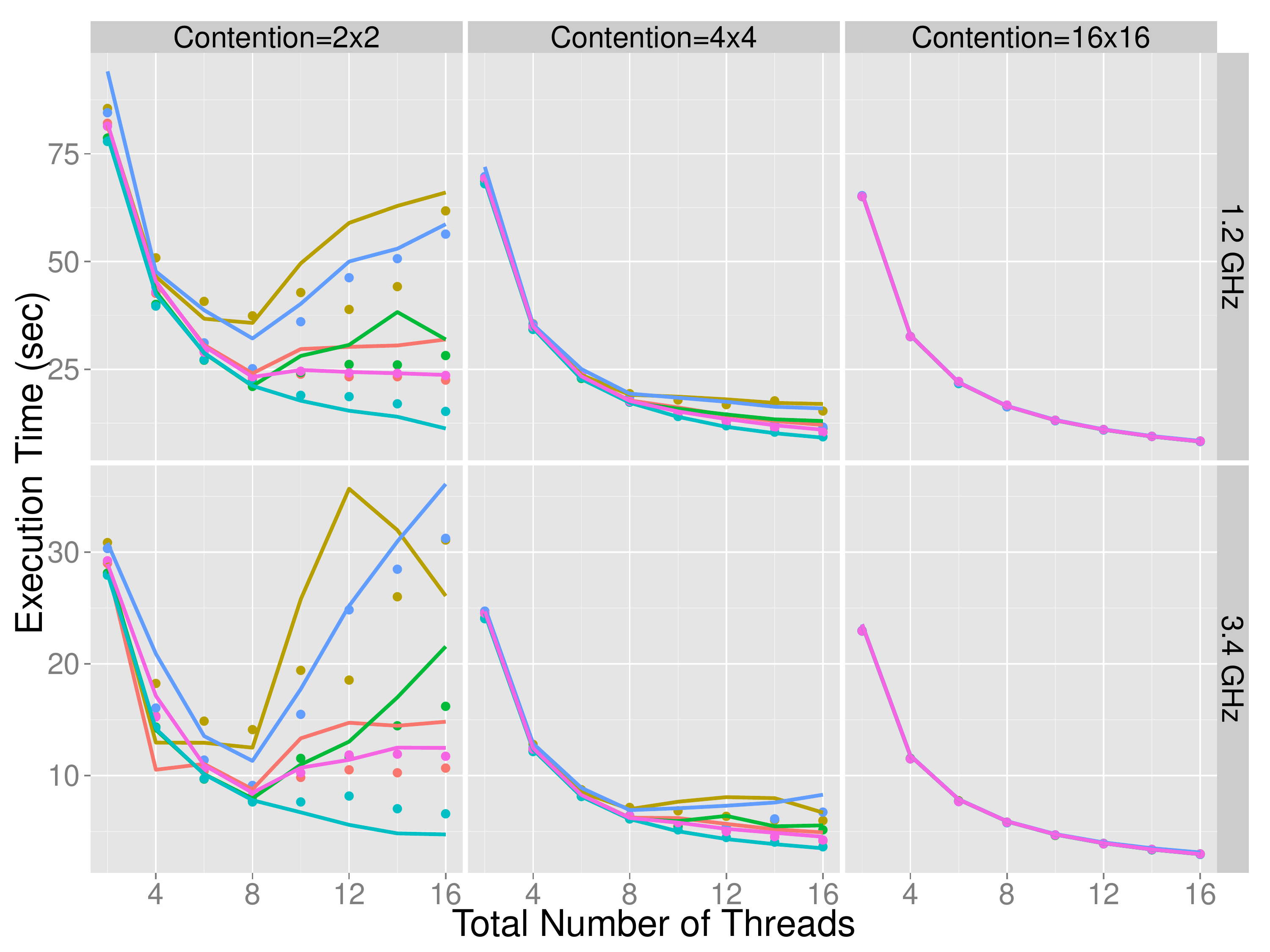}}

\subfloat[CPU power\label{fig.mand-cpupow}]{\includegraphics[height=.45\textheight]{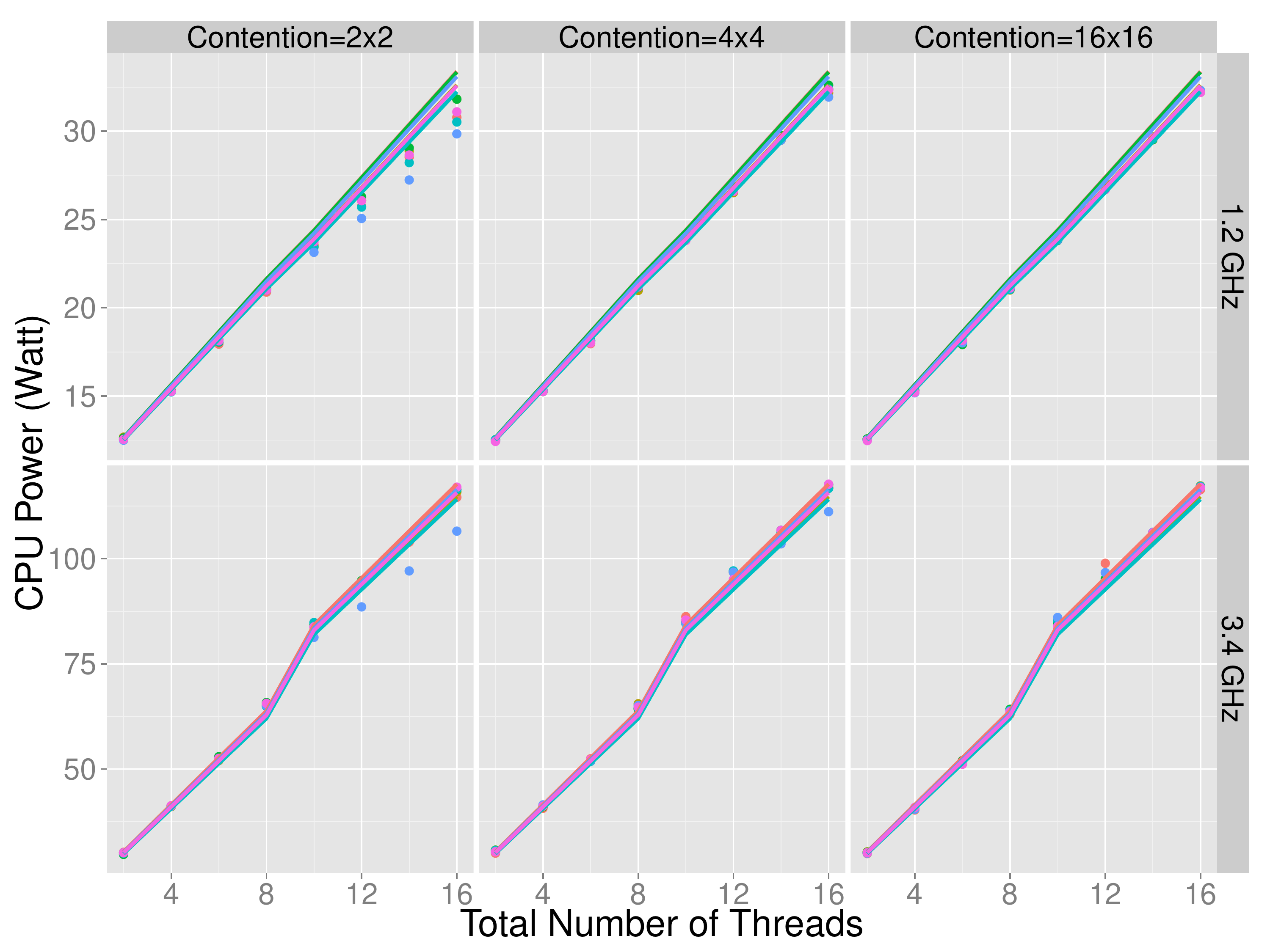}}
\end{center}
\caption{Mandelbrot results (1/2)}
\end{figure}



\begin{figure}[h!p]
\begin{center}
\subfloat[Memory power\label{fig.mand-exectime}]{\includegraphics[height=.45\textheight]{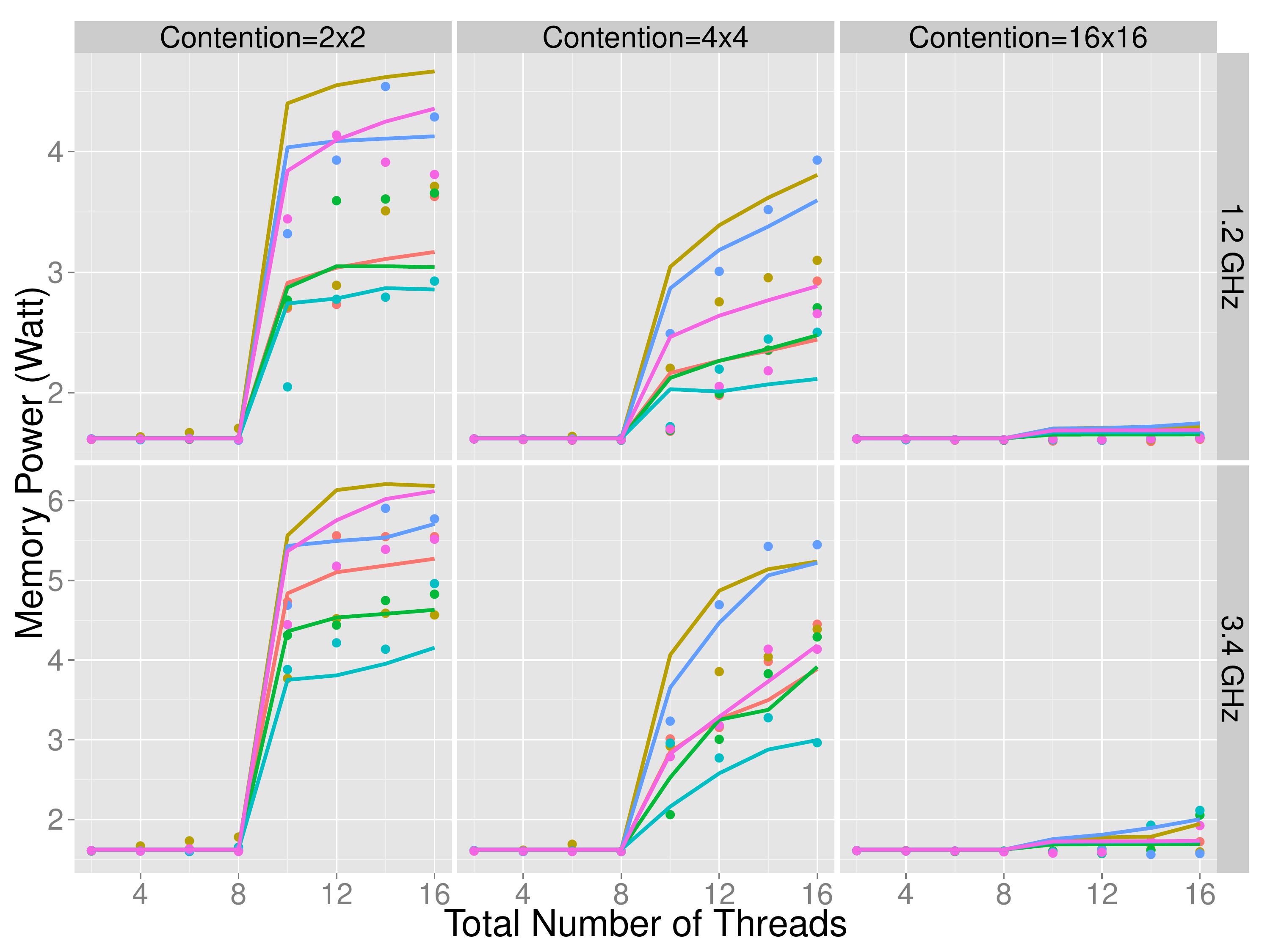}}

\subfloat[Uncore power\label{fig.mand-cpupow}]{\includegraphics[height=.45\textheight]{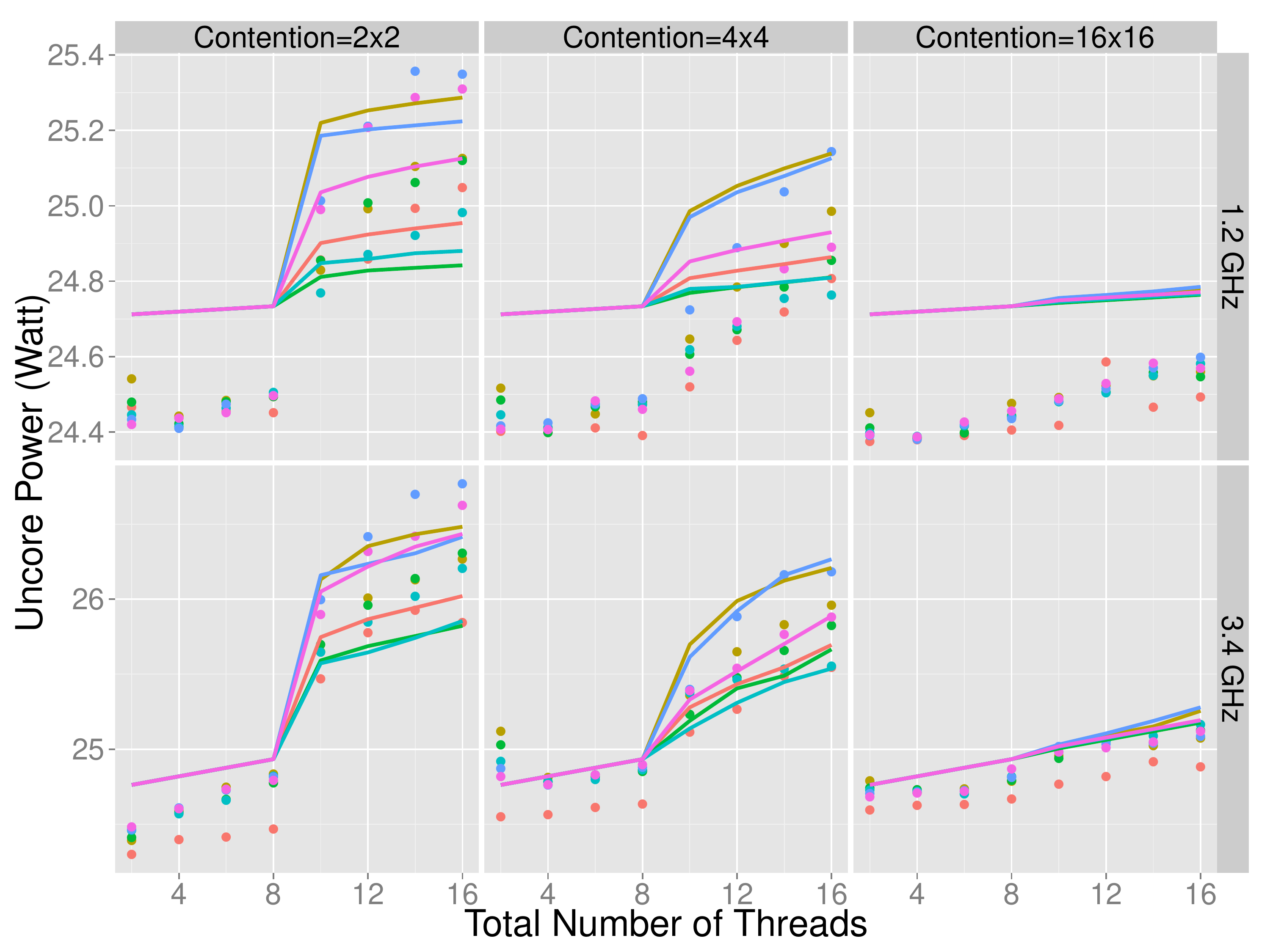}}
\end{center}
\caption{Mandelbrot results (2/2)}
\end{figure}



Power estimates are quite satisfactory except algorithm \walg{5} which is overestimated.
In the power versus time plot which is not presented here, we
observe a step like decrease in power at the end of the execution,
implying that \walg{5} is prone to unfairness among producers. Some producers finish their
regions early and go to sleep which decreases the power dissipation.

As mentioned before, dynamic
memory and uncore power are dominated by the queue implementations so we do not use any application
specific memory/uncore power samples in our estimations, due to compute intensive character of the Mandelbrot
parallel works. Even if this was not the case, memory/uncore power in the parallel sections could have been
extracted. One can get the memory/uncore power measurement from the application and subtract the 
memory/uncore power that we have measured and extrapolated in the synthetic benchmark.
Then, using the ratio of \rls and parallel
sections thanks to our throughput model, the memory/uncore power can be estimated. 

Similar to the synthetic benchmarks, 
Mandelbrot dynamic memory/uncore power becomes noticeable with the inter-socket communication, after $8$ threads, 
and decreases gradually with the decreasing ratio of \rls, with contention level.

\clearpage

\clearpage



\section{Conclusions} \label{sec:Conclusion}
In this work, we have presented our current results on the white-box methodology and the first prototype of libraries and programming abstractions as follows.


\begin{itemize}
\item We have devised a new {\em relaxed} cache oblivious model that are appropriate for developing energy-efficient concurrent data structures and algorithms.
\item We have improved the power model for Myriad1 which is able to predict the power consumed by a program running on a specific number of cores. Given a certain platform and the computation intensity, the model can predict the power consumed by a computational algorithm, then helps to answers the question how many cores are required to run a program to achieve the optimized energy consumption. The model considers both platform and algorithm properties which give more insight on how to design the algorithm to achieve better energy optimization. The model is also validated the model with a set of micro-benchmarks and real applications such as  sparse/dense linear algebra kernels and the graph computation algorithm. 
\item We have continued the work done in D2.1~\cite{EXCESS:D2.1} on the modeling of queue implementations. We have also generalized the model to offer more
freedom according to the workers calling the \ds (parallel section
sizes of enqueuers and dequeuers are decoupled). 
\item We have developed a libraries package of concurrent search trees that contains several state-of-the-art concurrent search trees such as the non-blocking binary search tree, the Software Transactional Memory (STM) based red-black tree, AVL tree, and speculation-friendly tree, the fast concurrent B-tree, the static cache-oblivious binary search tree and a family of novel locality-aware and energy efficient concurrent search trees, namely the DeltaTree, the Balanced DeltaTree, and Heterogeneous DeltaTree. The DeltaTrees are platform-independent and up to 140\% faster and 220\% more energy efficient than the state-of-the-art on commodity HPC and embedded platforms.
\item We have implemented a set of queue implementations (Michael and
  Scott~\cite{lf-queue-michael}, Valois~\cite{Val94}, Tsigas and
  Zhang~\cite{TsiZ01b}, Gidenstam \etal~\cite{Gidenstam10:OPODIS},
  Hoffman \etal~\cite{DBLP:conf/opodis/HoffmanSS07}, Moir
  \etal~\cite{MoirNSS:2005:elim-queue}), and automatized the process
  of estimating the performance and the power consumption and
  integrated it in the EXCESS software.
\end{itemize} 

These results are the starting point for our further research on providing energy-efficient libraries and algorithms.

In the next steps of this work, WP2 will continue the works of Task 2.2. The next tasks (Task 2.3 and Task 2.4) will continue to develop novel concurrent data structures and novel adaptive memory access algorithms that can control data movement (i.e. do not rely on general cache system), exploiting EXCESS platforms and anticipated hardware technology (e.g. exposed energy cost and configurable memory hierarchy). We will develop novel concurrent data structures that can dynamically exploit and combine different data layouts (e.g. van Emde Boas layout, dynamic non-canonical layouts) and different kinds of memory (e.g., memory with different energy-efficiency and performance) to achieve optimal trade-offs between performance and energy consumption. Moreover, novel adaptive memory access algorithms that adapt to power-down mechanisms and dynamic speed scaling controlled by run-time systems will also be developed. The algorithms will have ability to adjust themselves to the monitoring information at run-time. The novel data structures and algorithms will constitute libraries for inter-process communication and data sharing on EXCESS platforms.


\newpage


\bibliographystyle{plain}
\bibliography{../WP6-bibtex/longhead,./D2.2-UIT,../D2.1/D2.1_related_papers,../WP6-bibtex/excess,../WP6-bibtex/related-papers,../WP6-bibtex/peppher-related}

\section*{Glossary}

\begin{flushleft}
\begin{tabular}{lp{12cm}}
\textbf{BRU}    &  Branch Repeat Unit (on SHAVE processor) \\
\textbf{CAS}    &  Compare-and-Swap instruction \\
\textbf{CMX}    &  Connection MatriX on-chip (shared) memory unit, 128KB (Movidius Myriad) \\
\textbf{CMU}    &  Compare-Move Unit (on SHAVE processor) \\
\textbf{Component} & 1. [hardware component] part of a chip's or motherboard's 
  circuitry; \ 2. [software component] encapsulated and annotated reusable
  software entity with contractually specified interface and
  explicit context dependences only, subject to third-party (software) composition.\\
\textbf{Composition}    & 1. [software composition] Binding a call to a 
  specific callee (e.g., implementation variant of a component) and allocating
  resources for its execution; \ 2. [task composition] Defining a macrotask and
  its use of execution resources 
  by internally scheduling its constituent tasks in serial,
  in parallel or a combination thereof. \\
  
\textbf{CPU}    &  Central (general-purpose) Processing Unit\\

\textbf{uncore}    &  including the ring interconnect, shared cache, integrated memory controller, home agent, power control unit, integrated I/O module, config Agent, caching agent and Intel QPI link interface \\ 
\textbf{CTH}    &  Chalmers University of Technology \\
\textbf{DAQ}    &  Data Acquisition Unit \\
\textbf{DCU}    &  Debug Control Unit (on SHAVE processor) \\
\textbf{DDR}    &  Double Data Rate Random Access Memory \\
\textbf{DMA}    &  Direct (remote) Memory Access \\
\textbf{DRAM}   &  Dynamic Random Access Memory \\
\textbf{DSP}    &  Digital Signal Processor \\
\textbf{DVFS}   &  Dynamic Voltage and Frequency Scaling \\
\textbf{ECC}    &  Error-Correcting Coding \\
\textbf{EXCESS} &  Execution Models for Energy-Efficient Computing Systems\\
\textbf{GPU}    &  Graphics Processing Unit\\
\textbf{HPC}    &  High Performance Computing\\
\textbf{IAU}    &  Integer Arithmetic Unit (on SHAVE processor) \\
\textbf{IDC}    &  Instruction Decoding Unit (on SHAVE processor) \\
\textbf{IRF}    &  Integer Register File (on SHAVE processor) \\
\textbf{LEON}    &  SPARCv8 RISC processor in the Myriad1 chip\\
\textbf{LIU}    &  Link\"oping University \\
\textbf{LLC}    &  Last-level cache\\
\textbf{LSU}    &  Load-Store Unit (on SHAVE processor) \\
\textbf{Microbenchmark} & Simple loop or kernel developed to measure one or few properties of the underlying architecture or system software\\
\textbf{PAPI}   &  Performance Application Programming Interface\\
\end{tabular}
\end{flushleft}

\newpage 

\begin{flushleft}
\begin{tabular}{lp{12cm}}
\textbf{PEPPHER} &  Performance Portability and Programmability for Heterogeneous Many-core Architectures. FP7 ICT project, 2010-2012, www.peppher.eu \\
\textbf{PEU}    &  Predicated Execution Unit (on SHAVE processor) \\
\textbf{Pinning} &  [thread pinning] Restricting the operating system's CPU scheduler in order to map a thread to a fixed CPU core \\
\textbf{QPI}    &  Quick Path Interconnect\\
\textbf{RAPL}   &  Running Average Power Limit energy consumption counters (Intel)\\
\textbf{RCL}   &  Remote Core Locking (synchronization algorithm)\\
\textbf{SAU}    &  Scalar Arithmetic Unit (on SHAVE processor) \\
\textbf{SHAVE}  &  Streaming Hybrid Architecture Vector Engine (Movidius) \\
\textbf{SoC}    &  System on Chip \\
\textbf{SRF}    &  Scalar Register File (on SHAVE processor) \\
\textbf{SRAM}   &  Static Random Access Memory \\
\textbf{TAS}    &  Test-and-Set instruction\\
\textbf{TMU}    &  Texture Management Unit (on SHAVE processor) \\
\textbf{USB}    &  Universal Serial Bus \\
\textbf{VAU}    &  Vector Arithmetic Unit (on SHAVE processor) \\
\textbf{Vdram}  &  DRAM Supply Voltage \\
\textbf{Vin}    &  Input voltage level  \\
\textbf{Vio}    &  Input/Output voltage level  \\
\textbf{VLIW}   &  Very Long Instruction Word (processor) \\
\textbf{VLLIW}  &  Variable Length VLIW (processor) \\
\textbf{VRF}    &  Vector Register File (on SHAVE processor) \\
\textbf{Wattsup}&  Watts Up .NET power meter \\
\textbf{WP1}   &  Work Package 1 (here: of EXCESS) \\
\textbf{WP2}   &  Work Package 2 (here: of EXCESS) \\
\end{tabular}
\end{flushleft}

\end{document}